\title{A general stochastic model for sporophytic self-incompatibility}
\author{Sylvain Billiard\thanks{Laboratoire de G\'{e}n\'{e}tique et Evolution des Populations V\'{e}g\'{e}tales, UFR de Biologie, USTL, Cit\'{e} Scientifique, 59655 Villeneuve d'Ascq Cedex, France. sylvain.billiard@univ-lille1.fr} ~ and ~  Viet Chi Tran\thanks{Laboratoire P. Painlev\'{e}, UFR de Maths, USTL, Cit\'{e} Scientifique, 59655 Villeneuve d'Ascq Cedex, France ; Centre de Math\'{e}matiques Appliqu\'{e}es, Ecole Polytechnique, Route de Saclay, 91128 Palaiseau Cedex, France. chi.tran@math.univ-lille1.fr} ~ \thanks{Both authors have equally contributed to this paper.}}
\date{\today}
\numberwithin{equation}{section}
\newcommand{\Co}{\mathcal{C}}
\newcommand{\lbrac}{\left[\!\left[}
\newcommand{\rbrac}{\right]\!\right]}
\def\N{\mathbb{N}}
\def\wN{\widetilde{N}}
\def\P{\mathbb{P}}
\def\R{\mathbb{R}}
\def\E{\mathbb{E}}
\def\ind{{\mathchoice {\rm 1\mskip-4mu l} {\rm 1\mskip-4mu l}
{\rm 1\mskip-4.5mu l} {\rm 1\mskip-5mu l}}}
\def\eg{\textit{e.g.} }
\def\ie{\textit{i.e.} }
\def\Card{\mbox{Card}}
\newcommand{\be} {\begin{equation}}
\newcommand{\ee} {\end{equation}}
\newcommand{\bea} {\begin{eqnarray}}
\newcommand{\eea} {\end{eqnarray}}
\newcommand{\Bea} {\begin{eqnarray*}}
\newcommand{\Eea} {\end{eqnarray*}}
\theoremstyle{plain}
\newtheorem{theorem}{Theorem}[section]
\newtheorem{proposition}[theorem]{Proposition}
\newtheorem{corollary}[theorem]{Corollary}
\newtheorem{definition}[theorem]{Definition}
\newtheorem{remark}[theorem]{Remark}
\newtheorem{example}[theorem]{Example}
\begin{document}

\maketitle

\begin{abstract}
Disentangling the processes leading populations to extinction is a major topic in ecology and conservation biology. The difficulty to find a mate in many species is one of these processes. Here, we investigate the impact of self-incompatibility in flowering plants, where several inter-compatible classes of individuals exist but individuals of the same class cannot mate. We model pollen limitation through different relationships between mate availability and fertilization success. After deriving a general stochastic model, we focus on the simple case of distylous plant species where only two classes of individuals exist. We first study the dynamics of such a species in a large population limit and then, we look for an approximation of the extinction probability in small populations. This leads us to consider inhomogeneous random walks on the positive quadrant. We compare the dynamics of distylous species to self-fertile species with and without inbreeding depression, to obtain the conditions under which self-incompatible species could be less sensitive to extinction while they can suffer more pollen limitation.
\end{abstract}

\noindent \textbf{Keywords:} birth and death process, ODE approximations, inhomogeneous random walk on the positive quadrant, inbreeding depression, extinction probability, mating system, distyly.\\
\noindent \textbf{AMS Codes:} 92D40, 92D25, 60J80.\\

\section{Introduction}

We are interested in modeling the specific sexual mating system of a plant population, and especially in highlighting several phenomena which can affect its dynamics:
stochasticity in the demography, pollen limitation, boundary effects.  First, the fate of small populations depends on stochastic processes such as demographic stochasticity \citep{lande98}, which refers to a variance due to randomness in the death and reproduction events. Second, when populations are small or at low density, there can happen Allee effects \citep{allee49}, \ie when a positive relationship between the size of the population and the per capita growth rate appears. In sexual species, the birth rate may depend on the availability in mating partner, what is called the ``mate finding Allee effect'' \citep{gascoigneberecgregorycourchamp}. For example, in species with separate sexes, the mean reproductive rate of a population may be low because the opposite sex is rare \citep{engenlandesaether, saetheretal04,bessaetal04}. These biases can be expected to be higher in small populations because of stochasticity. Finally, at the extreme, the mate
finding Allee effect can result in boundary conditions where compatible
mates disappear, leading inevitably the population to extinction.\\
Many hermaphroditic species of Angiosperms (flowering plants) have mating systems which allow fertilization only among specific classes of individuals: between long-styled and short-styled
plants in distylous species \citep{barrettshore}, or among different mating types in self-incompatible species (more than 50 percent of Ansgiosperm families, \citet{igicetal08}). Self-incompatibility (SI) avoids selfing and mating
between close individuals: SI prevents
reproduction between mates sharing identical S-locus alleles (the allele which determines whether individuals are compatible, by coding for some proteins carried as identifiers by the pollen and ovules), and especially self-fertilization. It was hypothesized that this mating system evolved to avoid inbreeding depression,
\ie the decrease in fitness when mating occurs between kin individuals \citep{porcherlande_jeb, porcherlande_evolution}.
In this system, individuals carrying rare S-locus alleles have access to more possible mates than individuals carrying common S-alleles, which generates negative frequency-dependent selection on the S-locus \citep{wright39}.\\
In Angiosperms, mate finding is moreover mainly passive since it depends on pollen vectors such as insects or wind. This can result in ``pollen limitation'', when a given plant does not receive enough pollen to fertilize all
the ovules it produces. Pollen limitation has been found in many species and
can have many evolutionary and ecological consequences, from the evolution of
mating system to the increase of extinction probabilities in small populations
\citep{ashmanetco}. If a class of individuals goes extinct, other classes have less mating opportunities leading to a lower mean reproduction rate, or even can
not mate anymore, leading to extinction.
Pollen limitation and mating systems are interacting phenomena which can have a strong impact on the fate of populations (see \citet{leducqetal10} and references therein). Shortly, pollen limitation can be increased in strict allogamous species
and the probability of disappearance of a class of individuals can be increased by pollen
limitation. It has been shown indeed that pollen limitation is higher in outcrossing species than in selfing species \citep{larsonbarrett2000}.\\
However, few theoretical investigations have
been done so far to measure the impact of
pollen limitation, mating systems and their interactions on the
extinction and establishment of populations. It is yet a central
question since many Angiosperm families are strictly allogamous because
of SI.  Three demographic models have been investigated
specifically for a given species
\citep{kirchnerrobertcolas,wageniuslonsdorfneuhauser,hoebeethrallyoung}, which thus lack generality. \citet{levinkelleysarkar} considered as us a general model for which they performed individual
based simulations. Moreover, in \citet{kirchnerrobertcolas,hoebeethrallyoung,levinkelleysarkar}, the supposed SI systems were either
gametophytic (only the content of the gametes is expressed) or a
caricatural sporophytic system with only codominance between S-alleles.
In species with sporophytic SI system (SSI), the mating phenotypes of
pollen and pistils are determined by the diploid parental genotypes at
the S-locus and hence dominance interactions are possible among
S-alleles \citep{bateman52}, which can be very complex \citep{castricvekemans04}. The results from the previous
studies can therefore not be generalized to species with SSI, and
especially not to distylous species, which is a particular case of SSI
with only two alleles and consequently only two classes of individuals.
Finally, the different processes affecting the extinction probabilities of
SI populations (pollen limitation, demographic stochasticity and
boundary effect) can not be disentangled in these previous
investigations, which is our purpose here.\\

Our goals are, first to develop a general model to describe the dynamics of a plant population with SSI, with and without pollen limitation and second to use this model to investigate the relative impact of pollen limitation and demographic stochasticity on the fate of populations in the particular case of a distylous species.
We begin with a general stochastic individual based model, in continuous time, for SSI (Section \ref{sectiondescrmicro}). We explicitly model the genetic determinism of the SI phenotype and compute the reproduction rate of each possible genotype. We assume different relationships between the compatible mate availability and the reproductive success, which reflect different models of pollen limitation. Second, we focus on the simple case of distylous species. We analyze the dynamics of distylous species assuming a large population and using approximations by ordinary differential equations (ODEs). We exhibit different behaviors depending on the relationships between the birth and death parameters. We also compare these behaviors with self-fertile species, with or without inbreeding depression and pollen limitation (Section \ref{section:largepop}). Third, we consider extinction in small distylous populations (Section \ref{section:smallpop}). This leads us to study inhomogeneous random walks on the positive quadrant. We use coupling arguments to show that the behaviors of the random walks, namely whether they are subcritical, critical or supercritical, depend on the same criteria as for large populations. This also provides estimates for the extinction probabilities. Finally, individual based simulations are performed (Section \ref{section:simulations}).

\section{Microscopic modelling}\label{sectiondescrmicro}

We first describe the individual dynamics. Then, we precise the different models of reproduction rates. Following \citet{champagnatferrieremeleard}, we propose Stochastic Differential Equations
(SDEs) that mathematically describe the random evolution in time of the population and its large population approximation
using ODEs.

\subsection{Description of the dynamics at the individual level}\label{inddynadescr}

\paragraph{Self-incompatibility genotype} We assume that SSI is controlled by a gene at a single locus, called the S-locus, where $n$ alleles segregate. These alleles are numbered from 1 to $n$. We assume that individuals
are diploid and hermaphroditic. Each individual genotype is of the form $G=\{S^1,S^2\}$, where $S^1$ and $S^2$ are two alleles in $\lbrac 1, n\rbrac=\{1,\dots,n\}$. Since the order of the alleles is not important,
for $u,v\in \lbrac 1, n\rbrac$, $\{u,v\}$ and $\{v,u\}$ are the same genotype.\\
We denote by $E=\{g=\{u,v\},\, u,v\in \lbrac 1, n\rbrac\}$ the space of the different possible genotypes. The set $E$ is finite, with $\Card(E)=n(n+1)/2$.

\paragraph{Individual based model} The population at $t$ is given by a point measure on $E$
\begin{equation}
Z_t(dg)=\sum_{i=1}^{N_t} \delta_{\{S^1(i),S^2(i)\}}(dg)=\sum_{g'=\{u',v'\}\in E} N_t^{u'v'}\delta_{\{u',v'\}}(dg)\label{defzt}
\end{equation}
where $N_t$ is the number of individuals alive at time $t$, $N_t^{u'v'}$ is the number of individuals with genotype $\{u',v'\}\in E$. Each individual $i$ is represented by a Dirac mass weighting its genotype $\{S^1(i),S^2(i)\}$.
Point measures that are considered have finite mass. We denote by $\mathcal{M}_F(E)$ the set of finite measures on $E$. Since $E$ is a finite space, the weak and vague topologies, and the topology associated with the total variation norm on $\mathcal{M}_F(E)$ all coincide.\\

\par Integrating the measure (\ref{defzt}) with respect to chosen real functions $f$ on $E$ provides summaries of the population. We denote $\langle Z_t,f\rangle=\int_{E}f(g)Z_t(dg)=\sum_{i=1}^{N_t}f(\{S^1(i),S^2(i)\})$. \\
If we choose for instance $f=1$ then, $\langle Z_t,1\rangle=N_t$. If we choose $f=\ind_{\{u',v'\}}$ for $\{u',v'\}\in E$, then $\langle Z_t,\ind_{\{u',v'\}}\rangle=N^{u'v'}_t$.

\paragraph{Self-incompatibility phenotypes} SSI phenotype is the production of recognition proteins at the surface of pollen and stigmas (stigmas contain ovules and are the plant's structure on which pollen is deposited).
These proteins can be of $n$ different types (or specificities) depending on the plant's genotype and on the dominance relationships between the $n$ alleles.
For $u\in \lbrac 1, n\rbrac$, let us denote by $e_u\in \R^n$ the vector with all components equal to zero except the component $u$ which is equal to 1.
For an individual of genotype $\{u,v\}$, the stigmas and pollen it produces are said to be of type $\{u,v\}$, and have respectively the phenotypes $\Phi_S (e_u+e_v)$ and $\Phi_P (e_u+e_v)$ in
$\{0,e_u,e_v,e_u+e_v\}$.
The maps $\Phi_P$ and $\Phi_S$ encode the expression of the genotype into the phenotype. For instance, if $ \Phi_P (e_u+e_v)=e_u$, then $u$ is strictly dominant over $v$ in pollen (proteins produced at the surface of pollen are of the single type $u$).
If $\Phi_P (e_u+e_v)=e_u+e_v$,
then $u$ and $v$ are codominant in pollen (proteins produced at the surface of pollen are of both types $u$ and $v$). If $\Phi_P(e_u+e_v)=0$ then the pollen can fertilize any possible ovule in the population.
A cross is incompatible if pollen and stigmas share proteins of the same type. In other words, a stigma $\{u,v\}$ and a pollen $\{u',v'\}$
can cross if and only if
\begin{equation}\Phi_S(e_u+e_v)\cdot \Phi_P(e_{u'}+e_{v'}) =0,\label{condition_compatibilite}
 \end{equation}where $x\cdot y$ is the scalar product of $x,y\in \R^n$. Notice that an individual can self-fertilize if $\Phi_S(e_u+e_v)\cdot \Phi_P(e_{u}+e_{v}) =0$. Let:
\begin{equation}\Theta^{uv}_S=\{ \{u',v'\}\in E,\quad 
\Phi_S(e_u+e_v)\cdot \Phi_P(e_{u'}+e_{v'}) =0\}\end{equation}denote the set of genotypes compatible with stigmas $\{u,v\}$. The size of the population producing pollen compatible with stigmas $\{u,v\}$ is hence:
\begin{equation}
\overline{N}_t^{uv}=\sum_{\{u',v'\}\in \Theta^{uv}_S} N_t^{u'v'}=\int_{E}\ind_{\Phi_S(e_u+e_v)\cdot \Phi_P (e_{u'}+e_{v'}) =0}Z_t(d\{u',v'\}).\label{defNbar(uv)}
\end{equation}
Notice that the set $\Theta^{uv}_P$ of genotypes compatible with pollen $\{u,v\}$ is not necessarily $\Theta^{uv}_S$ since we may have $\Phi_S(e_u+e_v)\cdot \Phi_P(e_{u'}+e_{v'})\not= \Phi_P(e_u+e_v)\cdot \Phi_S(e_{u'}+e_{v'})$.

\paragraph{Mating probabilities} Each plant receives pollen from the rest of the population. We assume that the quantity of pollen of type $\{u,v\}\in E$ received by a plant is equal to the proportion of plants of genotype $\{u,v\}$ in the population. The probability that the ovule of an individual with genotype $\{u',v'\}$ is fertilized by a pollen from an individual with genotype $\{u,v\}$ is denoted by $p^{u'v'}(u,v)$. This probability takes into account the quantity of pollen received by the pistil and the compatibility conditions (\ref{condition_compatibilite}). This is detailed in Section \ref{sectionreproductionrate}.

\paragraph{Segregation} When the genotypes of the plants having produced the ovule and pollen are respectively $\{u,v\}$ and $\{u',v'\}$, then the new individual is of genotype $\{u,u'\}$, $\{u,v'\}$, $\{v,u'\}$ or $\{v,v'\}$ with probability 1/4.

\paragraph{Reproduction rates} Plants $\{u,v\}$ produce ovules in continuous time with an individual rate $\bar{r}>0$. Once produced, the ovules may be fertilized or not depending on the quantity and compatibility of the pollen arriving on the stigmas, thus providing seeds. We define as $R(\overline{N}_t^{uv},N_t)$ the individual reproduction rate of an individual of genotype $\{u,v\}$, which is the individual rate of production of seeds. This rate is bounded by $\bar{r}$ and can depend
on the size of the compatible population or be a constant: it is the product of $\bar{r}$ with the fertilization probability
$$R(\overline{N}_t^{uv},N_t)=\bar{r}\sum_{\{u',v'\}\in E}p^{uv}_t(u',v').$$
The functional forms of $R(.)$ are further discussed in Section \ref{sectionreproductionrate}.

\paragraph{Death} Each plant dies with the constant rate $d>0$. This death rate is kept simple and the complexity of the model is put on the reproduction system.

\subsection{Functional forms of mating probabilities and reproduction rates}\label{sectionreproductionrate}

We describe the five types of reproduction considered in this paper. Three of them describe SI systems: Wright's model \citep{wright39}, the fecundity selection model \citep{vekemansetal98} and the dependence model, which we introduce. In these models, ovules are produced at a constant rate, and given reproduction, the partner is chosen uniformly among compatible individuals. The differences are based on the choice of functional forms for mating probabilities.
The resulting reproduction rates for these three models are represented in Fig. \ref{fig:billiardtranSSI}. Two other models are also introduced for comparison: the self-compatible cases with and without inbreeding depression.\\
Boundary conditions, \textit{i.e.} discontinuities of the reproduction rate that occur when a genotype in the population has no more mate, are observed in the Wright's and dependence models. For the fecundity selection model, even if there is SI, the reproduction rate is proportional to the size of the compatible population and vanishes continuously at the boundary. In self-compatible cases, there is no boundary effect because an individual can mate with any other individuals, even itself.\\
In Wright's model, there is no pollen limitation. In the fecundity selection and the dependence models, pollen limitation is introduced through the mating probabilities.

\subsubsection{Model 1: Wright's model}\label{section:wright}

First, following \citet{wright39}, each individual with genotype $\{u,v\}\in E$ produces ovules of type
$\{u,v\}$, one at a time, at the constant $\bar{r}$. An ovule of type $\{u,v\}$ is fertilized by a pollen produced by an individual of genotype $\{u',v'\}$ with a probability that depends on the frequencies in the compatible population:
\begin{equation}
\overline{p}_t^{uv}(u',v')=\frac{N_t^{u'v'}}{\overline{N}_t^{uv}} \ind_{\{u',v'\}\in \Theta^{uv}_S}.\label{defpbar}
\end{equation}
Notice that because of \eqref{defNbar(uv)}, $\overline{p}_t^{uv}(u',v')=0$ when $\overline{N}_t^{uv}=0$. The fertilization probability is 1 if there is at least one compatible individual in the population, $0$ otherwise. The individual reproduction rate is
$R(\overline{N}^{uv}_t,N_t)=\bar{r}\ind_{\overline{N}_t^{uv}>0}.$ This rate does not depend on the quantity of compatible pollen a plant receives.

\subsubsection{Model 2: Dependence model}\label{section:dependence}

Pollen limitation and the self-incompatibility may be modeled by considering mating probabilities of the form:
 \begin{align}
p_t^{uv}(u',v')= \overline{p}_t^{uv}(u',v') \frac{r\big(\overline{N}^{uv}_t\big)}{\bar{r}}\label{def:ptilde}
 \end{align}where $\overline{p}_t^{uv}(u',v')$ is defined in \eqref{defpbar} and where, for positive constants $\alpha$ and $\beta$:
\begin{equation}r(N)=\bar{r}\frac{e^{\alpha N}}{\beta+e^{\alpha N}}.\label{functionrN}\end{equation}In \eqref{def:ptilde}, the probability of finding a mate depends on the number of compatible individuals, and given reproduction, the partner is chosen uniformly at random among the latter. Thus, the fertilization probability is $\ind_{\overline{N}^{uv}_t>0}r(\overline{N}^{uv}_t)/\bar{r}$ and the individual reproduction rate is $R(\overline{N}^{uv}_t,N_t)=r(\overline{N}^{uv}_t)\ind_{\overline{N}^{uv}_t>0}$.
 This model will be termed ``dependence model'' in the sequel. It is a generalization of Wright's model with a fertilization probability that depends on $\bar{N}^{uv}_t$ when reproduction is allowed. This is pollen limitation: among the $\overline{r}$ ovules produced in a time unit, only $R(\overline{N}_t^{uv},N_t)$ end up in producing a new individual. For a large compatible population, the rate tends to its supremum value:
$$\lim_{N\rightarrow +\infty}\bar{r}\frac{e^{\alpha N}}{\beta+e^{\alpha N}} \ind_{N>0}=\bar{r}.$$Wright's model can be viewed as the limiting case of (\ref{functionrN}) when $\alpha\rightarrow +\infty$, that is when there is no pollen limitation.\\

Notice also that this model exhibits boundary effects since
$$\lim_{N\rightarrow 0}\bar{r}\frac{e^{\alpha N}}{\beta+e^{\alpha N}} \ind_{N>0}= \frac{\bar{r}}{1+\beta}>0.$$Because we assume that a single plant produces a lot of pollen, even in cases where we consider pollen limitation,
there is a discontinuity at the set $\{\overline{N}^{uv}=0\}$ when $r(0)>0$. As soon as there exists a small number of compatible plants, all individuals start producing offspring at a positive rate.

\subsubsection{Model 3: Fecundity selection model}\label{section:fecundity}

In the fecundity selection model \citep{vekemansetal98}, the mating probability between an ovule produced by an individual of genotype $\{u,v\}$ and a pollen produced by $\{u',v'\}$ is
\begin{equation}
p^{uv}_t(u',v')=\overline{p}^{uv}_t(u',v')\frac{\overline{N}^{uv}_t}{N_t}=\frac{N^{u' v'}_t}{N_t}\ind_{\{u',v'\}\in \Theta^{uv}_S}\label{fecundityselection}.
\end{equation}If the ovule chooses an incompatible pollen, then it is not fertilized and lost. Under the fecundity selection model, the fertilization probability of an ovule $\{u,v\}$ is $\overline{N}^{uv}_t/N_t$ and the reproduction rate of an individual $\{u,v\}$ is $R(\overline{N}^{uv}_t,N_t)=\bar{r}\overline{N}_t^{uv}/N_t$.
The individual reproduction rate is directly proportional to the frequency of compatible individuals, which thus promotes pollen limitation.

\begin{figure}
\vspace{0.5cm}
\begin{center}
    \unitlength=1.3cm
    \begin{picture}(7,5)
    \put(1,1){\vector(1,0){6}}
    \put(1,1){\vector(0,1){4}}
    \put(0.6,5.3){$R(\overline{N}^{uv}_t,N_t)$}
    \put(7.1,0.9){$\overline{N}^{uv}_t$}
    \put(0.8,1.2){$0$}
    \put(1.2,0.8){$0$}
    \put(0.8,4.65){$\bar{r}$}
    \put(1.4,4.8){Wright's model}
    \put(1.35,3){Dependence model}
    \put(1.3,4.7){\line(1,0){5.3}}
    \put(1.3,1.3){\line(0,1){3.4}}
    \put(1,1.3){\line(1,0){0.3}}
    \put(3,2){Fecundity selection model}
    \dashline{0.2}(1.3,1.3)(6.6,4.4)
    \dashline{0.05}(1,4.4)(6.6,4.4)
    \dashline{0.05}(6.6,1)(6.6,4.4)
    \put(-0.13,4.4){$\bar{r}(1-\frac{1}{N_t})$}
    \put(6.5,0.65){$N_t-1$}
    \thicklines
    \dashline{0.1}(1.3,1.3)(1.3,2.8)
    \dashline{0.1}(1.3,2.8)(2,3.4)(3,4)(4,4.3)(5,4.45)(6,4.6)(6.6,4.65)
    \put(0.55,2.75){$\frac{\bar{r}}{1+\beta}$}
    \end{picture}
\end{center}
\vspace{-1cm}
\caption{\textit{Relationships between the individual reproductive rates of genotype $\{u,v\}$ and the number of compatible individuals $\overline N^{u,v} $ for the three models of mating we assumed: Thin line: The Wright's model; Dashed line: The fecundity selection model; Thick dashed line: The dependence model.}}\label{fig:billiardtranSSI}
\end{figure}
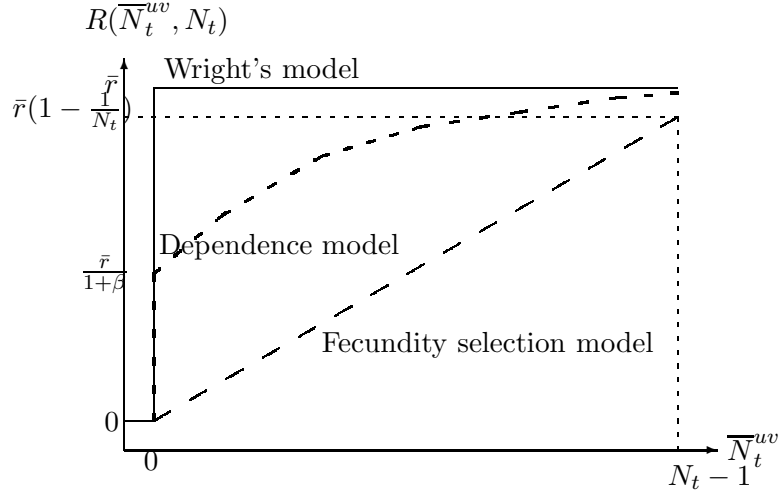


\subsubsection{Rates of reproduction and gamete production}

In the models of Sections \ref{section:wright}, \ref{section:dependence} and \ref{section:fecundity} 
the rate at which pollen produced by individuals of genotype $\{u,v\}$ fertilizes ovules in the population $Z_t$ is:
\begin{align}
\sum_{\{u', v'\}\in E}\overline{r}N^{u'v'}_t p^{u'v'}(u,v)\label{generalratepollen(uv)}
\end{align}where we recall that $p^{u'v'}(u,v)$ is the mating probability of an ovule produced by an individual $\{u',v'\}$ by a pollen produced by an individual $\{u,v\}$. Under Wright's model $p^{u'v'}(u,v)=\overline{p}^{u'v'}(u,v)$.\\
Hence, offspring with genotype $\{u,v\}$ are produced with the rate:
\begin{align}
r^{uv}(Z_t) = \overline{r} \ind_{u\not=v}\Big[ & \frac{1}{4}\Big(  \sum_{u'\not= u}\sum_{v'\not=v} N_t^{uu'}p^{uu'}(v,v')+N_t^{vv'}p^{vv'}(u,u')\Big)\nonumber\\
+ & \frac{1}{2}\Big(\sum_{v'\not=v}N_t^{uu}p^{uu}(v,v')+\sum_{u'\not=u}N_t^{uu'}p^{uu'}(v,v)\Big)\nonumber\\
+ & \frac{1}{2}\Big(\sum_{u'\not=u}N_t^{vv}p^{vv}(u,u')+\sum_{v'\not=v}N_t^{vv'}p^{vv'}(u,u)\Big)\nonumber\\
+ & \quad \Big(N_t^{uu}p^{uu}(v,v)+N_t^{vv}p^{vv}(u,u)\Big)
\Big]\nonumber\\
+\overline{r}\ind_{u=v} \Big[  & \frac{1}{4}  \sum_{u',v'\in \lbrac 1, n\rbrac \setminus \{u\}}N_t^{uu'}p^{uu'}(u,v')\nonumber\\
+ &   \frac{1}{2}\Big(  \sum_{u'\not= u}N_t^{uu'}p^{uu'}(u,u)
 +  \sum_{v'\not=u}N_t^{uu}p^{uu}(u,v')\Big)\nonumber\\
 + & N_t^{uu}p^{uu}(u,u)\Big]. \label{tauxreprod(uv)}
\end{align}The formula \eqref{tauxreprod(uv)} does not simplify in general because of the possible asymmetry of dominance relationships between alleles in pollen and stigmas. We distinguish whether the allele $u$ comes from the pollen or the pistil, and similarly for $v$. We separated the terms according to the parents' homozygosity or heterozygosity.\\
The rate (\ref{tauxreprod(uv)}) is related to the number of seeds of each genotype $\{u,v\}$ that is produced at time $t$. There is a \textit{mass dependence} with respect to the parent who brings the ovules and a \textit{frequency dependence} with respect to the one who brings the pollen. The case of infinite populations with the Wright's and the fecundity selection model is considered in \citet{billiardcastricvekemans}.


We end this section, with two alternative models without SI, for future comparisons with the three models introduced in Sections \ref{section:wright}-\ref{section:fecundity}. Model 4 describes a self-compatible case without any penalization of self-fertilization, whereas Model 5 introduces inbreeding depression, the phenomenon which describes the decrease in fitness when mating occurs between kin individuals \citep{porcherlande_jeb, porcherlande_evolution}.

\subsubsection{Model 4: Self-compatibility without pollen limitation nor inbreeding depression}\label{SCsectionreproductionrate}

To carry out comparisons, it is natural to investigate the case where there is no SI, no pollen limitation and no inbreeding depression. Then, when the reproduction $\bar{r}$ is constant, the rate of production of offspring with genotype $\{u,v\}$ becomes:
\begin{align}
\widetilde{r}^{uv}(Z_t)= & \frac{2\bar{r}}{N_t}\Big(\frac{1}{2}\sum_{u'\not= u} N_t^{uu'}+N^{uu}_t\Big)
\Big(\frac{1}{2}\sum_{v'\not= v} N_t^{vv'}+N^{vv}_t\Big)\quad \mbox{ if }u\not= v\nonumber\\
\widetilde{r}^{uu}(Z_t)= & \frac{\bar{r}}{N_t}\Big(\frac{1}{2}\sum_{u'\not= u} N_t^{uu'}+N^{uu}_t\Big)^2\quad \mbox{ if }u= v.
\end{align}In this rate, we recognize the number of couples that can be constituted with one parent of allele $u$ and one parent of allele $v$,
$\sum_{u'=1}^n N_t^{uu'}\sum_{v'=1}^n N_t^{vv'}$, divided by the total size of the population because of the frequency dependence, and with a correction for homozygous individuals. Notice that the genotypic frequency appears naturally since here, $p^{uv}_t(u',v')=N^{u',v'}_t/N_t$ is the proportion of individual $\{u',v'\}$ in the population.

\subsubsection{Model 5: Self-compatibility with pollen limitation and inbreeding depression}

A detailed study of inbreeding depression would be interesting and deserves a paper for itself. Our purpose here is to provide a schematic point of comparison for SSIs. 
For the sake of simplicity, we assume here that inbreeding depression is suffered by offspring produced by self-fertilization only.\\
Among the ovules produced by the plant, a fraction $s\in [0,1]$ are self-fertilized, \ie are fertilized by the pollen of the plant that has produced them. Among these, a fraction $\delta\in [0,1]$ is not viable.
The rate of production of new individuals through other matings is of the form (\ref{functionrN}) except that the maximal rate is now $(1-s)\bar{r}$ which corresponds to the ovules that have not been self-fertilized. We hence consider the following rate in the case of pollen limitation:
\begin{equation}
R(N_t)=(1-\delta)s\overline{r}+(1-s)\overline{r}\frac{e^{\alpha (N_t-1)}}{\beta+e^{\alpha (N_t-1)}}\ind_{N_t>1}. \label{forme_r(Ncomp)_inbreedingdepr}
\end{equation}In the sequel, this model will be termed by Self-Compatibility with Inbreeding Depression model (SCID). In the presence of self-fertilization, the effect of pollen limitation is lower: the reproduction rate of individuals is less dependent on the quantity of pollen received from the other individuals.


With the rate \eqref{forme_r(Ncomp)_inbreedingdepr}, the rate of production of offspring with genotype $\{u,v\}$, at time $t$ is:
\begin{align}
\widetilde{r}^{uv}(Z_t)= & 2 (1-s)\frac{\bar{r}}{N_t}\frac{e^{\alpha (N_t-1)}}{\beta+e^{\alpha (N_t-1)}}\Big(\frac{1}{2}\sum_{u'\not= u} N_t^{uu'}+N^{uu}_t\Big)
\Big(\frac{1}{2}\sum_{v'\not= v} N_t^{vv'}+N^{vv}_t\Big)\nonumber\\
+ & \frac{(1-\delta)s\bar{r}}{2} N^{uv}_t \quad \mbox{ if }u\not= v\label{tauxreprod(inbreeding)}\\
\widetilde{r}^{uu}(Z_t)= & (1-s)\frac{\bar{r}}{N_t}\frac{e^{\alpha (N_t-1)}}{\beta+e^{\alpha (N_t-1)}}\Big(\frac{1}{2}\sum_{u'\not= u} N_t^{uu'}+N^{uu}_t\Big)^2 +  (1-\delta)s\bar{r} N^{uu}_t ~ \mbox{ if }u= v.\nonumber
\end{align}

\subsection{SDE description and their ODE approximations}\label{sectionODE}

Following \citet{fourniermeleard}, we can express the evolution of the population $(Z_t)_{t\in \R_+}$ by mean of a SDE driven by a Poisson point measure. This equation is given in Appendix \ref{section:SDEdetaillees} and corresponds to the mathematical formulation of the individual-based algorithm of Section \ref{algosimu}. From this, we derive the evolution of $N_t^{uv}$:
\begin{align}
N_t^{uv} = & N_0^{uv} +\int_0^t \big(r^{uv}(Z_s) -d \times N^{uv}_s\big) ds+M^{uv}_t\label{appleffectif}
\end{align}
where $M^{uv}$ is a square integrable martingale starting at 0 that can be written explicitly in term of the Poisson point measure and with bracket:
\begin{align}
 \langle M^{uv}\rangle_t=  \int_0^t \big(r^{uv}(Z_s) +d \times N^{uv}_s\big) ds.\label{crochet}
\end{align}

Heuristically, we can interpret the martingale as a noise term corresponding to the stochasticity and whose variance is given by \eqref{crochet}. The integral term in \eqref{appleffectif} gives the growth rate of the population, as for usual ODEs of population dynamics, which are more usual in the biological literature (see \eqref{ODExi} in the sequel). We refer to \cite{ikedawatanabe,joffemetivier} for introductions to such SDEs.\\
These SDEs are linked in large populations with classical ODEs. In similar cases with such nonlinear dynamics, we know that these ODEs arise as deterministic approximations of the SDEs under a certain large population limit \citep[see][]{champagnatferrieremeleard,ethierkurtz,fourniermeleard,trangdesdev}.
Apart from providing the mathematical frame under which the SDEs can be approximated by ODEs, limit theorems may be interesting for statistical estimation, in particular for linking individual-based statistics with parameter estimates for the ODE \citep[\eg][]{blumtran,arazozaclemencontran}.\\
For the limit theorem, we consider a sequence of populations with initial sizes of order $K$ (Point $(I-i)$ of Definition \ref{defprocrenorm}). To counterbalance this effect, we renormalize the size of the individuals in $1/K$ (Point \textit{II}) so that the population mass remains of the order of a constant. The renormalization of the reproduction rate $(I-ii)$ is a kind of localization factor: if the size of the population is large, only the neighborhood of a given individual will affect its reproduction rate.

 \begin{definition}\label{defprocrenorm}I) We consider the following sequence of processes $(Z^K)_{K\in \N^*}$, where $\N^*=\N\setminus \{0\}$.
\par (i) Let $(Z^K_0)_{K\in \N^*}$ be initial conditions such that there exists a deterministic finite measure $\xi_0\in \mathcal{M}_F(E)$ such that:
\begin{equation}
\lim_{K\rightarrow +\infty}\frac{Z^K_0(dg)}{K}=\xi_0(dg)~\mbox{ and }~ \exists \varepsilon>0,\, \sup_{K\in \N^*}\E\Big(\big(\frac{N_0^K}{K}\big)^{2+\varepsilon}\Big)<+\infty,\label{deflimitez0n}
\end{equation}where $N^K_t=\langle Z^K_t,1\rangle$ is the size of the population described by $Z^K_t$.\\
\par (ii) The rate of production of a given genotype $g$, $r^g(.)$ is replaced by $r^{g,K}(.)=r^g(./K)$. 
The death rate $d$ is unchanged.\\

\noindent II) We also introduce the sequence of renormalized processes $(Z^{(K)})_{K\in \N^*}$:
\begin{equation}
\forall K\in \N^*,\, \forall t\in \R_+,\; Z^{(K)}_t(dg)=\frac{1}{K}Z^K_t(dg).\label{defz(n)}
\end{equation}We define by $N^{(K)}_t=N^K_t/K$ the mass of $Z^{(K)}_t$, that is the renormalized size of the population.\qed
\end{definition}
We use exponents $(K)$ with brackets for the renormalized quantities and exponents $K$ without brackets for the non-renormalized ones. The moment condition in (\ref{deflimitez0n}) is technical and ensures that \citep[for a proof, see][]{fourniermeleard}:
\begin{equation}
\forall T\in \R_+,\, \sup_{K\in \N^*}\E\big(\sup_{t\in [0,T]} (N^{(K)}_t)^2\big)<+\infty.\label{momentordre2}
\end{equation}

The ODEs are obtained when $K\rightarrow +\infty$. The corresponding limit theorem is stated in the next proposition and proved in appendix.

\begin{proposition}\label{proprenormgdepop2}The sequence of renormalized processes $(Z^{(K)})_{K\in \N^*}$ converges uniformly, when $K\rightarrow +\infty$ to the process in $\Co(\R_+,\mathcal{M}_F(E))$ such that:
\begin{equation}
\xi_t(dg)=\sum_{g=\{u,v\}\in E}n_t^{uv}\delta_{\{u,v\}}(dg),\label{defxi}
\end{equation}where for every $\{u,v\}\in E$, $n^{uv}_t$ is the \textit{continuous number} of plants of genotype $\{u,v\}$ at time $t$ and satisfies the following ODE:
\begin{align}
\frac{dn^{uv}_t}{dt}= r^{uv}(\xi_t)- d \, n^{uv}_t\label{ODExi}
\end{align}where $
r^{uv}(\xi_t)$ is obtained from (\ref{tauxreprod(uv)}) by replacing all the $N^{uv}$'s by $n^{uv}$'s.
\end{proposition}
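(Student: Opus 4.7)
The proof follows the standard program of Fournier--Méléard for measure-valued birth-and-death processes, specialised here to the finite type space $E$ so that the limit is effectively a system of ODEs on $\R^{\mathrm{Card}(E)}$. The plan is: (i) write $Z^{(K)}$ as the solution of an SDE driven by a Poisson point measure, decompose $\langle Z^{(K)}_t, f\rangle$ into a finite-variation part and a square-integrable martingale; (ii) establish tightness of $(Z^{(K)})_{K\geq 1}$ in $\D(\R_+,\M_F(E))$ via the Aldous--Rebolledo criterion; (iii) show the martingale part vanishes in $L^2$ as $K\to\infty$; (iv) identify any limit point as a solution of (\ref{ODExi}); (v) conclude uniqueness of the limit, hence convergence of the whole sequence.

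The semimartingale decomposition comes from (\ref{appleffectif}) rescaled by $1/K$: for each $\{u,v\}\in E$,
\begin{equation*}
N^{(K),uv}_t = N^{(K),uv}_0 + \int_0^t \bigl(r^{uv}(Z^{(K)}_s) - d\, N^{(K),uv}_s\bigr)\,ds + \w{M}^{(K),uv}_t,
\end{equation*}
where the use of $r^{uv,K}(Z^K_s)=r^{uv}(Z^K_s/K)=r^{uv}(Z^{(K)}_s)$ absorbs the renormalisation, and $\w{M}^{(K),uv}$ is a martingale whose bracket is of order $1/K$ because each jump has size $1/K$ and the total jump rate is controlled by $(\bar r + d) N^K_t$. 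Together with the moment bound (\ref{momentordre2}), this yields compact containment, bounded drift, and $\E[\langle \w{M}^{(K),uv}\rangle_t]\leq C_T/K$; the Aldous--Rebolledo condition on the modulus of continuity then holds on any $[0,T]$. Since $E$ is finite, tightness of all marginal coordinates $(N^{(K),uv})$ is equivalent to tightness of $(Z^{(K)})$ in $\D(\R_+,\M_F(E))$.

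Along any convergent subsequence $Z^{(K_n)}\Rightarrow \xi$, Doob's inequality gives $\E[\sup_{t\leq T}|\w{M}^{(K_n),uv}_t|^2]\to 0$, so the martingales disappear in the limit. The drift $r^{uv}(\cdot) - d\,\langle \cdot, \ind_{\{u,v\}}\rangle$ is bounded and continuous at every $\xi$ such that $\bar n^{uv}_t:=\sum_{\{u',v'\}\in\Theta^{uv}_S} n^{u'v'}_t >0$; combined with uniform integrability from (\ref{momentordre2}), this lets us pass to the limit in the integral equation and obtain (\ref{ODExi}). Uniqueness of the limiting ODE for the fecundity selection, self-compatible, and SCID models follows from the Lipschitz character of $\xi\mapsto r^{uv}(\xi)$ on bounded sets and Grönwall's lemma; since the limit is deterministic and unique, the full sequence converges, and the convergence upgrades from $\D$ to $\Co(\R_+,\M_F(E))$ because the limit is continuous.

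The main technical obstacle is the discontinuity of $r^{uv}$ at the boundary $\{\bar n^{uv}=0\}$ present in Wright's and the dependence models, which breaks both Lipschitz continuity of the drift and uniqueness of the ODE at the faces of the positive cone. This is handled by working componentwise: on the time interval during which a given compatible class is non-empty in the limit, the drift is Lipschitz and the argument above applies; once such a class becomes empty in the limit, the corresponding rate drops to zero and the surviving classes evolve according to a reduced but again Lipschitz system. A coupling with a dominating subcritical branching process on the vanishing class ensures $N^{(K),uv}/K\to 0$ jointly, so that passage to the limit is still valid across the boundary, at the possible cost of stating the convergence only up to the (finite) hitting times of these faces.
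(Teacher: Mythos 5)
Your proof follows essentially the same route as the paper's own sketch: the semimartingale decomposition coming from the Poisson-measure SDE, tightness via the Roelly/Aldous--Rebolledo criteria reduced (since $E$ is finite) to the real-valued coordinates $\langle Z^{(K)},f\rangle$, vanishing of the martingale part because its bracket is of order $1/K$, identification of limit points with the ODE system \eqref{ODExi}, and uniqueness of that system via local Lipschitz continuity of $\xi\mapsto r^{uv}(\xi)$ together with the moment bound \eqref{momentordre2}. The only substantive difference is that you handle the discontinuity of $r^{uv}$ at the faces $\{\bar n^{uv}=0\}$ for Wright's and the dependence models explicitly, by localizing at the hitting times of those faces and arguing on the reduced Lipschitz system afterwards, whereas the paper's sketch merely notes that such discontinuities may occur and otherwise invokes local Lipschitz continuity; your treatment is, if anything, the more careful of the two.
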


\subsection{Particular case of distylous flowers}\label{section:distyle}

The most simple case of SSI is distyly where $n=2$ and $\Phi_P=\Phi_S\equiv \Phi$. We detail this case, which will be studied carefully in all the rest of the paper. We will see that the two alleles in this case can not be codominant, which corresponds to what happens in natural populations. This leads us to introduce a random walk on the positive quadrant, which is central in the sequel.\\

The possible genotypes are $\{1,1\}$, $\{1,2\}$, $\{2,2\}$. The population is described with $N^{11}$, $N^{12}$ and $N^{22}$. Two alleles $1$ and $2$ are codominant if:
\begin{equation}
\Phi(2 e_1)=e_1,\quad \Phi(2 e_2)=e_2, \quad \Phi(e_1+e_2)=e_1+e_2.\label{codominancedistyle}
\end{equation}
The allele $2$ is dominant and $1$ is recessive if:
\begin{equation}
\Phi(2 e_1)=e_1,\quad \Phi(2 e_2)=e_2, \quad \Phi(e_1+e_2)=e_2.\label{dominancedistyle}
\end{equation}
\begin{proposition}
In the case of codominance (\ref{codominancedistyle}), there is almost sure extinction of the population.
\end{proposition}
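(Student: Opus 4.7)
The plan is to show that under codominance the heterozygote class $\{1,2\}$ is completely sterile, while every compatible cross produces only heterozygotes, so that the two homozygote populations $N^{11}$ and $N^{22}$ are pure death processes and the whole population must vanish.

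First I would work out the compatibility table. Under (\ref{codominancedistyle}) the phenotypes are $\Phi(2e_1)=e_1$, $\Phi(2e_2)=e_2$, $\Phi(e_1+e_2)=e_1+e_2$. Evaluating the scalar products $\Phi(e_u+e_v)\cdot\Phi(e_{u'}+e_{v'})$ for all pairs, the only vanishing scalar products are the pairs $(e_1,e_2)$ and $(e_2,e_1)$. In particular $(e_1+e_2)\cdot e_1=(e_1+e_2)\cdot e_2=(e_1+e_2)\cdot(e_1+e_2)\neq 0$, so a heterozygote stigma is incompatible with every pollen of the population, and a heterozygote pollen is incompatible with every stigma. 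Thus $\{1,2\}$ individuals neither receive nor donate compatible gametes, and only the cross $\{1,1\}\times\{2,2\}$ is ever fertile.

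Next I would apply the segregation rule to this unique compatible cross: when the ovule parent has genotype $\{1,1\}$ and the pollen parent has genotype $\{2,2\}$ (or vice versa), every one of the four possible allele combinations yields the heterozygote $\{1,2\}$. Plugging this into the reproduction rates~(\ref{tauxreprod(uv)}), one finds $r^{11}(Z_t)=r^{22}(Z_t)=0$ for all $t$, while $r^{12}(Z_t)$ reduces to the contribution of $\{1,1\}\times\{2,2\}$ crosses. Consequently, by the SDE~(\ref{appleffectif}), both coordinates $N^{11}_t$ and $N^{22}_t$ have zero birth rate and are dominated (in fact, equal) to pure death processes with individual death rate $d$.

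A pure linear death process starting from a finite integer value reaches $0$ in finite time almost surely, so there is an almost surely finite time $T$ at which $N^{11}_T=N^{22}_T=0$. From time $T$ onward the only remaining individuals are heterozygotes, and by sterility $r^{12}(Z_t)=0$ for $t\geq T$, so $N^{12}$ is itself a pure death process and hits $0$ in finite time a.s. Combining these three extinction times gives almost sure extinction of $N_t=N^{11}_t+N^{12}_t+N^{22}_t$. The argument is essentially algebraic; there is no real obstacle, the only point requiring a small verification is the bookkeeping in~(\ref{tauxreprod(uv)}) to confirm that $r^{11}$ and $r^{22}$ vanish identically, which I would carry out case by case on the homozygous and heterozygous terms of that formula.
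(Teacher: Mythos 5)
Your proof is correct and follows essentially the same route as the paper: you identify that heterozygotes are incompatible with everything, that the only fertile cross $\{1,1\}\times\{2,2\}$ yields exclusively heterozygotes, and hence that the population dies out under the constant death rate $d>0$. Your version merely spells out more explicitly the "at most two generations" step of the paper by exhibiting $N^{11}$, $N^{22}$ and then $N^{12}$ as pure death processes.
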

\begin{proof}Since $\Phi(e_1+e_2)\cdot \Phi(e_1)=\Phi(e_1+e_2)\cdot \Phi(e_2)=1$ and $\Phi(e_1+e_2)\cdot\Phi(e_1+e_2)=2$, heterozygous individuals $\{1,2\}$ are unable to mate with any individual. The only possible match is between individuals $\{1,1\}$ and $\{2,2\}$, but this produces individuals $\{1,2\}$ which have no compatible mate. Hence, there is at most two generations of individuals. Since the death rate is a positive constant $d>0$, extinction happens in finite time almost surely.
\end{proof}

In the sequel, we will therefore assume that alleles $1$ and $2$ are not codominant. Let us consider the case where $2$ is dominant and $1$ recessive (\ref{dominancedistyle}), the symmetric case being treated in the same manner. In this case, $\{1,1\}$ can mate with $\{1,2\}$ and $\{2,2\}$ whereas the latter can only mate with $\{1,1\}$. The sizes of the respective compatible populations (\ref{defNbar(uv)}) are:
\begin{align}
\overline{N}^{11}_t=N^{12}_t+N^{22}_t,\quad \overline{N}^{12}_t=N^{11}_t,\quad \overline{N}^{22}_t=N^{11}_t.
\end{align}
First, since none of these matings produces offsprings of genotype $\{2,2\}$, this genotype is only present in the first generation and hence disappears in finite time almost surely. For the sake of simplicity, we assume that the initial condition is only made of individuals of genotypes $\{1,1\}$ and $\{1,2\}$. In this case, the genotype $\{2,2\}$ never appears and $N^{12}_t=N_t-N^{11}_t$. Individuals of genotype $\{1,1\}$ can only reproduce with $\{1,2\}$ and reciprocally. As soon as one of the genotypes $\{1,1\}$ or $\{1,2\}$ disappears, the whole population is doomed to extinction since the remaining genotype can not reproduce any more. We are led to study random walks on the positive quadrant, absorbed on the axis $\{x=0\}$ and $\{y=0\}$.\\
In the absence of $\{2,2\}$, $\overline{p}^{11}(1,2)=\overline{p}^{12}(1,1)=1$. The random walk (\ref{appleffectif}) becomes:
\begin{align}
N_t^{11}= & N_0^{11}+M^{11}_t\nonumber\\
+ & \int_0^t \left(\frac{1}{2}\big(R(N^{12}_s,N_s) N^{11}_s +R(N^{11}_s,N_s) N^{12}_s \big)-d\times N^{11}_s\right)ds\nonumber\\
N_t^{12}= & N_0^{12}+ M^{12}_t\label{equationsdistyles}\\
+ & \int_0^t \left(\frac{1}{2}\big(R(N^{12}_s,N_s) N^{11}_s +R(N^{11}_s,N_s) N^{12}_s \big)-d\times N^{12}_s\right)ds.\nonumber
\end{align}

A particular significance is given to the escape time from the first quadrant:
\begin{equation}
\tau=\inf\{t\in \R_+,\, N^{11}_t=0 \mbox{ or }N^{12}_t=0\}.\label{def:tau}
\end{equation}On the set $\{\tau<+\infty\}$, the population goes extinct in finite time almost surely, whereas on the set $\{\tau=+\infty\}$ the population survives forever.

\section{Large populations of distylous flowers}\label{section:largepop}

We consider the Models 1-3 of Section \ref{sectionreproductionrate} for the distylous populations of Section \ref{section:distyle} and address the questions of determining when SSI are advantageous with respect to self-fertilization and inbreeding depression. The study is carried in the case where the population is large and the ODEs studied in Section \ref{sectionODE} are used. The case of small population is tackled in Section \ref{section:smallpop}.\\
In Sections \ref{stationarysol:wright} to \ref{stationarysol:fecundity}, we study the stationary solutions and their stabilities in the case of Models 1 to 3 (Sections \ref{section:wright} to \ref{section:fecundity}). Threshold phenomena and asymmetries showing Allee effects are highlighted. In Section \ref{section:comparaison}, comparisons between a SSI and a system with self-fertility and inbreeding depression are studied.\\

For Wright's (Model 1) and dependence (Model 2)  models, the ODEs are:
\begin{align*}
\left(\begin{array}{c}
n^{11}_t\\
n^{12}_t
\end{array}\right)=\left(\begin{array}{c}
n^{11}_0\\
n^{12}_0
\end{array}\right)+\int_0^t \left(\begin{array}{c}
\frac{1}{2}\big(r(n^{12}_s)n^{11}_s+r(n^{11}_s)n^{12}_s\big)\ind_{n_s^{11}>0}\ind_{n^{12}_s>0}-dn^{11}_s\\
\frac{1}{2}\big(r(n^{12}_s)n^{11}_s+r(n^{11}_s)n^{12}_s\big)\ind_{n_s^{11}>0}\ind_{n^{12}_s>0}-dn^{12}_s
\end{array}\right)ds,
\end{align*}where $r(.)=\bar{r}$ for Wright's model and $r(.)$ is defined in \eqref{functionrN} for the dependence model. This gives with classical arguments on the regularity of solutions:
\begin{align}
\frac{dn^{11}_t}{dt}= & \frac{1}{2}\big(r(n^{12}_t)n^{11}_t+r(n^{11}_t)n^{12}_t\big)\ind_{n_t^{11}>0}\ind_{n^{12}_t>0}-dn^{11}_t\nonumber\\
\frac{dn^{12}_t}{dt}= & \frac{1}{2}\big(r(n^{12}_t)n^{11}_t+r(n^{11}_t)n^{12}_t\big)\ind_{n_t^{11}>0}\ind_{n^{12}_t>0}-dn^{12}_t,\label{ODEdistyle}
\end{align}with the initial conditions $(n^{11}_0,n^{12}_0)$. The roles of $n^{11}$ and $n^{12}$ are symmetric. It is easy to see that when $n^{11}$ vanishes, the solutions remain on the boundary $\{n^{11}=0\}$. There is existence and uniqueness of the solutions on $(\R_+^*)^2=(\R_+\setminus\{0\})^2$, $\{0\}\times \R_+$ and $\R_+\times \{0\}$, from which we deduce the existence of a unique solution for every initial condition $(n^{11}_0,n^{12}_0)$.

\subsection{Wright's model (Model 1) in a large population}\label{stationarysol:wright}

The system (\ref{ODEdistyle}) without the indicators in the right hand sides (r.h.s.) and with $r(.)=\bar{r}$ becomes:
\begin{align}
\frac{d}{dt}\left(\begin{array}{c}n^{11}_t\\
n^{12}_t\end{array}\right)=A\left(\begin{array}{c}n^{11}_t\\
n^{12}_t\end{array}\right)\quad \mbox{ where }\quad A=\left(\begin{array}{cc}
\frac{\bar{r}}{2}-d & \frac{\bar{r}}{2}\\
\frac{\bar{r}}{2} & \frac{\bar{r}}{2}-d
\end{array}\right)=\frac{\bar{r}}{2}J-dI,\label{equationsolutionA}
\end{align}where $J$ is the square $2\times 2$-matrix filled with ones and $I$ is the identity $2\times 2$ matrix.

\begin{proposition}\label{propODErconstant}There exists a unique solution for (\ref{ODEdistyle}), which coincides with the solution of (\ref{equationsolutionA}) for every $t\in \R_+$:
\begin{align}
n^{11}_t= & \frac{1}{2}\Big(n_0^{11}\big(e^{(\bar{r}-d)t}+e^{-dt}\big)+n_0^{12}\big(e^{(\bar{r}-d)t}-e^{-dt}\big)\Big)\nonumber\\
n^{12}_t= & \frac{1}{2}\Big(n_0^{11}\big(e^{(\bar{r}-d)t}-e^{-dt}\big)+n_0^{12}\big(e^{(\bar{r}-d)t}+e^{-dt}\big)\Big).\label{solutioneqA}
\end{align}and:
\begin{equation*}
\lim_{t\rightarrow +\infty}e^{-(\bar{r}-d)t}n^{11}_t=\lim_{t\rightarrow +\infty}e^{-(\bar{r}-d)t}n^{12}_t=(n_0^{11}+n_0^{12})/2.
\end{equation*}
\end{proposition}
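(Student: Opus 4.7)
The plan is to solve the linear system (\ref{equationsolutionA}) explicitly by diagonalization, then verify that its solution never leaves the open positive quadrant $(\R_+^*)^2$ so that the indicators in (\ref{ODEdistyle}) are identically one along it. This identification, combined with the Cauchy--Lipschitz theorem applied on $(\R_+^*)^2$ (where the right-hand side of (\ref{ODEdistyle}) is smooth), then yields both existence and uniqueness of the solution, as well as the explicit formula (\ref{solutioneqA}).

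For the first step, I would exploit the decomposition $A=\frac{\bar r}{2}J-dI$. The matrix $J$ has eigenvectors $(1,1)^\top$ and $(1,-1)^\top$ with eigenvalues $2$ and $0$, so $A$ has the same eigenvectors with eigenvalues $\bar r-d$ and $-d$. Equivalently, setting $s_t=n^{11}_t+n^{12}_t$ and $\delta_t=n^{11}_t-n^{12}_t$ decouples the system:
\begin{align*}
\frac{ds_t}{dt} = (\bar r - d)\, s_t, \qquad \frac{d\delta_t}{dt} = -d\,\delta_t,
\end{align*}
giving $s_t=(n_0^{11}+n_0^{12})e^{(\bar r-d)t}$ and $\delta_t=(n_0^{11}-n_0^{12})e^{-dt}$. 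Adding and subtracting these two identities yields (\ref{solutioneqA}).

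The key verification---the one place where some care is needed---is that when $n_0^{11},n_0^{12}>0$ the trajectory stays in $(\R_+^*)^2$, so that the indicators in (\ref{ODEdistyle}) are everywhere equal to one along it. From (\ref{solutioneqA}),
\begin{equation*}
2n^{11}_t = n_0^{11}\bigl(e^{(\bar r-d)t}+e^{-dt}\bigr) + n_0^{12}\bigl(e^{(\bar r-d)t}-e^{-dt}\bigr),
\end{equation*}
and similarly for $n^{12}_t$ with roles of the coefficients swapped. Since $\bar r>0$, one has $e^{(\bar r-d)t}-e^{-dt}=e^{-dt}(e^{\bar r t}-1)\geq 0$ for all $t\geq 0$, so each $n^{ij}_t$ is a strictly positive combination whenever $n_0^{11}, n_0^{12}>0$. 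Therefore the trajectory never hits the axes, satisfies (\ref{ODEdistyle}) with indicators equal to one, and the Cauchy--Lipschitz theorem on $(\R_+^*)^2$ guarantees uniqueness.

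Finally, the asymptotic claim is immediate: multiplying (\ref{solutioneqA}) by $e^{-(\bar r-d)t}$ turns the $e^{-dt}$ terms into $e^{-\bar r t}$, which tends to $0$ as $t\to+\infty$ since $\bar r>0$, leaving $(n_0^{11}+n_0^{12})/2$ as the limit for both components. The main conceptual obstacle, such as it is, is verifying the positivity step: everything else is routine linear algebra once the problem is recognized as linear on the invariant open quadrant.
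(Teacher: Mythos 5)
Your proof is correct and follows essentially the same route as the paper: diagonalize $A$ along $(1,1)$ and $(1,-1)$ (your sum/difference coordinates are exactly the paper's eigenvector decomposition), verify that the explicit solution never leaves the open quadrant so the indicators are identically one, and conclude by Cauchy--Lipschitz. Your positivity check (observing $e^{(\bar r-d)t}-e^{-dt}=e^{-dt}(e^{\bar r t}-1)\geq 0$) is a slightly more direct version of the paper's argument, which instead solves $n^{11}_t=0$ and shows the resulting equation has no positive root; the two are equivalent.
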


\begin{figure}[!ht]
\begin{center}
\begin{tabular}{ccc}
(a) & (b) & (c) \\
\includegraphics[width=0.28\textwidth,height=0.22\textheight,angle=0,trim=1cm 1cm 1cm 1cm]{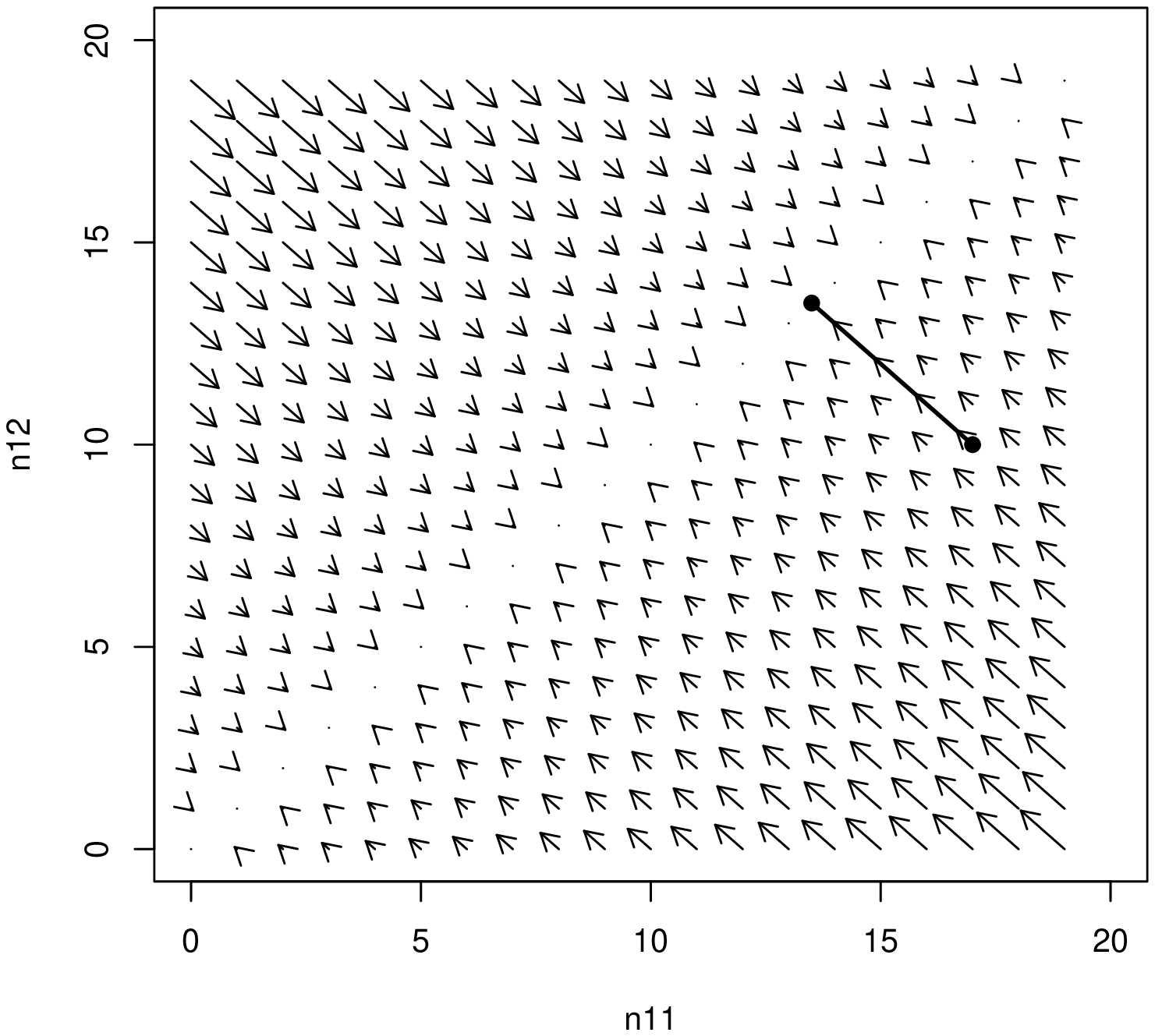}
 &
\includegraphics[width=0.28\textwidth,height=0.22\textheight,angle=0,trim=1cm 1cm 1cm 1cm]{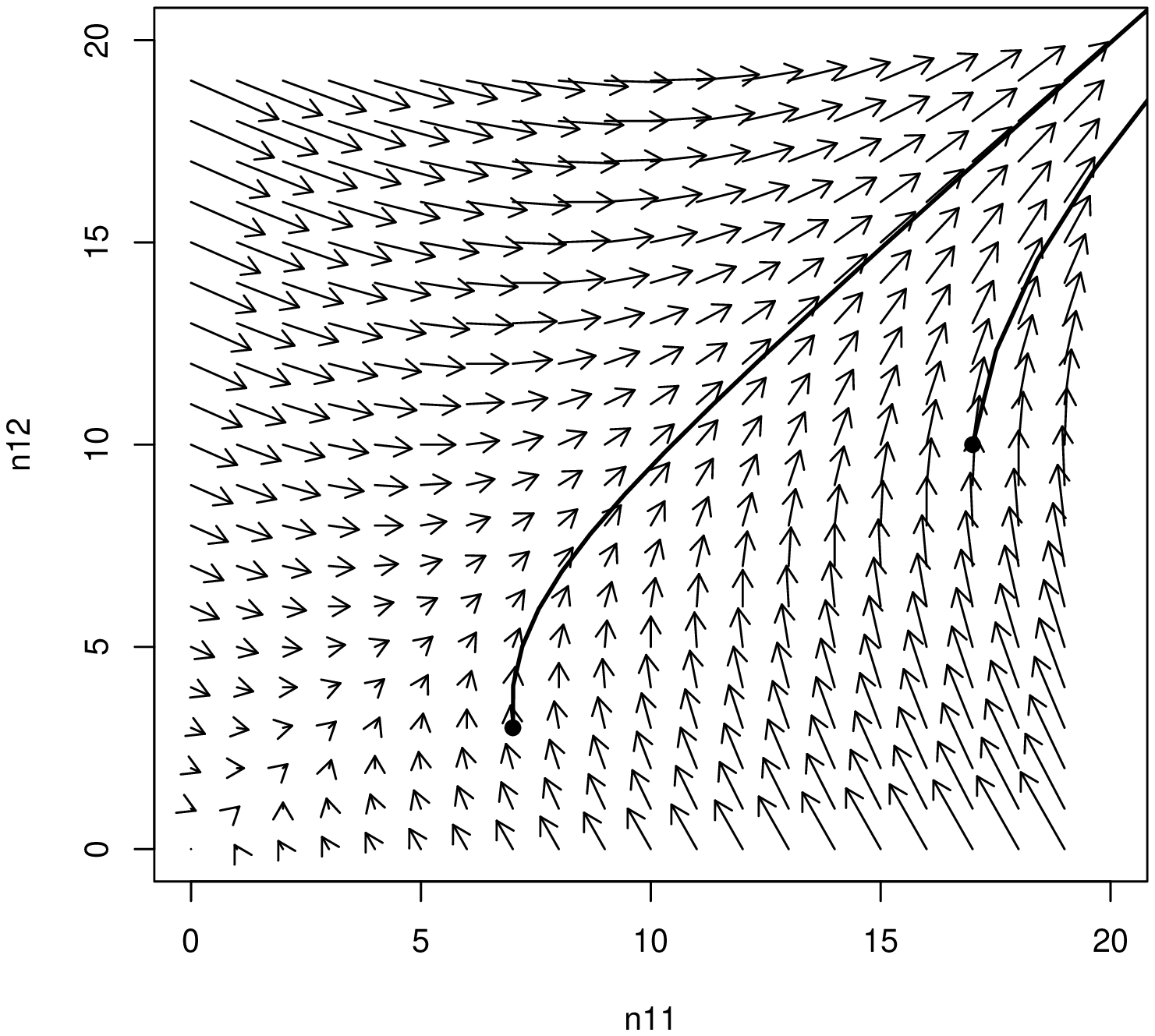}
 &
\includegraphics[width=0.28\textwidth,height=0.22\textheight,angle=0,trim=1cm 1cm 1cm 1cm]{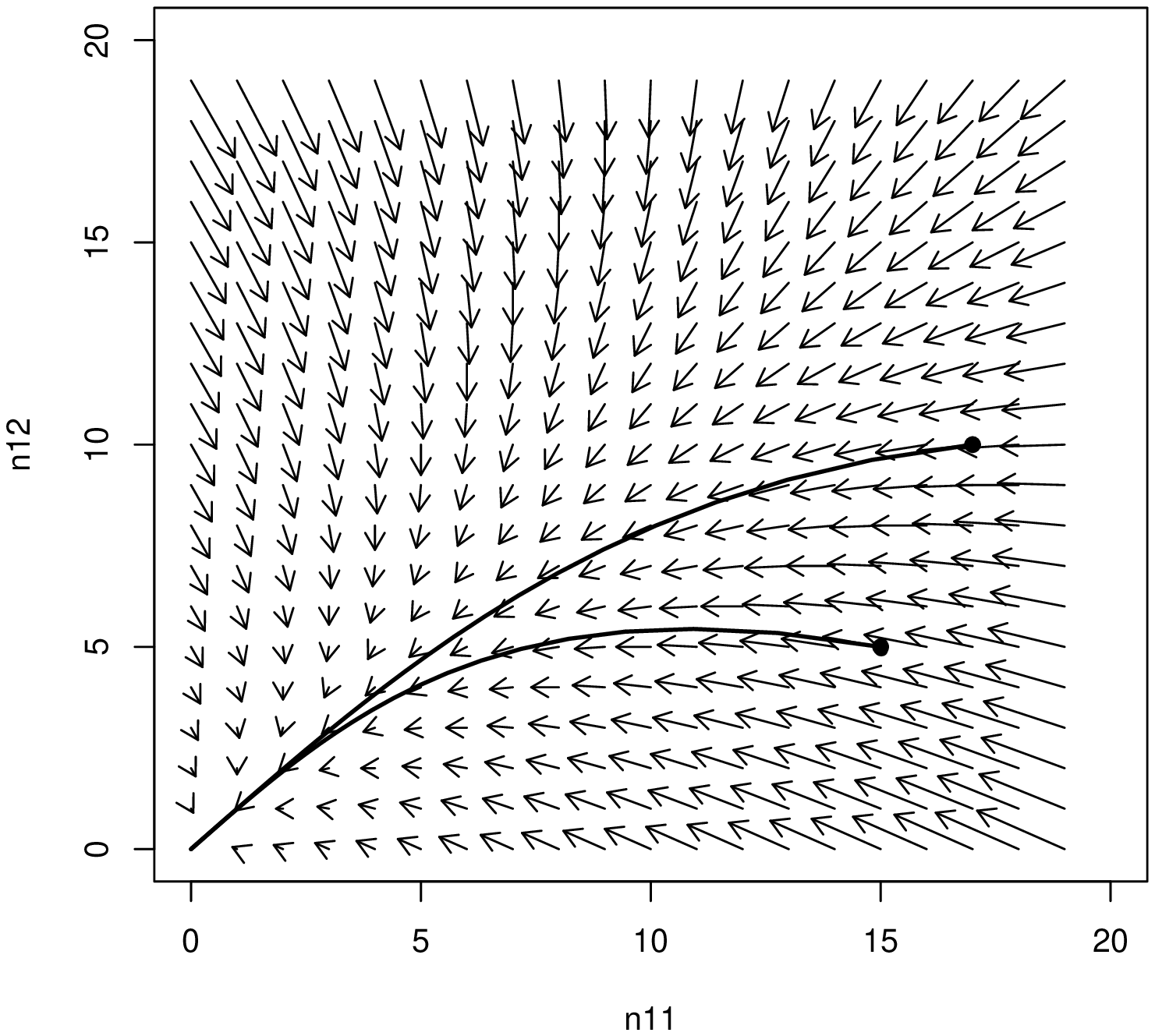}
\end{tabular}
\caption{\textit{Solutions $(n^{11}_t,n_t^{12})_{t\geq 0}$ of (\ref{ODEdistyle}) in the (a): critical ($\bar{r}=2$, $d=2$), (b): supercritical ($\bar{r}=4$, $d=3$) and (c): subcritical ($\bar{r}=2$, $d=3$) cases. }}\label{fig_solution_rconstant_distyle}
\end{center}
\end{figure}

\begin{corollary}We recover a natural disjunction between $\bar{r}>d$, $\bar{r}<d$ and $\bar{r}=d$. The first case is the supercritical case: the population survives and grows to infinite size. The second case is the subcritical case: there is asymptotic extinction of the population. In the third case, the critical case, the population size remains constant and the solution converges to a non-trivial equilibrium.
\end{corollary}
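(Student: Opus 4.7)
The corollary is a direct consequence of Proposition \ref{propODErconstant}, so my plan is simply to read off the asymptotic behavior of the explicit solutions (\ref{solutioneqA}) in each of the three regimes. The cleanest way to proceed is first to rewrite the solutions in the more transparent form
\begin{align*}
n^{11}_t &= \frac{n_0^{11}+n_0^{12}}{2}\,e^{(\bar{r}-d)t} + \frac{n_0^{11}-n_0^{12}}{2}\,e^{-dt},\\
n^{12}_t &= \frac{n_0^{11}+n_0^{12}}{2}\,e^{(\bar{r}-d)t} - \frac{n_0^{11}-n_0^{12}}{2}\,e^{-dt},
\end{align*}
which exhibits the two natural time scales of the linear system with matrix $A$: the eigenvalue $\bar{r}-d$ (eigenvector $(1,1)$) governs the total size, while $-d$ (eigenvector $(1,-1)$) governs the imbalance between genotypes.

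Since $d>0$ the contribution of $e^{-dt}$ always vanishes at infinity, and the sign of $\bar{r}-d$ alone determines the fate of the population. If $\bar{r}>d$, the factor $e^{(\bar{r}-d)t}$ diverges, so both $n^{11}_t$ and $n^{12}_t$ tend to $+\infty$ (supercritical case); the sharper statement already contained in Proposition \ref{propODErconstant}, namely $e^{-(\bar{r}-d)t}n^{ij}_t \to (n_0^{11}+n_0^{12})/2$, gives the precise exponential growth rate. If $\bar{r}<d$, both $e^{(\bar{r}-d)t}$ and $e^{-dt}$ tend to $0$, hence $n^{11}_t, n^{12}_t\to 0$ (subcritical case, asymptotic extinction). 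If $\bar{r}=d$, the first exponential is identically equal to $1$ while the second still vanishes, so
\[
\lim_{t\to+\infty} n^{11}_t = \lim_{t\to+\infty} n^{12}_t = \frac{n_0^{11}+n_0^{12}}{2},
\]
which is the non-trivial equilibrium of (\ref{ODEdistyle}) on the diagonal $\{n^{11}=n^{12}\}$ predicted by the corollary.

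There is no genuine obstacle here; the only mildly subtle point is making sure the indicator functions in (\ref{ODEdistyle}) do not affect the conclusion. This is handled by Proposition \ref{propODErconstant} itself, which asserts that the solution of (\ref{ODEdistyle}) coincides for all $t\geq 0$ with the solution of the linear system (\ref{equationsolutionA}). In particular the trajectory never reaches the boundary $\{n^{11}=0\}\cup\{n^{12}=0\}$ when started in $(\R_+^*)^2$, so the analysis of the linear flow above applies verbatim, and the three regimes are exhausted by the trichotomy $\bar{r}>d$, $\bar{r}<d$, $\bar{r}=d$.
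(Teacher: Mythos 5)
Your proof is correct and follows exactly the route the paper intends: the corollary is an immediate reading of the explicit solution \eqref{solutioneqA} from Proposition \ref{propODErconstant}, with the two exponentials $e^{(\bar{r}-d)t}$ and $e^{-dt}$ determining the trichotomy, and the indicator issue disposed of by the proposition's guarantee that the trajectory never reaches the boundary. The paper gives no separate proof for this corollary, so your write-up simply makes explicit what the authors leave implicit.
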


\begin{remark}
It is remarkable that in the large population approximation, for constant reproduction rates, the behavior is the same as for the case of compatible reproduction, in which the ODE for the population size is: $dn/dt=(\bar{r}-d)n.$ \qed
\end{remark}

\begin{proof}[Proof of Prop. \ref{propODErconstant}]
Existence and uniqueness of the solutions of (\ref{equationsolutionA}) in $\Co^{\infty}(\R_+,\R^2)$ hold by Cauchy-Lipschitz theorem. In every cases, the solutions of (\ref{ODEdistyle}) coincide with the solution of (\ref{equationsolutionA}) until one of the components equals zero. Let us denote by $t^0$ the time at which this happens and which may be infinite in case the trajectory of the solution of (\ref{equationsolutionA}) does not intersect the horizontal or vertical axes. This implies existence and uniqueness until the time $t^0$ of the solution of (\ref{ODEdistyle}).

\par Let us solve (\ref{equationsolutionA}). The matrix $A$ admits $r-d$ and $-d$ as eigenvalues respectively associated with the eigenvectors $(1/\sqrt{2},1/\sqrt{2})$ and $(1/\sqrt{2},-1/\sqrt{2})$. For the system (\ref{equationsolutionA}), there is a unique solution for every initial condition $(n^{11}_0,n^{12}_0)\in (0,+\infty)^2$ such that $t\mapsto (n^{11}_t,n^{12}_t)$ is of class $\Co^\infty$.
\begin{align}
\left(\begin{array}{c}
n^{11}_t\\
n^{12}_t
\end{array}\right)=\frac{1}{\sqrt{2}}\left(\begin{array}{cc}
1 &  1\\
1 & -1 \end{array}\right)\left(\begin{array}{c}
C_0 e^{(\bar{r}-d)t}\\
C_1 e^{-d.t}
\end{array}\right).\label{solutionA}
\end{align}The constants $C_0$ and $C_1$ are obtained from the initial condition. Solving (\ref{solutionA}) in $(C_0,C_1)$ for $t=0$:
\begin{equation}
C_0=\frac{\sqrt{2}}{2}\big(n_0^{11}+n_0^{12}\big),\qquad C_1=\frac{\sqrt{2}}{2}\big(n_0^{11}-n_0^{12}\big).
\end{equation}Hence this provides (\ref{solutioneqA}).

\par The question is whether $(\R_+^*)^2$ is positive invariant, \ie whether the trajectories of (\ref{solutioneqA}) remain in the positive quadrant.
\begin{align}
n^{11}_t=0 \quad \Leftrightarrow &\quad  n_0^{11}\big(e^{\bar{r}t}+1\big)+n_0^{12}\big(e^{\bar{r}t}-1\big)=0 \quad \Leftrightarrow \quad
e^{\bar{r}t}=\frac{n_0^{12}-n_0^{11}}{n_0^{11}+n_0^{12}}.
\end{align}This equation has no positive solution in $t$ since the right hand side is smaller than 1 (and possibly non positive).
Similar computation holds if one solves $n^{12}_t=0$. Hence, the solutions of (\ref{ODEdistyle}) and (\ref{equationsolutionA}) coincide on $\R_+$ and $t_0=+\infty$.

\par The long time behavior is obtained by noticing that whatever the case, the dominant factor in (\ref{solutioneqA}) is $\exp((\bar{r}-d)t)$.
\end{proof}

\subsection{Dependence model (Model 2) in a large population}\label{stationarysol:dependence}

We now turn to the case of a variable individual reproduction rate $r(.)$. First, we consider the general case of a bounded continuous nonnegative function $r(.)$, and then we focus on the functional form given in \eqref{functionrN}. The ODEs limits are given in \eqref{ODEdistyle}. Conclusions are summed up at the end of the subsection.\\

Let us start with a bounded continuous nonnegative function $r(.)$. The trivial solution $(0,0)$ is a stationary solution. We look for non trivial stationary solutions. Because of the symmetry in $n^{11}$ and $n^{12}$, if a nontrivial stationary solution exists, it satisfies $n^{11,*}=n^{12,*}:=n^*$. The latter value solves:
\begin{equation}
r(n^*)n^*=dn^*\qquad \Leftrightarrow \qquad r(n^*)=d.\label{solutionstationnaireODE}
\end{equation}The number of non trivial fixed points depends on the number of roots of (\ref{solutionstationnaireODE}).

\begin{remark}Notice that the total population size at equilibrium is then $2n^*$, which is twice the size at equilibrium in absence of SI. Indeed, in the latter case, the size of the population at equilibrium satisfies (\ref{solutionstationnaireODE}) and is thus $n^*$.\qed
\end{remark}

\noindent We now examine the stability of the stationary solutions. We refer to \citet{verhulstbook} for definitions and developments on the theory of dynamical systems.

\begin{proposition}\label{propstabilite00}The trivial equilibrium $(0,0)$ is:\\
(i) a positive attractor if $r(0)<d$,\\
(ii) a saddle point if $r(0)>d$.
\end{proposition}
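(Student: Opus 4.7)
The plan is to carry out a standard linear stability analysis of the ODE system \eqref{ODEdistyle} at the origin. Since the axes $\{n^{11}=0\}$ and $\{n^{12}=0\}$ are invariant (the indicators kill the reproduction term there, and the restricted dynamics is just $dn/dt=-dn$, which converges exponentially to $0$), it suffices to study the behavior of trajectories in the open quadrant $(\R_+^*)^2$, where the indicators are equal to $1$ and the right-hand side of \eqref{ODEdistyle} is smooth (recall that $r$ given by \eqref{functionrN} is $\Co^\infty$).

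First I would compute the Jacobian of the vector field at the origin. Writing $r(n)=r(0)+O(n)$ as $n\to 0$, the first-order expansion of \eqref{ODEdistyle} gives
\begin{equation*}
\frac{d}{dt}\left(\begin{array}{c} n^{11}_t\\ n^{12}_t\end{array}\right)
= B\left(\begin{array}{c} n^{11}_t\\ n^{12}_t\end{array}\right)+o\bigl(\|(n^{11}_t,n^{12}_t)\|\bigr),
\qquad B=\frac{r(0)}{2}J-dI,
\end{equation*}
where $J$ is the $2\times 2$ matrix of ones and $I$ is the identity, exactly as in \eqref{equationsolutionA} but with $\bar r$ replaced by $r(0)$. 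The spectrum of $B$ is computed as in the proof of Proposition~\ref{propODErconstant}: the eigenvalues are $r(0)-d$ with eigenvector $(1,1)/\sqrt 2$ and $-d$ with eigenvector $(1,-1)/\sqrt 2$.

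Case (i): if $r(0)<d$, both eigenvalues are strictly negative, so $B$ is a stable (Hurwitz) matrix, and by the Hartman--Grobman theorem (or directly by constructing a Lyapunov function of the form $V(n^{11},n^{12})=(n^{11})^2+(n^{12})^2$ and using continuity of $r$ to bound $r(n)\le d-\eta$ on a small neighborhood of $0$) the origin is asymptotically stable for the nonlinear system in the interior quadrant. Combined with the invariance and contraction on the axes, this gives that $(0,0)$ is a positive attractor.

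Case (ii): if $r(0)>d$, the eigenvalue $r(0)-d$ is positive while $-d$ is negative, so $B$ is hyperbolic with signature $(+,-)$. Hartman--Grobman then yields a local topological conjugacy with its linear part, hence $(0,0)$ is a saddle point. The stable eigendirection is $(1,-1)$, which is transverse to the positive quadrant, so the stable manifold in $\R_+^2$ reduces to the invariant axes, and any trajectory issued from the interior is pushed away from the origin along the unstable direction $(1,1)$. The main thing to watch is that the vector field is only smooth in the open quadrant because of the indicator functions, but since the interior is positively invariant and the linearization is performed in this open set, no difficulty arises.
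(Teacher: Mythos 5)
Your proof is correct and follows essentially the same route as the paper: linearize the system without the indicators at the origin, obtain the Jacobian $\frac{r(0)}{2}J-dI$ with eigenvalues $r(0)-d$ and $-d$ and eigenvectors $(1,1)/\sqrt2$, $(1,-1)/\sqrt2$, and conclude attractor versus saddle from the signs, handling the indicator discontinuity by noting the interior and the axes are separately invariant. The only loose phrase is that the stable manifold ``reduces to the invariant axes'': more precisely the stable direction $(1,-1)$ meets the closed quadrant only at the origin (which is exactly the observation the paper itself makes when discussing the specific rate \eqref{functionrN}), while the axes are attracted to $0$ for the separate reason that the restricted dynamics there is $dn/dt=-dn$.
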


\begin{proof}
Because of the indicators in (\ref{ODEdistyle}), we know that once one component has reached zero, it can not escape. We consider the stability of the trivial solution $(0,0)$ for the ODE (\ref{ODEdistyle}) without the indicators, as we know that before one of the components reaches zero, these systems have the same solutions. We use the classical linearization methods \cite[\eg][Chap. 3]{verhulstbook}. The linearization of the ODE without indicators around an equilibrium $(n^{11},n^{12})$ leads us to consider the Jacobian matrix of the system at this point:
\begin{align}
\mathfrak{J}(n^{11},n^{12})= & \left(\begin{array}{cc}
\frac{r(n^{12})}{2} + \frac{r'(n^{11})n^{12}}{2}-d & \frac{r'(n^{12})n^{11}}{2} + \frac{r(n^{11})}{2} \\
\frac{r(n^{12})}{2} + \frac{r'(n^{11})n^{12}}{2} & \frac{r'(n^{12})n^{11}}{2} + \frac{r(n^{11})}{2}-d
\end{array}\right)\label{matricejacobienne}
\end{align}
For the equilibrium $(0,0)$, with the notation $J$ and $I$ introduced after \eqref{equationsolutionA}:
\begin{equation}
\mathfrak{J}(0,0)=\left(\begin{array}{cc}
\frac{r(0)}{2} -d & \frac{r(0)}{2} \\
\frac{r(0)}{2}  & \frac{r(0)}{2}-d
\end{array}\right)=\frac{r(0)}{2}J-dI.\label{jacobienne(0,0)}
\end{equation}This matrix is the same as the matrix $A$ in (\ref{equationsolutionA}) and its eigenvalues are $r(0)-d$ and $-d$ respectively associated with the eigenvectors $(1/\sqrt{2},1/\sqrt{2})$ and $(1/\sqrt{2},-1/\sqrt{2})$. If $r(0)<d$, then both eigenvalues are negative and $(0,0)$ is a positive attractor for the system without indicators, and hence also for (\ref{ODEdistyle}). If $r(0)>d$ then there is a positive and a negative eigenvalue. In this case, $(0,0)$ is a saddle point for the system without indicators. This entails as in the proof of Prop \ref{propODErconstant} that $n^{11}_t+n^{12}_t$ converges to $+\infty$ while $n^{11}_t-n^{12}_t$ converges to zero. Thus, in the neighborhood of zero, when starting from points of the positive quadrant, the solutions are the same as for the system without indicator: there is no extinction.
\end{proof}


\begin{proposition}
For an equilibrium $(n^*,n^*)$ with $n^*>0$:\\
(i) If $r'(n^*)<0$, then the equilibrium is a positive attractor,\\
(ii) if $r'(n^*)>0$, then the equilibrium is a saddle point.
\end{proposition}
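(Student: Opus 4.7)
The plan is to linearize the ODE \eqref{ODEdistyle} around the equilibrium $(n^*,n^*)$, using exactly the same device as in the proof of Proposition \ref{propstabilite00}: since a positive equilibrium lies in the open quadrant, a neighborhood of $(n^*,n^*)$ avoids the set where the indicators $\ind_{n^{11}>0}\ind_{n^{12}>0}$ change value, so the indicators can be dropped and the stability analysis reduces to that of the smooth system governed by the Jacobian \eqref{matricejacobienne}.

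First, I substitute the equilibrium condition $r(n^*)=d$ from \eqref{solutionstationnaireODE} into the general Jacobian \eqref{matricejacobienne}. Writing $c:=r'(n^*)n^*/2$ for convenience, this yields
\begin{equation*}
\mathfrak{J}(n^*,n^*)=\left(\begin{array}{cc}
c-\tfrac{d}{2} & c+\tfrac{d}{2}\\
c+\tfrac{d}{2} & c-\tfrac{d}{2}
\end{array}\right),
\end{equation*}
a symmetric matrix of the form $\alpha I+\beta(J-I)$ with the same matrices $I,J$ introduced after \eqref{equationsolutionA}. Its eigenvalues are the sum and difference of the two entries in a row, namely $r'(n^*)n^*$ and $-d$, with eigenvectors $(1,1)/\sqrt{2}$ and $(1,-1)/\sqrt{2}$ respectively. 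This is a direct algebraic computation, and is the only real content of the proof.

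From here the Hartman--Grobman theorem (or the elementary linearization principle cited via \citet{verhulstbook}) finishes both cases. The eigenvalue $-d$ is strictly negative, so the sign of the second eigenvalue $r'(n^*)n^*$ — which, since $n^*>0$, is the sign of $r'(n^*)$ — decides everything: if $r'(n^*)<0$ both eigenvalues are negative and $(n^*,n^*)$ is a positive attractor, proving (i); if $r'(n^*)>0$ the two eigenvalues have opposite signs and $(n^*,n^*)$ is a hyperbolic saddle, proving (ii).

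I do not expect a serious obstacle. The one point worth stating carefully is the justification for ignoring the indicators: because $(n^*,n^*)\in(\R_+^*)^2$ is an interior point, there is an open ball around it on which \eqref{ODEdistyle} coincides with the smooth system obtained by removing the indicator functions, so the linearization theorem applies verbatim. The case $r'(n^*)=0$ is degenerate (a zero eigenvalue) and is implicitly excluded by the hypotheses; nothing else needs to be addressed.
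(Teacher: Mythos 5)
Your proof is correct and follows essentially the same route as the paper: evaluate the Jacobian \eqref{matricejacobienne} at $(n^*,n^*)$, use $r(n^*)=d$ to reduce it to a matrix of the form $aJ-dI$ with eigenvalues $r'(n^*)n^*$ and $-d$ (eigenvectors $(1,1)/\sqrt{2}$ and $(1,-1)/\sqrt{2}$), and conclude from the sign of $r'(n^*)$. Your added remarks on dropping the indicators near an interior equilibrium and on the degenerate case $r'(n^*)=0$ are correct but only make explicit what the paper leaves implicit.
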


\begin{proof}
For the point $(n^*,n^*)$ the Jacobian matrix of (\ref{matricejacobienne}) becomes:
\begin{equation}
\mathfrak{J}(n^*,n^*)=
\Big(\frac{r(n^*)}{2} + \frac{r'(n^*)n^*}{2}\Big) \, J- d\,I.\label{etapestabilite}
\end{equation}This matrix is again of the same form as $A$ introduced in (\ref{equationsolutionA}). Its eigenvalues are $r(n^*) + r'(n^*)n^*-d$ and $-d$, respectively associated with the eigenvectors $(1/\sqrt{2},1/\sqrt{2})$ and $(1/\sqrt{2},-1/\sqrt{2})$.
As $r(n^*)=d$, we can simplify the expression of the first eigenvalue: $r(n^*) + r'(n^*)n^*-d=r'(n^*)n^*$, which is of the sign of $r'(n^*)$. As usual, the sign of the eigenvalues determines the nature of the equilibrium.
\end{proof}

Let us now focus on the particular form (\ref{functionrN}) for the reproduction rate. We compute the stationary solutions $(n^{*},n^{*})$ of (\ref{ODEdistyle}) by starting from (\ref{solutionstationnaireODE}):
\begin{align}
 r(n^*)=d \quad \Leftrightarrow \quad \overline{r}\frac{e^{\alpha n^*}}{e^{\alpha n^*}+\beta}=d
\quad \Leftrightarrow 
 \quad n^*=\frac{1}{\alpha}\log\Big(\frac{\beta d}{\overline{r}-d}\Big).\label{sol_stationnaire_ex3}
\end{align}Of course, we see that the log is well defined if and only if $\beta>0$ and $\overline{r}>d$. Moreover, $(n^*,n^*)$ belongs to the positive quadrant if and only if
\begin{align}
\frac{\beta d}{\overline{r}-d}>1 \qquad \Leftrightarrow \qquad \overline{r}<d(1+\beta).\label{condition}
\end{align}
\par Since $r(0)=\overline{r}/(1+\beta),$ we notice that the stability condition for the trivial equilibrium, $r(0)<d$, is equivalent to the condition (\ref{condition}) for the existence of a non-trivial stationary solution. Hence, if $r(0)<d$, $(0,0)$ is a positive attractor and there is no other equilibrium, and if $r(0)>d$, $(0,0)$ is a repulsive attractor. Indeed, $(0,0)$ is a saddle point for the system (\ref{ODEdistyle}) without the indicators. Since the stable manifold is the vectorial line of direction $(1,-1)$ which intersects the positive quadrant only at $(0,0)$, then for the system (\ref{ODEdistyle}), the equilibrium $(0,0)$ is a negative attractor.
\par Moreover, the equilibrium $(n^*,n^*)$ is always a saddle point as:
\begin{align}
r'(n)=\overline{r}\frac{\alpha e^{\alpha n}(\beta+e^{\alpha n})-\alpha e^{2\alpha n}}{(\beta+e^{\alpha n})^2}=\frac{\overline{r}\alpha \beta e^{\alpha n}}{(\beta+e^{\alpha n})^2},\label{calcul_r'}
\end{align}is always positive. The stable (resp. unstable) manifold is locally the affine line of direction $(1,-1)$ (resp. the affine line of direction $(1,1)$).
\par In conclusion:
\begin{itemize}
\item If $\bar{r}<d$, then every trajectory converges to $(0,0)$.
\item If $d<\bar{r}<d(1+\beta)$, then there exists a non-trivial equilibrium that is a saddle point. Trajectories converge to $(0,0)$ or $\lim_{t\rightarrow +\infty}n^{11}_t=\lim_{t\rightarrow +\infty}n^{12}_t=+\infty$.
\item If $\bar{r}>d(1+\beta)$, then there is no non-trivial equilibrium in the positive quadrant as the growth rate is too strong. $(0,0)$ is a negative attractor and $\lim_{t\rightarrow +\infty}n^{11}_t=\lim_{t\rightarrow +\infty}n^{12}_t=+\infty$.
\end{itemize}

\begin{figure}[!ht]
\begin{center}
\begin{tabular}{ccc}
(a) & (b) & (c) \\
\includegraphics[width=0.28\textwidth,height=0.22\textheight,angle=0,trim=1cm 1cm 1cm 1cm]{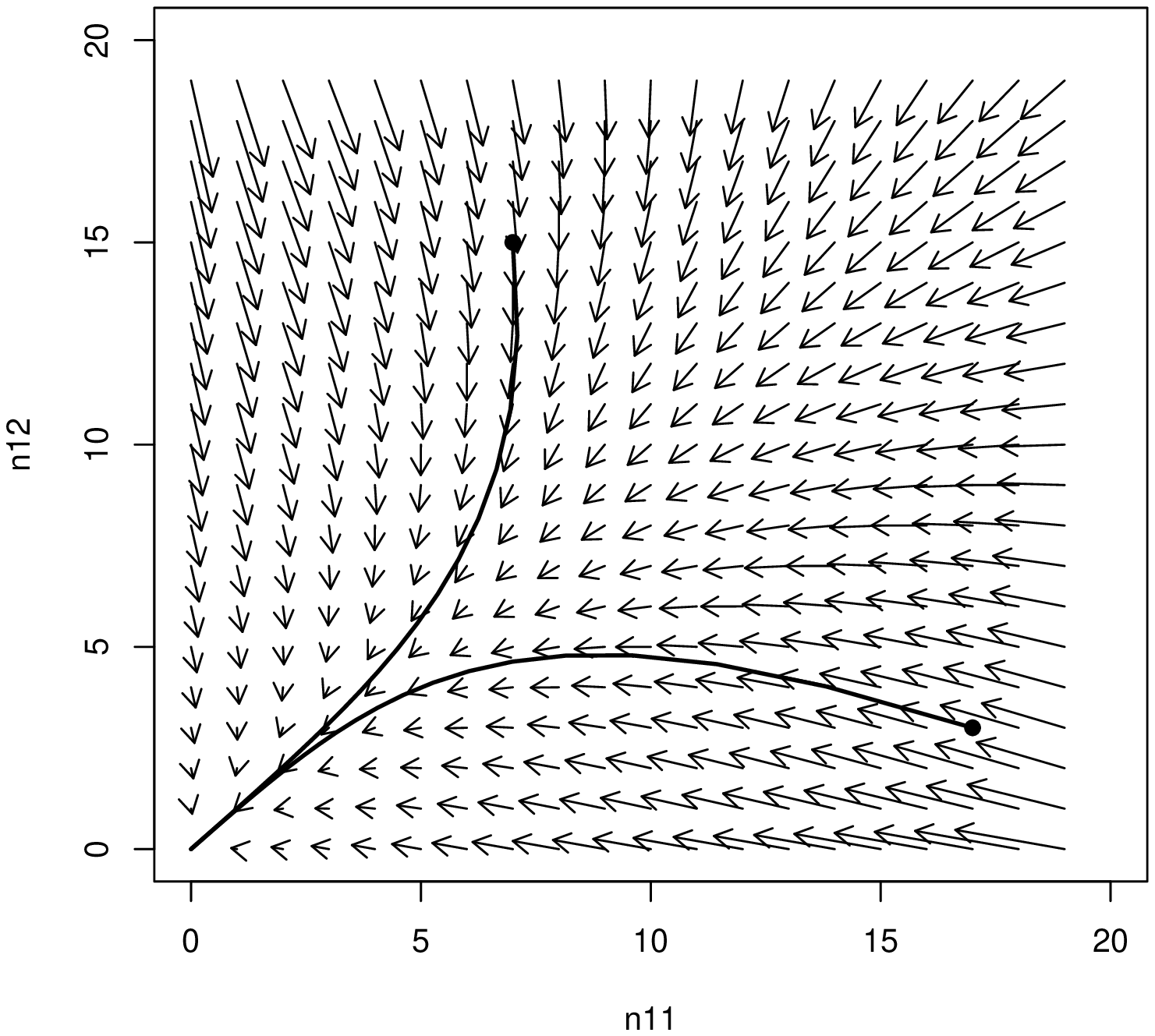}
 &
\includegraphics[width=0.28\textwidth,height=0.22\textheight,angle=0,trim=1cm 1cm 1cm 1cm]{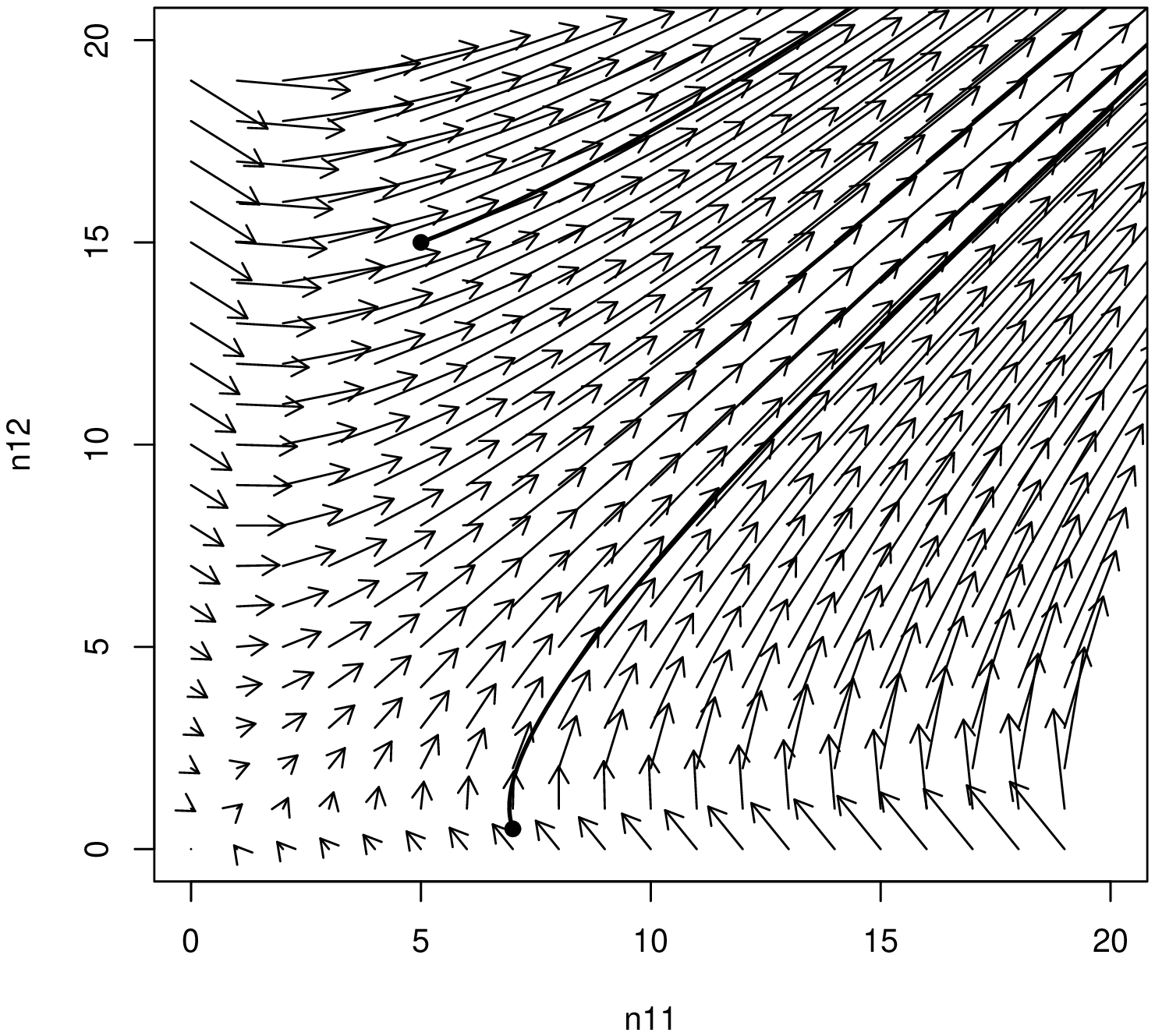}
&
\includegraphics[width=0.28\textwidth,height=0.22\textheight,angle=0,trim=1cm 1cm 1cm 1cm]{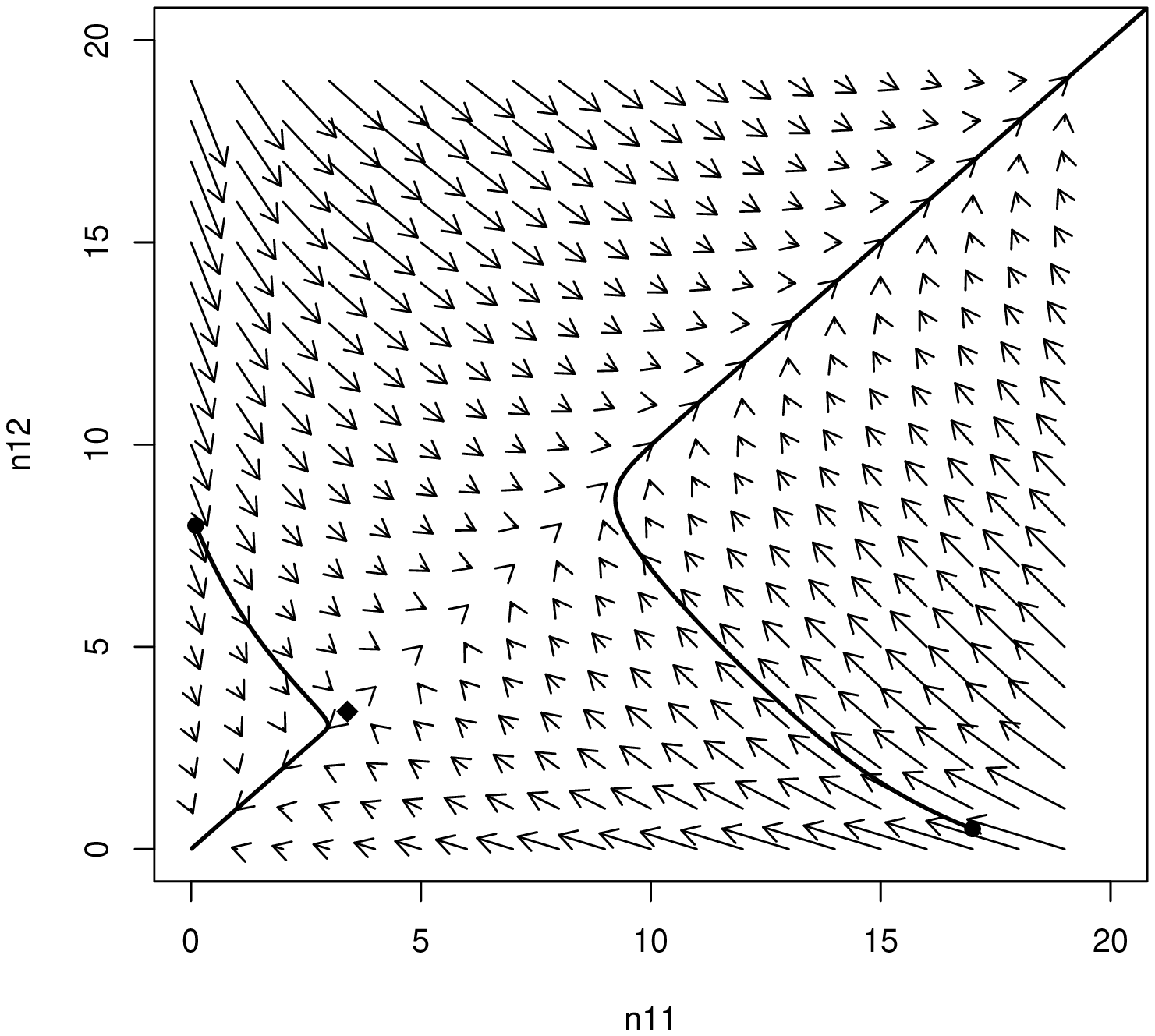}
\end{tabular}
\caption{\textit{Solutions $(n^{11}_t,n_t^{12})_{t\geq 0}$ of (\ref{ODEdistyle}) for a reproduction rate of the form (\ref{functionrN}) with $\alpha=\beta=1$. (a): $\overline{r}=2$, $d=3$, (b): $\overline{r}=7$, $d=3$. We have $n^*=-0.3$ and this equilibrium is not in the positive quadrant. $r(0)=3.5>d$ and $(0,0)$ is a negative attractor. Every solution tends to infinity. (c): $\overline{r}=3.1$, $d=3$. We obtain $n^*=3.4$. Since $r(0)=1.55$ is smaller than $d=3$, $(0,0)$ is a positive attractor in this case. The solutions hence either converge to zero or to infinity, given the initial condition. It is also seen that at the neighborhood of $2n^*$, for similar initial population sizes, populations with higher symmetry are more likely to survive than very asymmetric populations. }}\label{fig_solution_rvariable_distyle}
\end{center}
\end{figure}

\begin{remark}
Threshold phenomena appear: when $d<\bar{r}<d(1+\beta)$, we see that contrarily to the case of absence of pollen limitation (Wright's model), there may be either survival or extinction. If the size of the population is too small (if the initial condition belongs to the attracting domain of $(0,0)$) then there is extinction. There is a threshold implying that survival is possible only for sufficiently large population. \\
 On Fig. \ref{fig_solution_rvariable_distyle}, we see that the minimum size to ensure survival also depends on the composition of the population. It is smaller for population with $1/2$ individuals of genotype $\{1,1\}$ and $1/2$ individuals of genotype $\{1,2\}$. If we draw the affine line of direction $(1,-1)$ going through $(n^*,n^*)$ on Fig. \ref{fig_solution_rvariable_distyle} (c), it can be seen for a given initial condition $n_0=n^{11}_0+n^{12}_0$ close to $2n^*$ that the trajectories either lead to $+\infty$ or $0$ when $t\rightarrow +\infty$.
Thus, the symmetry in mating partners matters for survival or extinction (Fig. \ref{fig_solution_rvariable_distyle} (c)). To ensure survival whatever the initial condition, one needs a reproduction rate sufficiently large ($\bar{r}=d(1+\beta)$) to totally compensate the pollen limitation.\\
Both phenomena can be interpreted as Allee effects: due to pollen limitation, the population growth rate is an increasing function of the compatible population size and thresholds appear when the function is negative at the beginning.
\qed
\end{remark}

\subsection{Fecundity selection model (Model 3) in a large population}\label{stationarysol:fecundity}

We finally study Model 3. The rate of production of genotypes $\{1,1\}$ is $
\bar{r}\frac{N^{11}_tN^{12}_t}{N_t},$ which does not present any discontinuity on the boundaries $\{N^{11}_t=0\}$ and $\{N^{12}_t=0\}$ any more. The ODE approximation is now:
\begin{align}
\frac{dn^{11}_t}{dt}= & \bar{r}\frac{n^{11}_tn^{12}_t}{n^{11}_t+n^{12}_t}-d n^{11}_t,\qquad \qquad
\frac{dn^{12}_t}{dt}=  \bar{r}\frac{n^{11}_tn^{12}_t}{n^{11}_t+n^{12}_t}-dn^{12}_t,\label{ODEdistyle_fs}
\end{align}which admits a unique solution in $(\R_+^*)^2$.

\begin{proposition}
As for Wright's model, we have a disjunction into three cases:\\
(i) Subcritical case $\bar{r}<2d$: $\lim_{t\rightarrow +\infty}n^{11}_t=\lim_{t\rightarrow +\infty}n^{12}_t=0$.\\
(ii) Critical case $\bar{r}=2d$: the size $n^{11}_t+n^{12}_t$ remains constant and $(n^{11}_t,n^{12}_t)_{t\in \R_+}$ converges to an equilibrium $(n^*,n^*)\in (\R_+^*)^2$.\\
(iii) Supercritical case $\bar{r}>2d$: $\lim_{t\rightarrow +\infty}n^{11}_t=\lim_{t\rightarrow +\infty}n^{12}_t=+\infty$.\\
The additional coefficient 2 in the criteria comes from the fact that at equilibrium, the probability of mating success of an ovule is $1/2$ contrary to the case of (\ref{solutionstationnaireODE}) where it is 1.
\end{proposition}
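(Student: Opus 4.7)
The plan is to exploit the full symmetry between $n^{11}$ and $n^{12}$ in \eqref{ODEdistyle_fs} by introducing the sum $s_t := n^{11}_t + n^{12}_t$ and the difference $\Delta_t := n^{11}_t - n^{12}_t$. Subtracting the two equations of \eqref{ODEdistyle_fs} makes the common bilinear mating term disappear and leaves the linear decoupled ODE $\dot{\Delta}_t = -d\, \Delta_t$, so $\Delta_t = \Delta_0 e^{-d t}$. The asymmetry between the two genotypes thus decays exponentially at rate $d$, which is the intuitive reason why the two components must share the same limiting behavior in all three regimes.

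Adding the two equations of \eqref{ODEdistyle_fs} gives $\dot{s}_t = 2 \bar{r}\, n^{11}_t n^{12}_t / s_t - d\, s_t$. Using the identity $4\, n^{11}_t n^{12}_t = s_t^2 - \Delta_t^2$ and passing to $v_t := s_t^2$, I would obtain the scalar, linear, non-autonomous ODE
\begin{equation*}
\dot{v}_t \;=\; (\bar{r} - 2d)\, v_t \;-\; \bar{r}\, \Delta_0^2\, e^{-2 d t},
\end{equation*}
which is solved in closed form by the variation-of-constants formula. Using $s_0^2 - \Delta_0^2 = 4 n^{11}_0 n^{12}_0$ to simplify the constant of integration, this gives the explicit expression
\begin{equation*}
s_t^2 \;=\; 4\, n^{11}_0 n^{12}_0 \cdot e^{(\bar{r} - 2d) t} \;+\; \Delta_0^2\, e^{-2 d t}.
\end{equation*}

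The three cases of the proposition then follow at once by inspecting the sign of $\bar{r} - 2d$: if $\bar{r} < 2d$ both terms decay and $s_t \to 0$; if $\bar{r} = 2d$ the first term is constant while the second vanishes, so $s_t^2 \to 4\, n^{11}_0 n^{12}_0$, i.e.\ $s_t$ converges to $2\sqrt{n^{11}_0 n^{12}_0}>0$; if $\bar{r} > 2d$ the first term explodes. Combined with $\Delta_t \to 0$ and the inverse relations $n^{11}_t = (s_t + \Delta_t)/2$ and $n^{12}_t = (s_t - \Delta_t)/2$, this yields the three asymptotic statements, with the critical common limit $n^* = \sqrt{n^{11}_0 n^{12}_0} \in \R_+^*$.

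The main obstacle I foresee is not analytical but one of justification: one must check that the trajectory stays inside $(\R_+^*)^2$ so that the right-hand side of \eqref{ODEdistyle_fs} is well-defined for all $t\ge 0$. This is easy because on each axis $\{n^{11}=0\}$ or $\{n^{12}=0\}$ the vector field vanishes (both axes are invariant), so by uniqueness of the Cauchy problem the strictly positive quadrant is invariant, and the explicit formulas confirm $s_t > |\Delta_t| \ge 0$ for all $t$. The factor $2$ appearing in the thresholds, compared with \eqref{solutionstationnaireODE}, receives the biological interpretation the authors themselves point out: at the symmetric equilibrium each ovule meets a compatible partner only with probability $1/2$, halving the effective birth rate.
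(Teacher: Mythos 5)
Your proof is correct and takes a genuinely different, more explicit route than the paper's. The paper works with the product $z_t=n^{11}_t n^{12}_t$, observes that $\dot z_t=(\bar r-2d)z_t$, and then treats each regime by qualitative comparison arguments (differential inequalities in the sub- and supercritical cases, a monotonicity argument in the critical case). You instead diagonalize via the sum $s_t$ and the difference $\Delta_t$, solve $\dot\Delta_t=-d\Delta_t$ exactly, and integrate the resulting linear equation for $v_t=s_t^2$ in closed form, obtaining $s_t^2=4n^{11}_0n^{12}_0\,e^{(\bar r-2d)t}+\Delta_0^2\,e^{-2dt}$; this is consistent with the paper's key identity, since $z_t=(s_t^2-\Delta_t^2)/4=n^{11}_0n^{12}_0\,e^{(\bar r-2d)t}$. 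What your approach buys is a complete explicit solution: it dispenses with the comparison arguments, gives exact rates of convergence or divergence, identifies the critical limit as $n^*=\sqrt{n^{11}_0 n^{12}_0}$, and incidentally shows that the assertion that ``the size $n^{11}_t+n^{12}_t$ remains constant'' in case (ii) is literally true only for symmetric initial data: for $\Delta_0\neq 0$ one has $\dot s_t=-d\,\Delta_t^2/s_t<0$, so $s_t$ decreases to $2\sqrt{n^{11}_0n^{12}_0}$, which is a sharper statement than the paper's. Two small corrections to your justification step: the vector field does not \emph{vanish} on the axes (on $\{n^{11}=0\}$ one has $\dot n^{12}=-d\,n^{12}\neq 0$); it is merely tangent to them, which is all the invariance-by-uniqueness argument needs, and in any case your closed formula already gives $s_t^2-\Delta_t^2=4n^{11}_0n^{12}_0\,e^{(\bar r-2d)t}>0$ on the maximal interval of existence, so the trajectory never reaches the boundary and the solution is global.
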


\begin{proof}
By symmetry, a non-trivial equilibrim exists if and only if:
\begin{equation}
\bar{r}n^*=2dn^* \qquad \Leftrightarrow \qquad \bar{r}=2d.\label{solutionstationnaireODE_fs}
\end{equation}
To obtain the stability of the non-trivial equilibria, we compute the Jacobian matrix of the system (\ref{ODEdistyle_fs}) at $(n^{11},n^{12})$:
\begin{equation}
\mathfrak{J}(n^{11},n^{12})
=  \left(\begin{array}{cc}
\bar{r}\Big(\frac{n^{12}}{n^{11}+n^{12}}\Big)^2-d &
\bar{r}\Big(\frac{n^{11}}{n^{11}+n^{12}}\Big)^2\\
 \bar{r}\Big(\frac{n^{12}}{n^{11}+n^{12}}\Big)^2 &
 \bar{r}\Big(\frac{n^{11}}{n^{11}+n^{12}}\Big)^2-d
\end{array}\right)\label{matricejacobienne_fs}
\end{equation}For an equilibrium $(n^*,n^*)$, $\mathfrak{J}(n^*,n^*)=  \frac{\bar{r}}{4} J -dI,$ and we are led to a discussion similar to (\ref{etapestabilite}). A difficulty arises at the trivial equilibrium $(0,0)$ since the limit of $(n^{12}/(n^{11}+n^{12}))^2$ at $(0,0)$ in $\mathfrak{J}(0,0)$ is not defined.
To remedy for this, we consider the ODEs satisfied by the total size $n_t=n^{11}_t+n^{12}_t$ and by $z_t=n^{11}_t \times n^{12}_t$:
\begin{align}
\frac{dn_t}{dt}= & 2\bar{r}\frac{n^{11}_tn^{12}_t}{n^{11}_t+n^{12}_t}-d n_t,\qquad
\frac{dz_t}{dt}=  \frac{dn^{11}_t}{dt}n^{12}_t+n^{11}_t\frac{n^{12}_t}{dt}=  \Big(\bar{r}-2d\Big)z_t.
\end{align}As a consequence, when $\bar{r}<2d$, there is asymptotic extinction of at least one of the genotypes $\{1,1\}$ or $\{1,2\}$, let us assume that it is $n^{11}$. Since:
\begin{equation}
\frac{dn^{12}_t}{dt}\leq \bar{r}n^{11}_t-d\,n^{12}_t \leq \varepsilon-d\, n^{12}_t
\end{equation}for all $\varepsilon>0$ and for sufficiently large times, $n^{12}$ also converges to zero. \\
If $\bar{r}>2d$,$$\lim_{t\rightarrow +\infty}n^{11}_t \times n^{12}_t=+\infty.$$One at least of the two subpopulation sizes $\{1,1\}$ or $\{1,2\}$ tends to infinity. Let us assume that it is $n^{11}$ and that $n^{11}_t\geq n^{12}_t$ for sufficiently large $t$ (implying that $n^{11}_t/n_t\geq 1/2$). Writing:
\begin{equation}
\frac{dn^{12}_t}{dt}\geq \frac{\bar{r}}{2}n^{12}_t-d\,n^{12}_t,
\end{equation}we see that $n^{12}$ also necessarily tends to infinity.\\
Finally, when $\bar{r}=2d$. If $n^{11}_t<n^{12}_t$, then
$n^{11}_t/2<n^{11}_tn^{12}_t/n_t<n^{12}_t/2$, and
\begin{align*}
 \frac{dn^{11}_t}{dt}> \Big(\frac{\bar{r}}{2}-d\Big)n^{11}_t=0\qquad \mbox{ and }\qquad
 \frac{dn^{12}_t}{dt}< \Big(\frac{\bar{r}}{2}-d\Big)n^{12}_t=0.
\end{align*}Thus $n^{11}$ increases and $n^{12}$ decreases until they reach a state where $n^{11}=n^{12}$.
\end{proof}

\subsection{Comparison with self-compatibility with inbreeding depression (SCID)}\label{section:comparaison}

Now that our three models of interest have been considered, we wish to carry out comparisons with the SCID cases. Under which conditions do our models predict that SSI are advantageous (in some sense defined in the sequel) with respect to self-fertilization ?

\subsubsection{Self-compatible distylous population with inbreeding depression}\label{section:SC}

We consider here reproduction rates of the form (\ref{forme_r(Ncomp)_inbreedingdepr}). In the case of self-compatible populations with inbreeding depression, the size $n_t$ of the population in a large population limit follows the ODE:
\begin{equation}
\frac{dn_t}{dt}=\Big[(1-\delta)s \bar{r}+(1-s)\bar{r}\frac{e^{\alpha n_t}}{\beta+e^{\alpha n_t}}-d\Big]n_t.\label{equation_inbreeding}
\end{equation}

The trivial solution defines an equilibrium and:
\begin{proposition}
A non-trivial equilibrium to \eqref{equation_inbreeding} exists if and only if
\begin{equation}
\beta\big[d-\overline{r}s(1-\delta)\big]>\big[\overline{r}(1-\delta s)-d\big]>0.\label{condition3}
\end{equation}
(i) If (\ref{condition3}) is satisfied, the non-trivial equilibrium $n^*$ is given by:
 \begin{equation}n^*=\frac{1}{\alpha}\log\Big(\frac{\beta \big(d-(1-\delta)s \bar{r}\big)}{\overline{r}(1-\delta s)-d}\Big)\label{equilibre_inbreeding}\end{equation}This equilibrium is repulsive and $0$ is a locally positive attractor. If $n_0<n^*$ then the solution converges to zero, else, it grows to infinity.\\
(ii) Else, there is no non-trivial equilibrium in the positive quadrant and there are three possibilities:
  \begin{itemize}
  \item If $\bar{r}(1-\delta s)-d<0 ~ \Leftrightarrow ~ \bar{r}(1-\delta s)<d$, then there is extinction ($d$ is too large).
  \item If $ \beta[d-\bar{r}s(1-\delta)]<0 ~ \Leftrightarrow ~ \bar{r}s(1-\delta)>d$, or if $\big[\overline{r}(1-\delta s)-d\big]>\beta\big[d-\overline{r}s(1-\delta)\big]>0$, then there is growth to infinity.
\end{itemize}
\end{proposition}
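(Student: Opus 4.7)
The plan is to reduce the ODE \eqref{equation_inbreeding} to a one-dimensional autonomous equation $dn_t/dt = f(n_t)\, n_t$, where
$$f(n) = (1-\delta)s\bar{r} + (1-s)\bar{r}\frac{e^{\alpha n}}{\beta + e^{\alpha n}} - d,$$
and to study the sign and monotonicity of $f$. First I would solve $f(n^*) = 0$ by setting $x = e^{\alpha n^*}$, multiplying through by $\beta + x$, and collecting like terms. The coefficient of $x$ becomes $\bar{r}[(1-\delta)s + (1-s)] - d = \bar{r}(1-\delta s) - d$, which yields
$$x = \frac{\beta\bigl[d - (1-\delta)s\bar{r}\bigr]}{\bar{r}(1-\delta s) - d}.$$
A non-trivial equilibrium in the positive quadrant corresponds to $x > 1$, and this requires that numerator and denominator be positive and that the numerator strictly exceed the denominator. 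This is exactly condition \eqref{condition3}, and taking $\log/\alpha$ recovers the explicit form \eqref{equilibre_inbreeding}.

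Next I would establish stability by observing that $n \mapsto e^{\alpha n}/(\beta + e^{\alpha n})$ is strictly increasing, hence $f$ is strictly increasing in $n$. Under \eqref{condition3}, the unique positive root $n^*$ separates $\{f<0\}$ from $\{f>0\}$, so $f(n_t) n_t < 0$ for $n_t \in (0,n^*)$ and $f(n_t) n_t > 0$ for $n_t > n^*$. This shows that $n^*$ is a repulsive equilibrium and $0$ is a locally positive attractor, as claimed in (i). Comparison with the linear ODE $dm_t/dt = f(n^*\pm \varepsilon) m_t$ on each side of $n^*$ gives the convergence of $n_t$ to $0$ or $+\infty$ depending on whether $n_0 < n^*$ or $n_0 > n^*$.

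For part (ii) I would split according to which half of \eqref{condition3} is violated. If $\bar{r}(1-\delta s) < d$, then $\sup_{n \ge 0} f(n) = \bar{r}(1-\delta s) - d < 0$, hence $f(n_t) n_t < 0$ for all $n_t > 0$ and we get extinction by comparison with the linear decay $dm/dt = [\bar{r}(1-\delta s)-d]m$. If $\bar{r} s (1-\delta) > d$, then $\inf_{n\ge 0} f(n) = f(0) > (1-\delta)s\bar{r} - d > 0$, and $n_t$ grows to infinity, by the symmetric linear comparison. Finally, when $[\bar{r}(1-\delta s)-d] > \beta[d - \bar{r}s(1-\delta)] > 0$, the algebraic value $x$ satisfies $0 < x < 1$, so the formal root $n^\dagger = \alpha^{-1}\log x$ is negative; on the positive quadrant $f$ is then everywhere positive (by monotonicity, since $f(n^\dagger)=0$ and $n^\dagger < 0 \le n$), giving growth to infinity once again.

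The only genuinely delicate step is making sure the three subcases in (ii) exhaust, without overlap, the complement of \eqref{condition3} within the biologically admissible parameter region $\{s,\delta\in[0,1],\,\bar r,d,\alpha,\beta>0\}$: a short table of signs of $\bar{r}(1-\delta s)-d$ and of $d - \bar{r}s(1-\delta)$ (noting that the latter is automatically positive whenever the former is, since $1-\delta s \ge s(1-\delta)$) settles this routinely. Everything else reduces to sign analysis of the increasing function $f$ and Gr\"onwall-type comparisons with linear ODEs.
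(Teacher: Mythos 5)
Your proof is correct and follows essentially the same route as the paper: solve $r(n^*)=d-(1-\delta)s\bar r$ for $e^{\alpha n^*}$, note that the positive-quadrant requirement $e^{\alpha n^*}>1$ is exactly \eqref{condition3}, deduce that $n^*$ is repulsive and $0$ attracting from the monotonicity of the reproduction rate (the paper phrases this as the sign of $r'(n^*)n^*>0$ via linearization), and handle the remaining parameter regions by comparison with linear ODEs. One small slip in your exhaustiveness check: the implication should read that $d-\bar r s(1-\delta)$ is automatically positive whenever $\bar r(1-\delta s)-d$ is \emph{negative} (not positive) --- this is what $1-\delta s\ge s(1-\delta)$ actually yields, and it is precisely what you need both to exclude the case where numerator and denominator of $x$ are simultaneously negative and to see that the first two bullets of (ii) are disjoint.
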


\begin{proof}
A non-trivial equilibrium $n^*$ of (\ref{equation_inbreeding}) satisfies necessarily:
\begin{align}
(1-s)\bar{r}\frac{e^{\alpha n^*}}{\beta+e^{\alpha n^*}}=d-(1-\delta)s \bar{r}\quad \Leftrightarrow &\quad n^*=\frac{1}{\alpha}\log\Big(\frac{\beta \big(d-(1-\delta)s \bar{r}\big)}{(1-s)\bar{r}-d+(1-\delta)s \bar{r}}\Big)\nonumber
\end{align}which provides (\ref{equilibre_inbreeding}) if the log is well defined:
\begin{equation}
(1-\delta)s\overline{r}<d<(1-\delta s)\overline{r}.
\label{condition_existence}
\end{equation}By comparison arguments, we can prove that for $d>(1-\delta s)\overline{r}$, every solution converges to 0 as the square bracket in the r.h.s. of (\ref{equation_inbreeding}) is strictly negative. For $d<(1-\delta)s\overline{r}$, this bracket remains strictly positive and the solutions converge to $+\infty$. Under the condition (\ref{condition_existence}), $n^*$ is strictly positive if and only if
\begin{align}
\beta(d-(1-\delta)s\overline{r})>(1-\delta s)\overline{r}-d ~ \Leftrightarrow ~ & (1+\beta)d>\overline{r}\big(1-\delta s +\beta\, s(1-\delta) \big)\nonumber\\
\Leftrightarrow ~  & \beta\big[d-\overline{r}s(1-\delta)\big]>\big[\overline{r}(1-\delta s)-d\big].\label{condition2}
\end{align}When \eqref{condition_existence} is satisfied, the brackets in (\ref{condition2}) are positive. Equation \eqref{condition2} says whether the parameter $d$ in (\ref{condition_existence}) is closer to the lower bound of \eqref{condition_existence} (\ie $0<\beta(d-\bar{r}s(1-\delta))<\bar{r}(1-\delta s)-d$, implying growth of the population size to infinity) or of its upper bound (\ie \eqref{condition3}, under which there exists a non-trivial equilibrium). Hence, we obtain \eqref{condition3} as sufficient and necessary condition for the existence of a non-trivial equilibrium in the positive quarter plane.\\

 With \eqref{condition3}, the situation is similar to (\ref{sol_stationnaire_ex3}) with $d-(1-\delta)s\overline{r}$ instead of $d$ and $(1-s)\overline{r}$ instead of $\overline{r}$. Self-fertilization amounts to a reduction of the natural mortality since individuals can at least mate with themselves. However, this introduces a limitation of the maximal number of individuals produced by outcrossings. \\

To study the stability of the equilibria $0$ and $n^*$, we linearize the system and repeat the computation of (\ref{jacobienne(0,0)}) and (\ref{etapestabilite}). The stability of $n^*$ depends on the sign of $r(n^*)+r'(n^*)n^*-d=r'(n^*)n^*$ which is here always positive by (\ref{calcul_r'}) and by the remark of the preceeding paragraph. Hence, $n^*$ is a negative attractor. The trivial equilibrium $0$ is a positive attractor if and only if
\begin{align*}
r(0)<d ~ \Leftrightarrow ~ & (1-\delta)s\overline{r}+\frac{(1-s)\overline{r}}{1+\beta}<d ~ \Leftrightarrow ~ \big(1-\delta s +\beta s (1-\delta)\big)\overline{r}<d(1+\beta).
\end{align*}We recognize here the condition (\ref{condition2}) of existence of a non-trivial equilibrium, which leads to the announced result (i). In the case where \eqref{condition3} is not fulfilled, the behavior of the solutions is obtained by comparisons with simple ODEs.
\end{proof}

\subsubsection{Comparison with the self-incompatible case}

It is natural to compare the self-compatible model of Section \ref{section:SC} with the SI distylous model of Section \ref{stationarysol:dependence}, with the same reproduction rate \eqref{functionrN}. To carry the comparison, we introduce the following criteria:
\begin{itemize}
 \item We compare, when they exist, the population sizes at the non-trivial equilibria. Heuristically, these sizes provide the limit between the two extreme behaviors that are growth to infinity and extinction. When this equilibrium size is high, large populations are needed to avoid extinction and the population is more ``fragile''. The corresponding model will be said to be less advantageous.
\item We can also compare the range of parameters $\bar{r}$ and $d$ (for fixed $s$ and $\delta$) for which the population goes extinct. We will say that with respect to this second criterion, the more advantageaous conditions correspond to the smaller ranges of such parameters.
\end{itemize}

\noindent We sum up the results of Sections \ref{stationarysol:dependence} and \ref{section:SC} in Table \ref{tab1}.\\
\begin{table}[!ht]
\begin{center}
\begin{tabular}{|l|l|l|}
\hline
Case & Case description & Behavior of the population size\\
\hline
\multicolumn{3}{|c|}{Wright's model (Model 1)}\\
\hline
(a.1) & $\bar{r}<d$ & Convergence to 0\\
(b.1) & $\bar{r}=d$ & Convergence to a non-trivial fixed point\\
(c.1) & $\bar{r}>d$ & Divergence to $+\infty$\\
\hline
\multicolumn{3}{|c|}{Dependence model (Model 2)}\\
\hline
(a.2) & $\bar{r}<d$ & Convergence to 0\\
(b.2) & $d<\bar{r}<d(1+\beta)$ & Existence of a saddle point\\
(c.2) & $\bar{r}>d(1+\beta)$ & Divergence to $+\infty$\\
\hline
\multicolumn{3}{|c|}{Fecundity selection model (Model 3)}\\
\hline
(a.3) & $\bar{r}<2d$ & Convergence to 0\\
(b.3) & $\bar{r}=2d$ & Convergence to a non-trivial fixed point\\
(c.3) & $\bar{r}>2d$ & Divergence to $+\infty$\\
\hline
\multicolumn{3}{|c|}{Self-compatible model without inbreeding depression (Model 4)}\\
\hline
(a.4) & $\bar{r}<d$ & Convergence to 0\\
(b.4) & $\bar{r}=d$ & Convergence to a non-trivial fixed point\\
(c.4) & $\bar{r}>d$ & Divergence to $+\infty$\\
\hline
\multicolumn{3}{|c|}{Self-compatible with inbreeding depression (SCID) model (Model 5)}\\
\hline
(a.5) & $(1-\delta s) \bar{r}<d$  & Convergence to 0\\
(b.5) & $\beta(d-\bar{r}s(1-\delta))>\bar{r}(1-\delta s)-d>0$ & Existence of a repulsive equilibrium \\
(c.5) & $\bar{r}>d \min\big(\frac{1}{(1-\delta)s},\frac{1+\beta}{1+\beta s -s\delta (1+\beta)}\big)$ & Divergence to $+\infty$\\
\hline
\end{tabular}
\caption{{\small \textit{Summary of the behavior of the population size in the Sections \ref{stationarysol:dependence} and \ref{section:SC} depending on the respective values of $\bar{r}$, $d$, $s$ and $\delta$. The condition for Case (c.5) is obtained by saying that we have divergence to $+\infty$ in the SCID model if $(1-\delta)s \bar{r}>d$ or if $(1-\delta)s \bar{r}<d<\bar{r}s(1-\delta)$ and $\bar{r}(1-\delta s)-d>\beta(d-\bar{r}s(1-\delta))>0$.}}}\label{tab1}
\end{center}
\end{table}

Models 1, 3 and 4 are similar. We have convergence to zero or divergence to infinity except in the particular cases when $\bar{r}=d$ or $\bar{r}=2d$. This shows that under fecundity selection, distylous species are more fragile than under Wright's model or self-compatibility without inbreeding depression. This is expected since there is no pollen limitation or self-incompatibility in the two latter models. Notice also that in absence of pollen limitation, the equilibria and critical, sub- and supercritical regions correspond in Models 1 and 4.\\
In the sequel, we compare Models 2 and 5, for which there exists a range of parameters for which extinction or divergence to infinity coincide, depending on the initial condition. In both cases, the pollen limitation is modeled similarly (see \eqref{functionrN}). The difference relies on the penalization by SI in the first case and by inbreeding depression in the second case. This comparison provides conditions on the parameters under which distyly is advantageous on self-fertilization.

\paragraph{Comparison with respect to the range of parameters}

The cases (a.2), \dots, (c.2), (a.5), \dots (c.5) are defined in Table \ref{tab1}.
Let us consider the set of parameters for which the population goes extinct whatever its initial condition (Cases (a.2) and (a.5)). Since $\delta>0$ and $s>0$, we see that this region is larger for the SCID model which in this respect appears as less avantadgeous than the SI model. The parameter $(1-\delta s)$ in Case (a.5) can indeed be interpreted as the proportion of seeds which survived, \ie when excluding the fraction of non-viable seeds produced by self-fertilization. This term thus appears as an extra death parameter that is not present in the SI model. In our large population setting, this penalty is more important than the loss of partners that may face individuals in the SI model. \\

We now turn to the Cases (c.2) and (c.5) where the population size diverges to infinity. The SI model is advantageous on the SCID model if and only if:
\begin{align*}
 & (1+\beta)\leq \min\big(\frac{1}{(1-\delta)s},\frac{1+\beta}{1+\beta s -s\delta (1+\beta)}\big)\\
\Leftrightarrow \quad & \frac{1}{1+\beta}\geq \min\big(s(1-\delta),1-\delta\big)=s(1-\delta).
\end{align*}This condition shows that for low self-fertilization efficiency, \ie for large inbreeding depression, it is less advantageous than SI. The latter efficiency is expressed by comparing the fraction of viable seeds produced by self-fertilization to the initial fraction $1/(1+\beta)=r(0)/\bar{r}$ of seeds produced without self-fertilization when only an infinitesimal quantity of compatible individuals is present.

\paragraph{Comparison with respect to the sizes of the population at equilibrium}

Let us now consider the cases when both (b.2) and (b.5) are satisfied. When there exists a non-trivial equilibrium, we have seen that the behavior of the solution is determined by its initial condition. The size of the equilibrium provides an idea of how many individuals are necessary to allow survival, even if in cases as in \eqref{sol_stationnaire_ex3}, the symmetry of the initial condition may matter. Let us thus compare the sizes at equilibrium in (\ref{sol_stationnaire_ex3}) and (\ref{equilibre_inbreeding}). The size of the population at equilibrium in the SCID model is equal to the size of the population in the SI model when:
\begin{align}
 \frac{1}{\alpha}\log\Big(\frac{\beta \big(d-(1-\delta)s \bar{r}\big)}{(1-\delta s)\bar{r}-d}\Big) =  \frac{2}{\alpha}\log\Big(\frac{\beta d}{\bar{r}-d}\Big)~ \Leftrightarrow ~ & \frac{\beta \big(d-(1-\delta)s \bar{r}\big)}{(1-\delta s)\bar{r}-d}= \Big(\frac{\beta d}{\bar{r}-d}\Big)^2.\label{comparaison}
\end{align}Let us study:
\begin{equation}
f(s,\delta)=\frac{\beta d - \beta \overline{r}(1-\delta) s}{(\overline{r}-d)-\overline{r}\delta s}.
\end{equation}Notice that $s=0$ corresponds to the self-compatible case without inbreeding depression: $f(0,\delta)=\beta d/(\overline{r}-d)$. The case $s=1$ provides a model of compatible population with reproduction rate $\bar{r}(1-\delta)$ and we have $f(1,\delta)=-\beta$. For any given $\delta\in [0,1]$, $f(.,\delta)$ is a rational function of $s$, defined on $[0,1]\setminus \{s_0\}$ with $s_0:=(\overline{r}-d)/\overline{r}\delta$. The latter value belongs to $[0,1]$ if and only if
\begin{equation}
d<\overline{r}<\frac{d}{1-\delta}.
\end{equation}
On the domain $[0,1]\setminus \{s_0\}$:
\begin{equation}
\partial_s f(s,\delta)=-\frac{\beta \overline{r}(\overline{r}(1-\delta)-d)}{((\overline{r}-d)-\overline{r}\delta s)^2},
\end{equation}which has the same sign as $d-\overline{r}(1-\delta)$.
\par In a nutshell,
\begin{itemize}
\item If $d<\overline{r}(1-\delta)$, the $s\mapsto f(s,\delta)$ is a decreasing continuous function on $[0,1]$, bounded above by $f(0,\delta)=\beta d/(\overline{r}-d)<(\beta d/(\overline{r}-d))^2$ (which is larger than 1 in Case (b.2)). In this case, there is no solution to (\ref{comparaison}). The size at equilibrium of the model with self-fertilization is always lower than its counterpart with self-incompatible reproduction: the cost of self-incompatibility may not be counterbalanced in this case.
\item If $d>\overline{r}(1-\delta)$, then $f(.,\delta)$ is an increasing function from $[0,s_0)$ into $[f(0,\delta),+\infty)$ and from $(s_0,1]$ to $(-\infty,-\beta]$. Thus, $s_0$ is the highest fraction of self-fertilization that allows the existence of an equilibrium (\ref{equilibre_inbreeding}) and we will thus only consider $s<s_0$. There exists a unique solution $r_0\in [0,s_0)$ to \eqref{comparaison} given by:
$$r_0=\frac{\Big(\frac{\beta d}{\bar{r}-d}\Big)^2(\bar{r}-d)-\beta d}{\bar{r}\Big(\delta \Big(\frac{\beta d}{\bar{r}-d}\Big)^2-\beta(1-\delta)\Big)}.$$
For any $s\in (r_0,s_0)$, the size of the equilibrium in the self-incompatible case is smaller than its counterpart with possible self-inbreeding. This means that if the fraction of offspring produced by self-fertilization is too high, then it is advantageous to switch to self-incompatibility.
\end{itemize}

\section{Small distylous populations}\label{section:smallpop}

We now consider cases where the deterministic approximation may not be taken and stick with random walks. In this section, we focus on the cases with constant rate of ovule production (Wright's model (Model 1), the fecundity selection model (Model 3) and their counterpart with possible self-fertilization (Model 4)) for which computations are tractable. Simulations for the other models are carried in Section \ref{section:simulations}. We are interested in the study of extinction probabilities. A difficulty comes from the fact that the random walks are inhomogeneous. After providing an equation for the extinction probabilities, we proceed by couplings to obtain approximations of these quantities. We construct processes defined on the same probability space as the \textit{original process} $(N^{11}_t,N^{12}_t)_{t\in \R_+}$. These processes will be called \textit{auxiliary processes}. We look for couplings such that the extinction probabilities for the auxiliary processes are easier to compute. For lectures on coupling, see \eg \cite{lindvall}. As in Section \ref{section:comparaison}, we close the section with a table that sums up our results.

We denote by $(T_k)_{k\in \N}$ the successive jump times of the process $(N^{11}_t,N^{12}_t)_{t\geq 0}$, with the convention $T_0=0$ and $T_{k+1}=+\infty$ if $N^{11}_{T_k}=N^{12}_{T_k}=0$. We will be led to consider the continuous time process $(N^{11}_t,N^{12}_t)_{t\geq 0}$ as well as the discrete time process $(N^{11}_{T_k},N^{12}_{T_k})_{k\in \N}$.\\
Recall also that the time at which the process reaches the horizontal or vertical axes has been denoted by $\tau$ (see \eqref{def:tau}).

The denominations for super and subcritical cases are used for small populations but with some slight differences with the cases considered in Section \ref{section:largepop}: even when the reproduction rate is very high in comparison of the death rate, there may always be a probability of extinction by demographic stochasticity. With the terminology of branching processes \citep[\eg][]{athreyaney}, we say that we are in the supercritical case if there is a positive probability of survival. In the subcritical case, there is almost sure extinction.

\subsection{Wright's model (Model 1) in a small population}

\unitlength=1cm
\begin{figure}[ht]
  \begin{center}
\begin{minipage}[b]{.45\textwidth}\centering
    \begin{picture}(4,4)
    \put(1,1){\vector(1,0){3}}
    \put(1,1){\vector(0,1){3}}
    \put(3.8,0.6){$N^{11}$}
    \put(0.3,3.6){$N^{12}$}
    \put(3,3){\vector(1,0){1}}
    \put(3,3){\vector(-1,0){1}}
    \put(3,3){\vector(0,1){1}}
    \put(3,3){\vector(0,-1){1}}
    \put(3.8,2.6){$\bar{r}\frac{i+j}{2}$}
    \put(2,2.6){$d\,i$}
    \put(3.1,2){$d\,j$}
    \put(3.1,3.8){$\bar{r}\frac{i+j}{2}$}
    \dottedline(1,3)(3,3)
    \dottedline(3,1)(3,3)
    \put(0.4,2.8){$j$}
    \put(2.8,0.6){$i$}
    \end{picture}
\end{minipage}
\begin{minipage}[b]{.45\textwidth}\centering
    \begin{picture}(4,4)
    \put(1,1){\vector(1,0){3}}
    \put(1,1){\vector(0,1){3}}
    \put(3.8,0.6){$N^{11}$}
    \put(0.3,3.6){$N^{12}$}
    \put(3,3){\vector(1,0){1}}
    \put(3,3){\vector(-1,0){1}}
    \put(3,3){\vector(0,1){1}}
    \put(3,3){\vector(0,-1){1}}
    \put(4.2,3){$\frac{\bar{r}}{2(\bar{r}+d)}$}
    \put(1.1,2.6){$\frac{d\,i}{(\bar{r}+d)(i+j)}$}
    \put(3.1,2){$\frac{d\,j}{(\bar{r}+d)(i+j)}$}
    \put(3.1,3.8){$\frac{\bar{r}}{2(\bar{r}+d)}$}
    \dottedline(1,3)(3,3)
    \dottedline(3,1)(3,3)
    \put(0.4,2.8){$j$}
    \put(2.8,0.6){$i$}
    \end{picture}\end{minipage}
  \vspace{-0.3cm}
  \caption{\textit{Evolution of the distylous system $(N^{11}_t,N^{12}_t)_{t\in \R_+}$ in Wright's model (Model 1). Rates of events (for positive $i$'s and $j$'s) are pictured on the left, while transition probabilities of the embedded Markov chain are represented on the right.}}\label{Fig4}
  \end{center}
\end{figure}
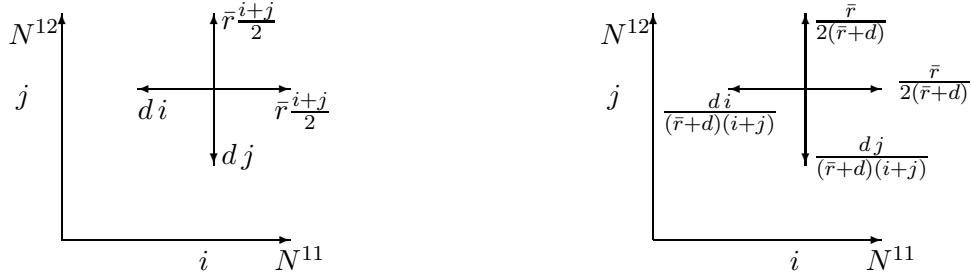

\subsubsection{Extinction probabilities: recurrence equation}\label{section:recurrence}

A difficulty comes from the fact that the transition rates of $(N^{11}_t,N^{12}_t)_{t\in \R_+}$ and the transition probabilities of the associated discrete time Markov chain vary with the state of the population (see Fig. \ref{Fig4}). Techniques developed in the literature of random walks on positive quadrants usually focus on homogeneous random walks \citep[\eg][]{kleinhaneveldpittenger,fayollemalyshevmenshikov,kurkovaraschel}. \\


For $i,j\in \N$, we denote by $p_{i,j}=\P_{ij}(\exists t\geq 0,\, N^{11}_t=0 \mbox{ or }N^{12}_t=0)$, where $\P_{ij}$ means that we start with the initial condition $N^{11}_0=i$ and $N^{12}_0=j$. By symmetry arguments, we have:
\begin{equation}
\forall i,j\in \N,\quad p_{i,j}=p_{j,i}.\label{symmetry}
\end{equation}Moreover,
\begin{equation}
\mbox{when }i=0\mbox{ or }j=0,\quad p_{i,j}=1.\label{boundary_condition_dirichlet}
\end{equation}
We begin with a recurrence equation satisfied by these extinction probabilities:
\begin{proposition}\label{prop_dirichlet}(i) The extinction probabilities $p_{i,j}$ for $i,j\in \N^*$ satisfy the following recurrence equations:
\begin{align}
p_{i,j}= & \frac{di}{(\bar{r}+d)(i+j)}p_{i-1,j}+\frac{dj}{(\bar{r}+d)(i+j)}p_{i,j-1}\nonumber\\
+ & \frac{\bar{r}}{2(\bar{r}+d)}p_{i,j+1}+\frac{\bar{r}}{2(\bar{r}+d)}p_{i+1,j}.\label{eq3}
\end{align}
The family $(p_{i,j})_{i,j\in \N}$ is a solution of the Dirichlet problem (\ref{eq3}) with boundary condition (\ref{boundary_condition_dirichlet}). Uniqueness of the solution may not hold, but the extinction probabilities $(p_{i,j})_{i,j\in \N}$ define
the smallest positive solution of this problem.\\
(ii) Assume that the probabilities $p_{i,1}$ for $i\in \N^*$ are given. Then, the other probabilities $p_{i,j}$ are completely determined.
\end{proposition}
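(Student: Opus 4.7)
The plan is to derive (i) from the strong Markov property applied at the first jump of the continuous-time chain, and to derive (ii) by reorganizing the recurrence to express $p_{i,j+1}$ in terms of quantities on lower rows, exploiting the symmetry (\ref{symmetry}) to supply the initial column.

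For Part (i), I would first read off from Figure \ref{Fig4} the total jump rate out of state $(i,j)$ with $i,j\geq 1$, namely $\bar{r}(i+j)+d(i+j)=(\bar{r}+d)(i+j)$, and compute each of the four one-step transition probabilities: $di/((\bar{r}+d)(i+j))$ towards $(i-1,j)$, $dj/((\bar{r}+d)(i+j))$ towards $(i,j-1)$, and $\bar{r}/(2(\bar{r}+d))$ towards each of $(i+1,j)$ and $(i,j+1)$. Applying the strong Markov property at the first jump time $T_1$ and decomposing the event $\{\tau<+\infty\}$ according to the value of $(N^{11}_{T_1},N^{12}_{T_1})$ yields (\ref{eq3}). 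The boundary condition (\ref{boundary_condition_dirichlet}) is immediate since $\tau=0$ on the axes. For minimality, I would let $(q_{i,j})_{i,j\in\N}$ be any nonnegative solution of (\ref{eq3})--(\ref{boundary_condition_dirichlet}) and prove by induction on $n$ that $q_{i,j}\geq \P_{ij}(\tau\leq T_n)$: the case $n=0$ uses $q\geq 0$ on the interior and $q=1$ on the boundary, while the induction step uses (\ref{eq3}) at the interior point $(i,j)$ together with the strong Markov property. Letting $n\to+\infty$ and using $\{\tau<+\infty\}=\cup_n\{\tau\leq T_n\}$ (together with the fact that $T_n\to\tau$ on $\{\tau<+\infty\}$) gives $q_{i,j}\geq p_{i,j}$.

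For Part (ii), I would rearrange (\ref{eq3}) to isolate the ``northernmost'' unknown $p_{i,j+1}$:
\begin{equation*}
p_{i,j+1}=\frac{2(\bar{r}+d)}{\bar{r}}\,p_{i,j}-\frac{2di}{\bar{r}(i+j)}\,p_{i-1,j}-\frac{2dj}{\bar{r}(i+j)}\,p_{i,j-1}-p_{i+1,j}.
\end{equation*}
The idea is an induction on $j\geq 1$: assuming that $p_{i,j'}$ has been determined for every $i\geq 1$ and every $j'\leq j$, the right-hand side above uses only such values and hence yields $p_{i,j+1}$ for every $i\geq 1$. The base of the induction needs both the row $j=1$ (given by hypothesis) and the row $j=0$ (given by (\ref{boundary_condition_dirichlet})); the term $p_{i-1,j}$ at $i=1$ is also supplied by (\ref{boundary_condition_dirichlet}). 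It remains to check that the recursion is well-posed at $i=1$: there we need $p_{1,j+1}$, but by the symmetry relation (\ref{symmetry}) this equals $p_{j+1,1}$, which is part of the given data, so no circularity arises.

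The only genuinely delicate point is the minimality statement in (i): one must be careful that the recurrence (\ref{eq3}) a priori characterizes $p$ only up to harmonic functions of the chain killed on the axes, which need not be unique in general. The induction $q_{i,j}\geq\P_{ij}(\tau\leq T_n)$ followed by passage to the limit is the standard way to pin down the minimal solution, and it is what I would write out explicitly. Part (ii), by contrast, is a direct algebraic consequence of (\ref{eq3}) and (\ref{symmetry}) and should be essentially a short calculation.
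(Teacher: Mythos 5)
Your proposal is correct and follows essentially the same route the paper indicates: its proof is a one-line reference to the classical first-step analysis of the embedded Markov chain (Baldi--Mazliak--Priouret, Revuz), which is exactly what you carry out, including the standard monotone induction $q_{i,j}\geq \P_{ij}(\tau\leq T_n)$ followed by $n\to+\infty$ to identify the extinction probabilities as the smallest positive solution. The only quibble is that in (ii) there is no circularity to resolve at $i=1$: your rearranged recurrence expresses $p_{i,j+1}$ entirely in terms of entries on rows $j$ and $j-1$, so the symmetry (\ref{symmetry}) is not actually needed there.
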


\begin{proof}We refer to \citet{lafitterascheltran} for the proof, which follows classical arguments found in \citet{baldimazliakpriouret} or \citet{revuz}: for \eqref{eq3}, we consider the embedded Markov chain and apply the Markov property at $t=1$.
\end{proof}

The recurrence equation (\ref{eq3}) may not admit a unique positive solution. Point (ii) of Prop. \ref{prop_dirichlet} tells us that to every boundary condition $(p_{i,1},i\in \N)$ corresponds a solution. Point (i) tells us that the smallest positive solution gives the extinction probabilities $p_{i,j}$. The computation of the $p_{i,j}$'s and the determination of the probabilities $p_{i,1}$'s is a work in progress by \citet*{lafitterascheltran}.\\
By coupling and comparison techniques, we manage to define subcritical, critical and supercritical regimes, where there is almost sure extinction or positive probability of survival of the random walk killed at the boundary.

\subsubsection{Extinction probabilities: coupling approach}

It is difficult to solve \eqref{eq3}. Our purpose is to approximate the extinction probability with couplings. Lower and upper bounds are provided by:
\begin{proposition}\label{propencadrement}
Assume that $(N_0^{11},N_0^{12})=(i,j)$ is a pair of positive integers.  \\
(i) The population goes extinct in finite time almost surely if and only if $\bar{r}\leq d$.\\
\noindent (ii) If $\bar{r}>d$, there is a strictly positive probability of survival $1-p_{i,j}$ when starting from any $i,j\in \N^*$ and the extinction probability $p_{i,j}$ satisfies:
\begin{equation}
\left(\frac{d}{\bar{r}}\right)^{i+j} \leq p_{i,j}\leq \left(\frac{d}{\bar{r}}\right)^{i}+\left(\frac{d}{\bar{r}}\right)^{j}-\left(\frac{d}{\bar{r}}\right)^{i+j}\label{encadrement}
\end{equation}
\end{proposition}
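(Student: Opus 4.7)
My plan is to handle the lower bound and the sub-/critical direction of (i) by a direct dominance coupling, and to obtain the upper bound from a bounded supermartingale combined with optional stopping.

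In the interior $\{N^{11}_t>0,\,N^{12}_t>0\}$, the total $N_t:=N^{11}_t+N^{12}_t$ jumps $+1$ at rate $\bar r\,N_t$ and $-1$ at rate $d\,N_t$, whereas on the boundary only downward jumps remain. Driving $N_t$ and a classical linear birth--death process $\hat N_t$ of per-capita rates $(\bar r,d)$, started from $i+j$, by a single Poisson point measure---and letting $\hat N$ retain those reproductions which Wright's model switches off after $\tau$---one obtains a coupling with $N_t\le\hat N_t$ for all $t\ge 0$. When $\bar r\le d$ the classical BD $\hat N$ reaches $0$ almost surely, hence so does $N_t$: this is the ``if'' direction of~(i). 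When $\bar r>d$, the inclusion $\{\hat N\to 0\}\subseteq\{N\to 0\}=\{\tau<\infty\}$ together with the classical value $\P(\hat N\to 0)=(d/\bar r)^{i+j}$ yields the left-hand inequality of~(ii).

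For the upper bound, set $\rho:=d/\bar r$ and introduce
\[
f(i,j):=\rho^i+\rho^j-\rho^{i+j}=1-(1-\rho^i)(1-\rho^j),
\]
which takes values in $[0,1]$ and equals $1$ on both axes. A direct computation of the generator $\mathcal L$ of $(N^{11},N^{12})$ applied to $f$, using the identity $d\rho^{k-1}=\bar r\rho^k$ to combine death and birth contributions, collapses the four jump terms to
\[
\mathcal L f(i,j)=\frac{(1-\rho)\bar r}{2}\,(i-j)(\rho^i-\rho^j).
\]
Since $\rho\in(0,1)$ when $\bar r>d$, the factors $i-j$ and $\rho^i-\rho^j$ always have opposite signs, so $\mathcal L f\le 0$ and $f(N^{11}_t,N^{12}_t)$ is a bounded supermartingale. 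Optional stopping at $\tau\wedge t$ gives $\E\bigl[f(N^{11}_{t\wedge\tau},N^{12}_{t\wedge\tau})\bigr]\le f(i,j)$; splitting according to $\{\tau<t\}$ (where $f\equiv 1$ on the axes) and $\{\tau\ge t\}$ (where $f\ge 0$) yields $\P(\tau<t)\le f(i,j)$, and letting $t\to+\infty$ gives $p_{i,j}\le f(i,j)$, the right-hand inequality of~(ii). Finally, for $i,j\ge 1$ and $\rho<1$ one has $f(i,j)=1-(1-\rho^i)(1-\rho^j)<1$, so $1-p_{i,j}>0$, which provides the survival statement in~(ii) and the ``only if'' direction of~(i).

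The main work lies in the generator calculation. Writing $a:=\rho^i(1-\rho^j)$ and $b:=\rho^j(1-\rho^i)$, the birth contribution to $\mathcal L f$ is $-\tfrac{1}{2}\bar r(i+j)(1-\rho)(a+b)$ and, after substituting $d\rho^{k-1}=\bar r\rho^k$, the death contribution is $\bar r(1-\rho)(i\,a+j\,b)$. Summing yields $\tfrac{1}{2}(1-\rho)\bar r(i-j)(a-b)$, and the identity $a-b=\rho^i-\rho^j$ produces the announced formula. Once $\mathcal L f\le 0$ is in hand, the coupling and optional stopping portions are standard.
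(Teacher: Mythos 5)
Your argument is correct, and it splits naturally into a part that mirrors the paper and a part that does not. For the lower bound and for part (i) you use exactly the paper's device: dominate the original process by the linear birth--death process obtained by keeping the reproduction terms switched on after $\tau$ (the paper's $(\widetilde N^{11},\widetilde N^{12})$, a two-type Markov branching process whose total size is a linear BD process with rates $\bar r$ and $d$), so that extinction of the dominating process forces extinction of the original one; this yields almost sure extinction when $\bar r\le d$ and the bound $p_{i,j}\ge(d/\bar r)^{i+j}$ when $\bar r>d$. For the upper bound, however, you take a genuinely different route. The paper identifies $f(i,j)=\rho^i+\rho^j-\rho^{i+j}$ as the exit probability of the \emph{homogeneous} walk $(\widehat N^{11},\widehat N^{12})$ (whose two coordinates of the embedded chain move independently), observes that a naive pathwise coupling with the original walk fails, and then builds a rather delicate coupling with two symmetric copies $(\widehat N^{\cdot,+},\widehat N^{\cdot,-})$, one in each octant, to force the auxiliary processes to hit the axes first. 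You bypass all of this by checking directly that the same function $f$ is superharmonic for the generator of the original inhomogeneous walk: your computation $\mathcal L f(i,j)=\tfrac{1}{2}(1-\rho)\bar r\,(i-j)(\rho^i-\rho^j)\le 0$ is correct (I verified the increments $f(i\pm1,j)-f(i,j)=\mp(1-\rho)\rho^{\,i-\frac{1\mp1}{2}}(1-\rho^j)$ and the cancellation via $d/\rho=\bar r$), $f$ is bounded in $[0,1]$ and equals $1$ on the axes, the process does not explode since its jump rates grow at most linearly, and optional stopping at $t\wedge\tau$ then gives $p_{i,j}=\P(\tau<\infty)\le f(i,j)$. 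Your approach is shorter and arguably more transparent --- it explains \emph{why} the homogeneous walk's exit probability bounds the inhomogeneous one, namely because that exit probability is a supersolution of the Dirichlet problem \eqref{eq3} --- whereas the paper's coupling is pathwise and therefore also yields the stronger statement that the auxiliary processes reach the axes no later than the original one on every sample path, which is of independent interest for comparing hitting times. Both proofs are complete; yours trades the combinatorial coupling for a one-line verification of superharmonicity.
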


\begin{figure}[!ht]
\begin{center}
\begin{tabular}{cc}
(a) & (b)  \\
 & \\
 \includegraphics[width=0.40\textwidth,height=0.23\textheight,angle=0,trim=1cm 1cm 1cm 1cm]{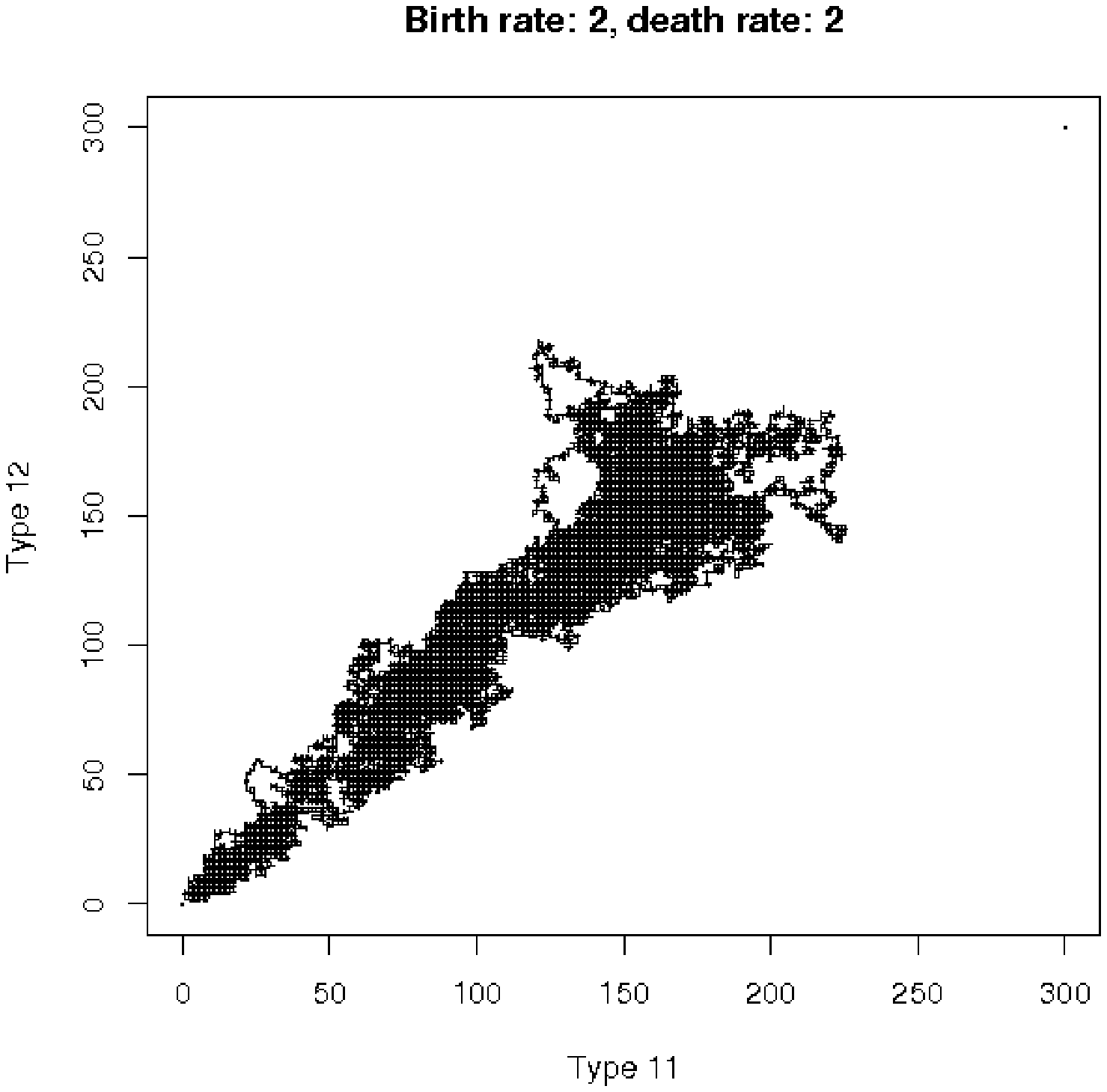}
 &
\includegraphics[width=0.40\textwidth,height=0.23\textheight,angle=0,trim=1cm 1cm 1cm 1cm]{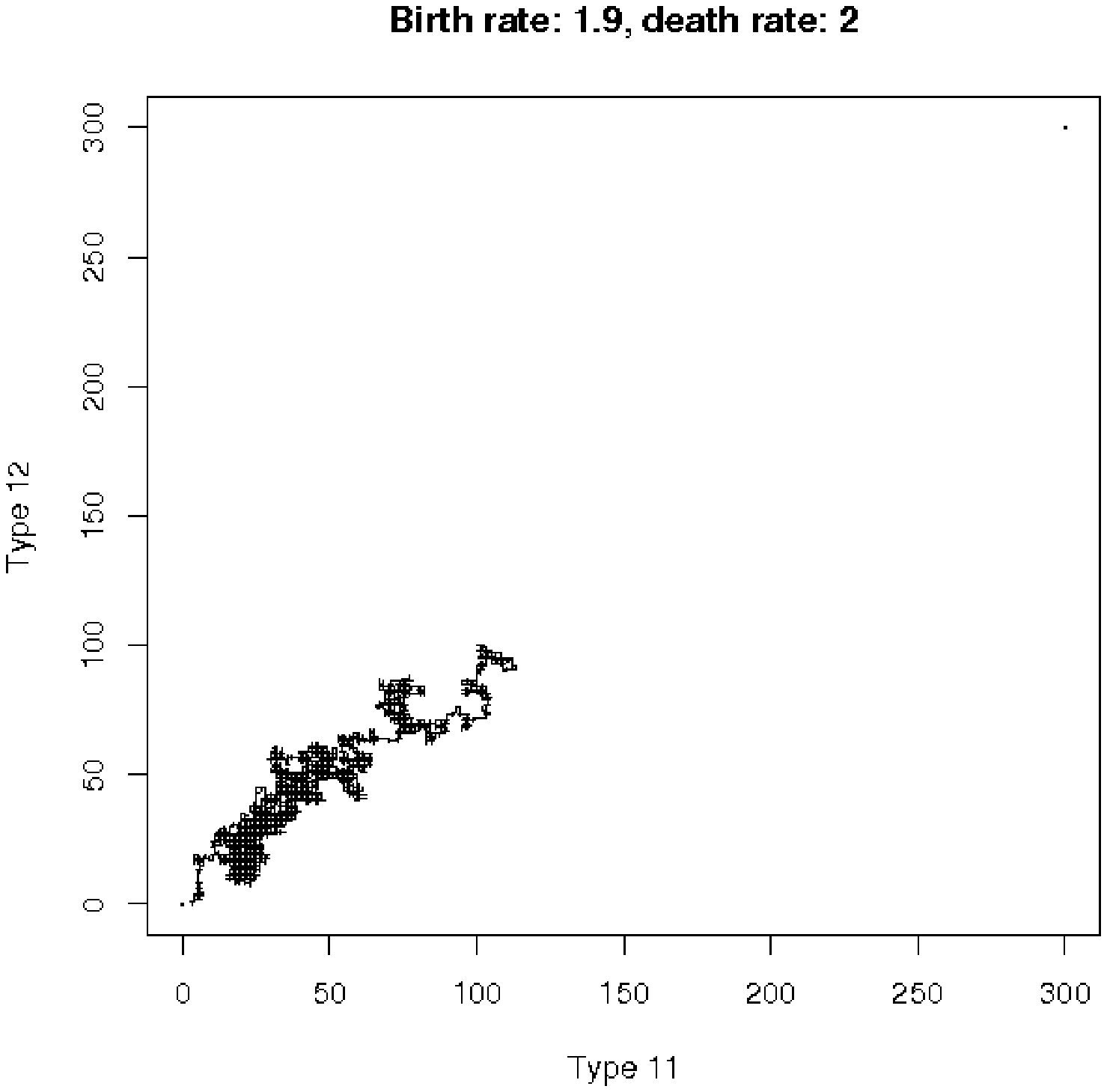}\\
 & \\
 (c) & (d) \\
 & \\
 \includegraphics[width=0.40\textwidth,height=0.23\textheight,angle=0,trim=1cm 1cm 1cm 1cm]{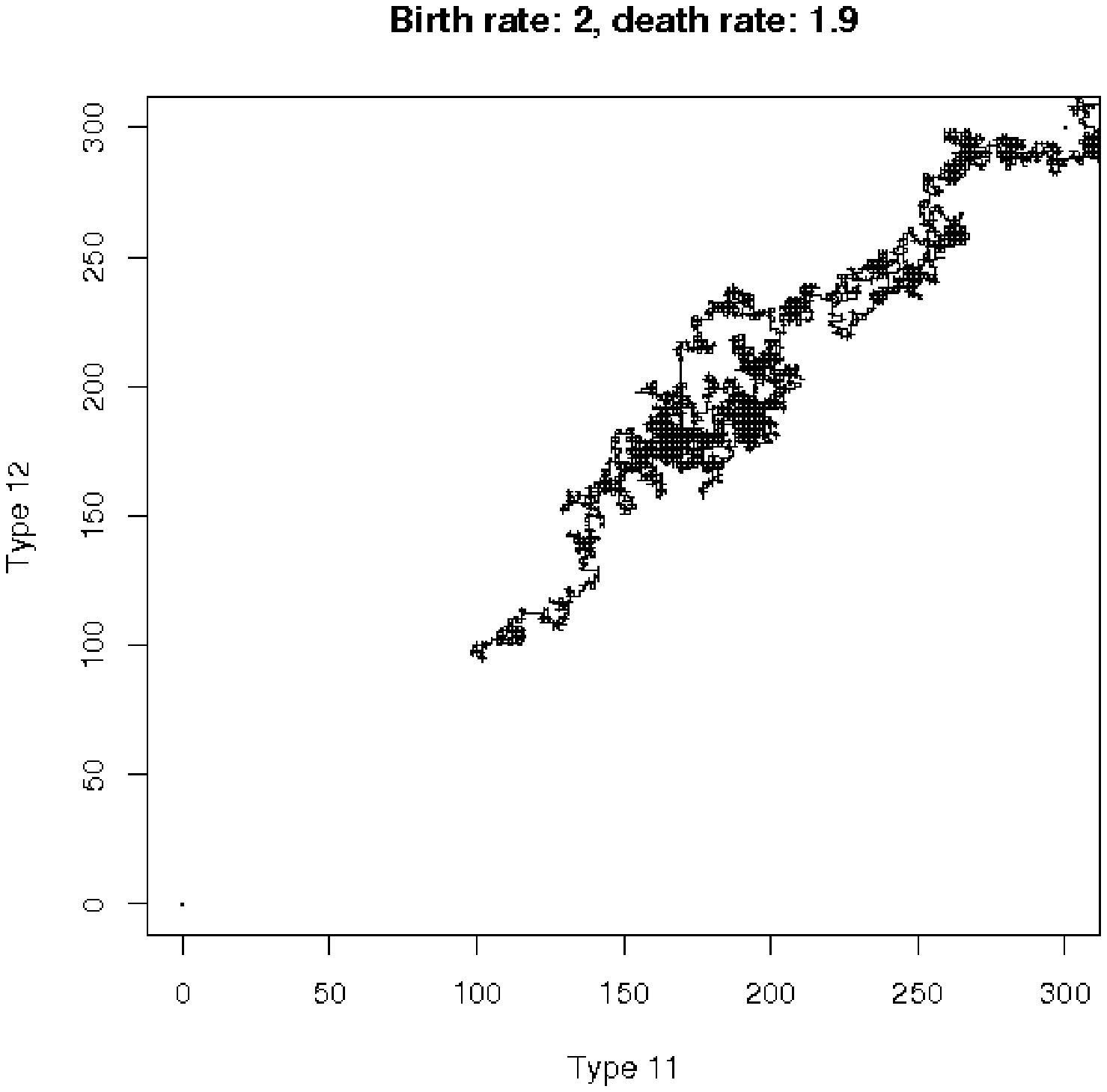}
 &
\includegraphics[width=0.40\textwidth,height=0.23\textheight,angle=0,trim=1cm 1cm 1cm 1cm]{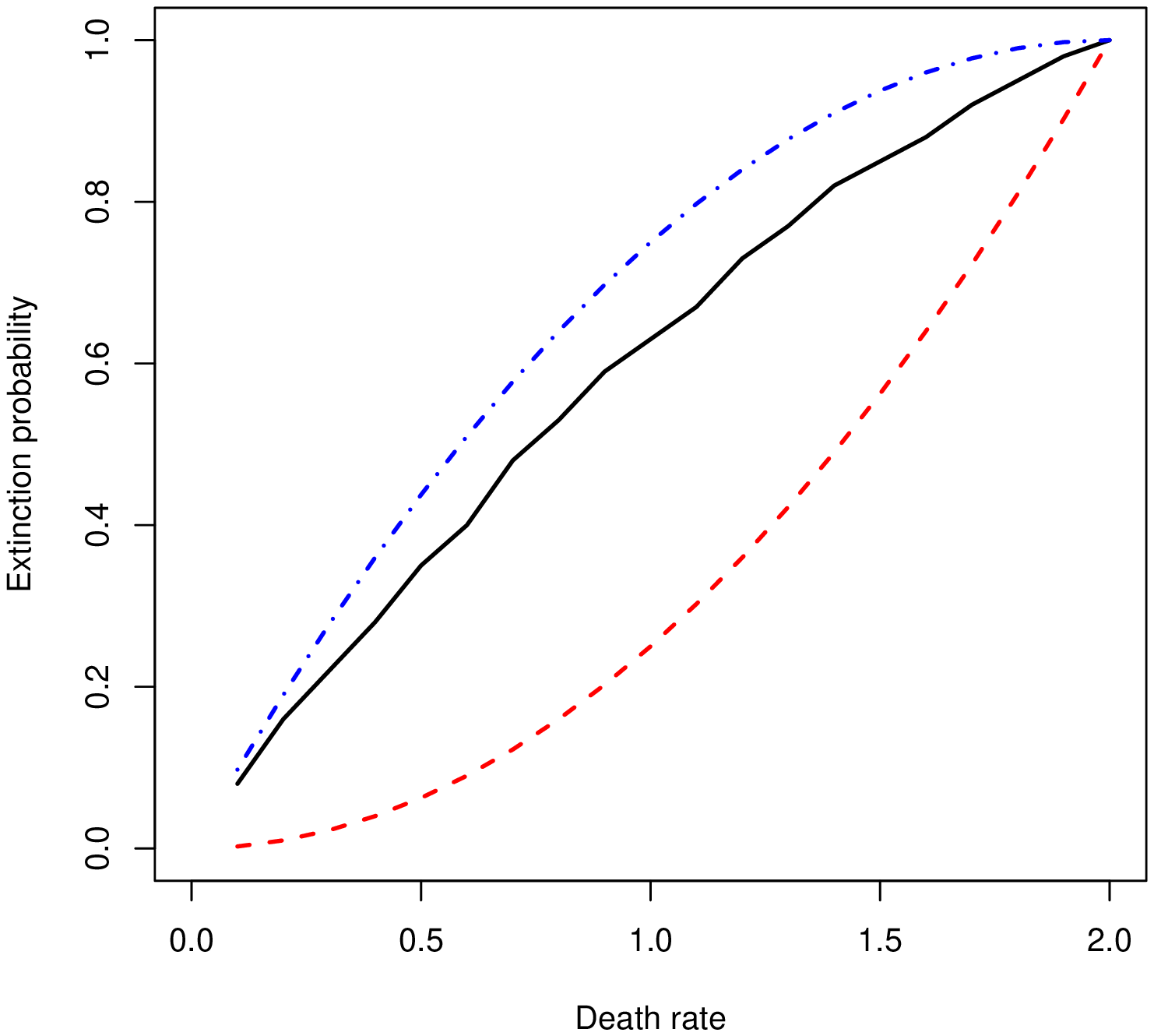}
 \end{tabular}
\caption{\textit{(a)-(c): Simulation of the paths $(N^{11}_t,N_t^{12})_{t\geq 0}$ in the critical, subcritical and supercritical cases. The number of individuals of genotype $\{1,1\}$ is in abscissa, the number of individuals of genotype $\{1,2\}$ is in ordinate. For the three simulations, the initial condition is $(100,100)$. (d): Simulated extinction probability for $(N^{11},N^{12})$ in plain line, together with the extinction probability of $(\wN^{11},\wN^{12})$ in dashed red line and of $(\widehat{N}^{11}, \widehat{N}^{12})$ in blue dash-dots. We start with $(N^{11}_0,N^{12}_0)=(1,1)$.}}\label{fig_simu}
\end{center}
\end{figure}


To prove Proposition \ref{propencadrement}, we introduce processes that dominate and bound below $(N^{11}_t, N^{12}_t)_{t\in \R_+}$, and that allow to obtain the bounds in (\ref{encadrement}).

\subsubsection{Domination and proof of the lower bound}\label{section:dominationsto}
In Wright's model, the random walks (\ref{equationsdistyles}) can be dominated by the random walks $\widetilde{N}^{11}$ and $\widetilde{N}^{12}$ with initial conditions $N^{11}_0$ and $N^{12}_0$, and reproduction rates:
\begin{equation}\frac{\bar{r}}{2}\big( \widetilde{N}^{11}_s +\widetilde{N}^{12}_s \big)\mbox{ instead of }\frac{\bar{r}}{2}\big(N^{11}_s + N^{12}_s \big)\ind_{N^{11}_s>0;N^{12}_s>0}\label{tauxdominants}\end{equation}
such that until time $\tau$, the processes $(N_t^{11}, N_t^{12})_{t\in \R_+}$ and $(\widetilde{N}_t^{11}, \widetilde{N}_t^{12})_{t\in \R_+}$ have the same paths (this can be obtained by using the same Poisson measures for $(N_t^{11}, N_t^{12})_{t\in \R_+}$ and $(\wN_t^{11}, \wN_t^{12})_{t\in \R_+}$, see Appendix \ref{section:SDEdetaillees}). For $(\widetilde{N}^{11}_t,\widetilde{N}^{12}_t)_{t\in \R_+}$:
\begin{align}
\widetilde{N}_t^{11}= & N_0^{11}+\int_0^t \left(\frac{\bar{r}}{2}\big(\widetilde{N}^{11}_s + \widetilde{N}^{12}_s \big)-d\times \widetilde{N}^{11}_s\right)ds+\widetilde{M}^{11}_t\nonumber\\
\widetilde{N}_t^{12}= & N_0^{12}+\int_0^t \left(\frac{\bar{r}}{2}\big(\widetilde{N}^{11}_s +\widetilde{N}^{12}_s \big)-d\times \widetilde{N}^{12}_s\right)ds+\widetilde{M}^{12}_t.\label{equationsdistylestilde}
\end{align}
We define by $\widetilde{N}_t=\widetilde{N}_t^{11}+\widetilde{N}t^{12}$ the total size of the population ruled by (\ref{tauxdominants}). As $(N_t^{11}, N_t^{12})_{t\in \R_+}$ and $(\wN_t^{11}, \wN_t^{12})_{t\in \R_+}$ coincide until time $\tau$, they reach the axes at the same stopping time $\tau$. 
Moreover, $(\widetilde{N}^{11}_t, \widetilde{N}^{12}_t)_{t\in \R_+}$ stochastically dominates $(N^{11}_t, N^{12}_t)_{t\in \R_+}$ since it allows rebirths of the disappeared type on the boundaries $\{N^{11}=0\}\cup\{N^{12}=0\}$ provided the total population size is strictly positive. The coupling presented here also works for general division rates $r(N)$ instead of $\bar{r}$.\\

\par The process $(\widetilde{N}^{11}_t,\widetilde{N}^{12}_t)_{t\in \R_+}$ is a particular case of two-type Galton-Watson process \citep[\eg][Chap. V]{athreyaney} where each individual lives during an exponential time of rate $\bar{r}+d$. At death, a particle is replaced by:
\begin{itemize}
\item zero offspring with probability $d/(\bar{r}+d)$: this corresponds to the case of a real death.
\item two offspring of the same type as their mother with probability $\bar{r}/(2(\bar{r}+d))$: one is the mother and the other is her daughter, of the same type.
\item two offspring of different types $\{1,1\}$ and $\{1,2\}$ with probability $\bar{r}/(2(\bar{r}+d))$: one is the mother and the other is her daughter with the other type.
\end{itemize}
Hence, once the trajectories have reached the horizontal axis, for instance, the extinct genotype $\{1,2\}$ may be regenerated from birth of individuals of genotype $\{1,2\}$ from individuals of genotype $\{1,1\}$. Notice also that the total size of the population $\widetilde{N}_t=\widetilde{N}^{11}_t+ \widetilde{N}^{12}_t$ is a continuous time birth and death process. Let:
\begin{equation}
\sigma=\inf\{t\geq 0,\, \widetilde{N}_t=0\}
\end{equation}be the extinction time of the dominating process.

\begin{proposition}\label{propmomentntilde}
Let us consider the processes $(\widetilde{N}^{11}, \widetilde{N}^{12})$ starting from the initial condition $(N^{11}_0,N^{12}_0)$. We set $N_0=N^{11}_0+N^{12}_0$.\\
(i) The total population is a continuous time birth and death process with birth and death rates $\bar{r}$ and $d$ respectively, and:
\begin{align}
 \mathbb{E}\big(\widetilde{N}_t\big)=  \mathbb{E}\big(N_0\big)e^{(\bar{r}-d)t},\qquad
  \mathbb{P}(\sigma<+\infty\, |\, \wN_0=1)=\left\{\begin{array}{l}
d/\bar{r}\mbox{ if }\bar{r}>d\\
1 \mbox{ otherwise}.
\end{array}\right.\label{ENttilde}\end{align}
(ii) For the population of type $\{S^1,S^2\}\in \{\{1,1\},\{1,2\}\}$ we have:
\begin{align}
\mathbb{E}\big(\widetilde{N}^{S^1S^2}_t\big)=\frac{\mathbb{E}(N_0)e^{(\bar{r}-d)t}}{2}+\left(\mathbb{E}\big(N^{S^1S^2}_0-\frac{E(N_0)}{2}\big)\right)e^{-d.t}.
\end{align}
\end{proposition}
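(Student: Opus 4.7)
My plan is to first establish point (i) by recognizing that $\widetilde{N}$ is a classical linear birth-and-death process, and then deduce point (ii) by integrating a coupled linear ODE for the expected subpopulation sizes.

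For part (i), the key observation is that summing the SDEs in (\ref{equationsdistylestilde}) for $\widetilde{N}^{11}$ and $\widetilde{N}^{12}$ yields
\begin{equation*}
\widetilde{N}_t = N_0 + \int_0^t \big(\bar{r}\,\widetilde{N}_s - d\,\widetilde{N}_s\big)\,ds + \widetilde{M}_t^{11} + \widetilde{M}_t^{12},
\end{equation*}
so each individual contributes $\bar{r}/2$ to the birth rate of type $\{1,1\}$ and $\bar{r}/2$ to the birth rate of type $\{1,2\}$, hence a total per-capita birth rate $\bar{r}$, and each individual dies at rate $d$. Thus $\widetilde{N}$ is a standard linear birth-and-death process. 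Taking expectations (the square-integrability provided by arguments analogous to \eqref{momentordre2} justifies this step) gives $\frac{d}{dt}\E(\widetilde{N}_t) = (\bar{r}-d)\E(\widetilde{N}_t)$ and thus $\E(\widetilde{N}_t) = \E(N_0) e^{(\bar{r}-d)t}$. For the extinction probability starting from one particle, I invoke the classical result for linear birth-death processes \citep[see \eg][Chap.~V]{athreyaney}: the probability generating function of the embedded Galton-Watson tree at branching events has offspring distribution $\P(\xi=0) = d/(\bar{r}+d)$ and $\P(\xi=2) = \bar{r}/(\bar{r}+d)$, whose smallest fixed point in $[0,1]$ is $\min(1,d/\bar{r})$.

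For part (ii), I would read off from \eqref{equationsdistylestilde} the evolution of the expectation of $\widetilde{N}_t^{S^1S^2}$. Taking expectations in the $\{1,1\}$ equation yields
\begin{equation*}
\frac{d}{dt}\E(\widetilde{N}_t^{11}) = \frac{\bar{r}}{2}\E(\widetilde{N}_t) - d\,\E(\widetilde{N}_t^{11}) = \frac{\bar{r}}{2}\E(N_0) e^{(\bar{r}-d)t} - d\,\E(\widetilde{N}_t^{11}),
\end{equation*}
using part (i). This is a first-order linear ODE with the usual integrating factor $e^{dt}$. Solving explicitly with initial value $\E(N_0^{11})$ and simplifying gives
\begin{equation*}
\E(\widetilde{N}_t^{11}) = \frac{\E(N_0)}{2} e^{(\bar{r}-d)t} + \Big(\E(N_0^{11}) - \frac{\E(N_0)}{2}\Big) e^{-dt},
\end{equation*}
and the analogous computation for $\widetilde{N}_t^{12}$ yields the symmetric formula. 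By symmetry between the two types in \eqref{equationsdistylestilde}, the same expression holds after interchanging $\{1,1\}$ and $\{1,2\}$.

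The only non-routine step is the computation of $\P(\sigma<+\infty \mid \widetilde{N}_0 = 1)$; everything else reduces to solving a $1\times 1$ or $2\times 2$ linear ODE for the first moments. I do not expect any technical obstacle: the moment bounds needed to interchange expectation and Poisson integrals are those already used in Section \ref{sectionODE}, and the supercritical/critical/subcritical dichotomy for the extinction probability of a linear birth-and-death process is textbook.
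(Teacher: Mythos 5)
Your proof is correct and follows essentially the same route as the paper: recognize that $\widetilde{N}$ is a linear birth--death process, read off the first-moment ODEs from \eqref{equationsdistylestilde}, and invoke the classical extinction dichotomy. The only cosmetic differences are that the paper computes $\P(\sigma<\infty)$ as $\lim_{t\to+\infty}F(0,t)$ from the explicit generating function rather than as the smallest fixed point of the offspring pgf, and obtains part (ii) by citing the already-solved $2\times 2$ linear system \eqref{equationsolutionA} instead of decoupling via the known total $\E(\widetilde{N}_t)$; both variants are standard and equivalent.
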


\begin{proof}Once that it has been noticed that $(\wN_t)_{t\geq 0}$ is a continuous time birth and death process, the results are standard. It is indeed classical to consider the generating functions for which \citep[\eg][Chap. III 4-5]{athreyaney}:
\begin{equation}
 F(s,t)=\sum_{k=0}^{+\infty}\P(\wN_t=k\, |\, \wN_0=1)s^k = \left\{\begin{array}{l}
\frac{d(s-1)-e^{-(\bar{r}-d)t}(\bar{r}s-d)}{\bar{r}(s-1)-e^{-(\bar{r}-d)t}(\bar{r}s-d)}\quad \mbox{ if }\bar{r}\not= d\\
\frac{s-\bar{r}t(s-1)}{1-\bar{r}t(s-1)}\quad \mbox{ if }\bar{r}=d.
\end{array}\right.\label{expressionF(s,t)}
\end{equation}
Since $\mathbb{P}(\sigma<+\infty\, |\, \wN_0=1)=\lim_{t\rightarrow +\infty}F(0,t)$ and since:
\begin{eqnarray*}
\P(\sigma\leq t \, |\, \wN_0=1)=\P(\wN_t=0\, |\, \wN_0=1)=F(0,t)= & \frac{d}{\bar{r}}\Big(\frac{1-e^{-(\bar{r}-d)t}}{1-\frac{d}{\bar{r}}e^{-(\bar{r}-d)t}}\Big) & \mbox{ if }\bar{r}>d\\
= &  \frac{\bar{r}t}{1+\bar{r}t} & \mbox{ if }\bar{r}=d\\
= & \frac{e^{(d-\bar{r})t-1}}{e^{(d-\bar{r})t-\frac{\bar{r}}{d}}} & \mbox{ if }\bar{r}<d,
\end{eqnarray*}we deduce the result by taking the limit in $t$. The expectations $\E(\wN^{11}_t)$, $\E(\wN^{12}_t)$ and $\E(\wN_t)$ are obtained by noticing that $t\mapsto (\E(\wN^{11}_t),\E(\wN^{12}_t))$ solves the system (\ref{solutioneqA}) which has been studied in a previous part.
\end{proof}

In conclusion, we can distinguish three regimes:
\begin{proposition}\label{prop4.4}
(i) In the subcritical case $\bar{r}<d$, the population $(\wN^{11}_t,\wN^{12}_t)_{t\geq 0}$ goes extinct with probability 1 and so does $(N^{11}_t,N^{12}_t)_{t\geq 0}$.\\
(ii) In the critical case $\bar{r}=d$, the population $(\wN^{11}_t,\wN^{12}_t)_{t\geq 0}$ goes extinct with probability 1, and so does $(N^{11}_t,N^{12}_t)_{t\geq 0}$, but the expectation of the population size $\mathbb{E}\big(\widetilde{N}_t\big)$ remains constant and the extinction time is not integrable.\\
(iii) In the supercritical case $\bar{r}>d$, there is a positive probability of survival for $(\widetilde{N}^{11}_t,\widetilde{N}^{12}_t)_{t\in \R_+}$, equal to $(d/\bar{r})^{N^{11}_0+N^{12}_0}$ that
provides the lower bound in (\ref{encadrement}). Moreover, in this case, the mean size $\mathbb{E}\big(\widetilde{N}_t\big)$ tends to infinity with:
$$\lim_{t\rightarrow +\infty}\frac{\mathbb{E}\big(\widetilde{N}^{11}_t\big)}{\mathbb{E}(\wN_t)}=\frac{1}{2}.$$
\end{proposition}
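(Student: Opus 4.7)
The plan is to piece together the claim from the explicit moment and generating-function formulas for $(\wN_t)_{t\geq 0}$ obtained in Proposition \ref{propmomentntilde}, combined with the stochastic domination $N^{11}_t+N^{12}_t \leq \wN_t$ provided by the coupling of Section \ref{section:dominationsto}. Because the coupling was constructed so that $(N^{11},N^{12})$ and $(\wN^{11},\wN^{12})$ have identical paths until $\tau$, and then $(\wN^{11},\wN^{12})$ is only allowed to do more births, extinction of $\wN$ implies extinction of $N$, which handles half of the propagation; conversely, for the lower bound in (iii), extinction of $N$ is more likely than extinction of $\wN$, yielding $p_{i,j}\geq \P(\sigma<+\infty\mid \wN_0=i+j)$.

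For parts (i) and (ii), I would read the a.s.\ extinction of $(\wN_t)$ directly from \eqref{ENttilde}, namely $\P(\sigma<+\infty\mid \wN_0=1)=1$ as soon as $\bar{r}\leq d$, then lift it to $\wN_0=N_0^{11}+N_0^{12}$ by the branching property and transfer it to $(N^{11},N^{12})$ via the domination. The constancy of $\E(\wN_t)$ in the critical case is immediate from $\E(\wN_t)=\E(N_0)e^{(\bar{r}-d)t}$. For the non-integrability of $\sigma$ when $\bar{r}=d$, I would use the generating function \eqref{expressionF(s,t)}, which yields $\P(\sigma>t\mid\wN_0=1)=1-F(0,t)=(1+\bar{r}t)^{-1}$; integrating over $t\in\RPlus$ gives $\E(\sigma\mid\wN_0=1)=+\infty$.

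For part (iii), the single-ancestor survival probability $1-d/\bar{r}$ from \eqref{ENttilde} combined with the branching property (each of the $N_0^{11}+N_0^{12}$ initial individuals founds an independent subtree) gives $\P(\sigma<+\infty)=(d/\bar{r})^{N_0^{11}+N_0^{12}}$; then the domination $N\leq \wN$ yields the lower bound $p_{i,j}\geq (d/\bar{r})^{i+j}$ in \eqref{encadrement}. The divergence of $\E(\wN_t)$ follows again from $\E(\wN_t)=\E(N_0)e^{(\bar{r}-d)t}$, and the asymptotic equipartition is a one-line computation: dividing the formula for $\E(\wN^{11}_t)$ in Proposition \ref{propmomentntilde}(ii) by $\E(\wN_t)$ leaves a main term $1/2$ and a remainder of order $e^{-\bar{r}t}$ that vanishes as $t\to+\infty$.

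None of the steps is a real obstacle once Proposition \ref{propmomentntilde} and the coupling are granted; the only point requiring some care is ensuring that the inequality between $p_{i,j}$ and the extinction probability of $\wN$ goes in the correct direction, which is where I would be explicit that the coupling provides $N^{11}_t+N^{12}_t\leq \wN_t$ for every $t$ so that $\{\sigma<+\infty\}\subset\{\tau<+\infty\}$.
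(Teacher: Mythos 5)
Your proposal is correct and follows the same route as the paper: the paper's (very terse) proof also reads parts (i)--(iii) off the formulas of Proposition \ref{propmomentntilde} together with the branching property, proves the non-integrability of $\sigma$ in the critical case via $\E(\sigma)=\int_0^{+\infty}\P(\sigma>t)\,dt=\int_0^{+\infty}(1+\bar{r}t)^{-1}dt=+\infty$, and transfers the conclusions to $(N^{11},N^{12})$ through the coupling of Section \ref{section:dominationsto}. Your explicit care about the direction of the inequality, i.e.\ $\{\sigma<+\infty\}\subset\{\tau<+\infty\}$ hence $p_{i,j}\geq (d/\bar{r})^{i+j}$, is exactly the point the paper leaves implicit, and it is handled correctly.
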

\begin{proof}Point (i) is clear. For Point (ii), we use that:
$$\E(\sigma)=\int_0^{+\infty}\P(\sigma>t)dt=\int_0^{+\infty}\frac{dt}{1+\bar{r}t}=+\infty.$$
For Point (iii), we use (\ref{ENttilde}) and the branching property.
\end{proof}

\subsubsection{Minoration and proof of the upper bound}\label{section:stochasticminoration}

\paragraph{Comparison with an homogeneous random walk}
To obtain the upper bound in (\ref{encadrement}), we have to find a process that has an extinction probability higher than $(N^{11}_t,N^{12}_t)_{t\in \R_+}$. A natural process is the following random walk $(\widehat{N}^{11}_t,\widehat{N}^{12}_t)_{t\in \R_+}$ on $\N^2$: when the process is in $(i,j)\in (\N^*)^2$
\begin{itemize}
 \item individuals of genotype $\{1,1\}$ or $\{1,2\}$ are produced with rate $\bar{r}(i+j)/2$,
 \item individuals of genotype $\{1,1\}$ or $\{1,2\}$ die with rate $d(i+j)/2$.
\end{itemize}
If we consider the associated discrete time Markov chain, we obtain transition rates which are homogeneous: whatever the state $(i,j)\in (\N^*)^2$, given the occurrence of an event, the chain goes to the north or east with probability $\bar{r}/(2(\bar{r}+d))$ and to the south or west with probability $d/(2(\bar{r}+d))$ (see Fig. \ref{Fig7}). We begin with computing the extinction probability for this process. Then, we show that a naive coupling between $(N^{11}_t,N^{12}_t)_{t\in \R_+}$ and $(\widehat{N}^{11}_t,\widehat{N}^{12}_t)_{t\in \R_+}$ does not work and exploit the symmetry of the problem to obtain the upper bound in \eqref{encadrement}. For this, we introduce a coupling of the original process $(N^{11}_t,N^{12}_t)_{t\in \R_+}$ with two auxiliary processes $(\widehat{N}^{11,+}_t,\widehat{N}^{12,+}_t)_{t\in \R_+}$ and $(\widehat{N}^{11,-}_t,\widehat{N}^{12,-}_t)_{t\in \R_+}$.

\unitlength=1cm
\begin{figure}[ht]
  \begin{center}
  \begin{minipage}[b]{.45\textwidth}\centering
    \begin{picture}(4,4)
    \put(1,1){\vector(1,0){3}}
    \put(1,1){\vector(0,1){3}}
    \put(3.8,0.6){$\widehat{N}^{11,+}$}
    \put(0.3,3.6){$\widehat{N}^{12,+}$}
    \put(3,3){\vector(1,0){1}}
    \put(3,3){\vector(-1,0){1}}
    \put(3,3){\vector(0,1){1}}
    \put(3,3){\vector(0,-1){1}}
    \put(3.8,2.6){$\bar{r}\frac{i+j}{2}$}
    \put(2,2.6){$d\frac{i+j}{2}$}
    \put(3.1,2){$d\frac{i+j}{2}$}
    \put(3.1,3.8){$\bar{r}\frac{i+j}{2}$}
    \dottedline(1,3)(3,3)
    \dottedline(3,1)(3,3)
    \put(0.4,2.8){$j$}
    \put(2.8,0.6){$i$}
    \end{picture}
\end{minipage}
\begin{minipage}[b]{.45\textwidth}\centering
    \begin{picture}(4,4)
    \put(1,1){\vector(1,0){3}}
    \put(1,1){\vector(0,1){3}}
    \put(3.8,0.6){$\widehat{N}^{11,+}$}
    \put(0.3,3.6){$\widehat{N}^{12,+}$}
    \put(3,3){\vector(1,0){1}}
    \put(3,3){\vector(-1,0){1}}
    \put(3,3){\vector(0,1){1}}
    \put(3,3){\vector(0,-1){1}}
    \put(4.2,3){$\frac{\bar{r}}{2(\bar{r}+d)}$}
    \put(1.1,2.6){$\frac{d}{2(\bar{r}+d)}$}
    \put(3.1,2){$\frac{d}{2(\bar{r}+d)}$}
    \put(3.1,3.8){$\frac{\bar{r}}{2(\bar{r}+d)}$}
    \dottedline(1,3)(3,3)
    \dottedline(3,1)(3,3)
    \put(0.4,2.8){$j$}
    \put(2.8,0.6){$i$}
    \end{picture}\end{minipage}
  \vspace{-0.3cm}
  \caption{\textit{Rates of events (left) and transition probabilities (right) of $(\widehat{N}^{11,+}_{t},\widehat{N}^{12,+}_t)_{t\in \R_+}$ and of its discrete time skeleton.}}\label{Fig7}
  \end{center}
\end{figure}
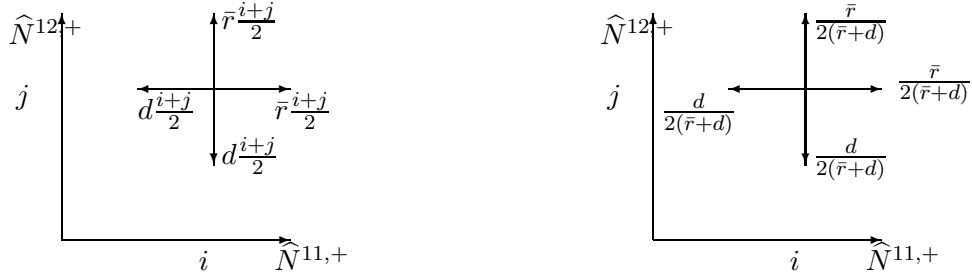

Let us give some explanation. $(N^{11}_t,N^{12}_t)_{t\in \R_+}$ and $(\widehat{N}^{11}_t,\widehat{N}^{12}_t)_{t\in \R_+}$ have the same birth rates. When we are in the upper octant $\{(i,j)\in \N^2,\ i<j\}$, then $d \,i<d(i+j)/2<d\,j$. In this case, there are higher (resp. lower) death rates for the $j$ individuals $\{1,1\}$ (resp. the $i$ individuals $\{1,2\}$) in $(\widehat{N}^{11}_t,\widehat{N}^{12}_t)_{t\in \R_+}$ compared with $(N^{11}_t,N^{12}_t)_{t\in \R_+}$. In the lower octant $\{i>j\}$, the reverse holds. Heuristically, for the homogeneous random walk $(\widehat{N}^{11}_t,\widehat{N}^{12}_t)_{t\in \R_+}$, the less abundant genotype is pushed more strongly towards the axis, which should entail a higher extinction probability.

\begin{proposition}\label{prop_couplinga2}
For the process $(\widehat{N}^{11}_t,\widehat{N}^{12}_t)_{t\in \R_+}$, we have when $\bar{r}>d$:
 \begin{align}
  \P_{ij}\Big(\exists t\in \R_+,\, \widehat{N}^{11}_t=0\mbox{ or }\widehat{N}^{12}_t=0\Big)=\Big(\frac{d}{\bar{r}}\Big)^i+\Big(\frac{d}{\bar{r}}\Big)^j-\Big(\frac{d}{\bar{r}}\Big)^{i+j}.\label{etape9}
 \end{align}When $\bar{r}\leq d$, this probability equals 1.
\end{proposition}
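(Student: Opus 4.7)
My plan is to exploit the \emph{homogeneity} of the rates of $(\widehat N^{11}_t,\widehat N^{12}_t)_{t\in\R_+}$ to decompose it into two independent biased random walks, and then read off the extinction probability from classical birth--death formulas.

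First I would pass to the embedded discrete-time chain, whose transition probabilities are displayed in Fig.~\ref{Fig7}. Observe that each transition can be simulated in two independent stages: (i) flip a fair coin $I_n\in\{1,2\}$ to decide which coordinate moves, and (ii) independently let the chosen coordinate jump by $+1$ with probability $\bar r/(\bar r+d)$ or by $-1$ with probability $d/(\bar r+d)$. Let $(X_k)_{k\in\N}$ and $(Y_k)_{k\in\N}$ be two \emph{independent} biased random walks on $\Z$ with these increment probabilities, starting from $i$ and $j$ respectively, and set $K_1(n)=\#\{k\le n:I_k=1\}$, $K_2(n)=\#\{k\le n:I_k=2\}$. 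Then the embedded chain of $(\widehat N^{11},\widehat N^{12})$ has the same law as $(X_{K_1(n)},Y_{K_2(n)})_{n\in\N}$. By the strong law of large numbers, $K_1(n),K_2(n)\to+\infty$ almost surely, so the event that the continuous-time process ever reaches an axis coincides with $\{\exists k,\,X_k=0\}\cup\{\exists k,\,Y_k=0\}$.

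Next I would invoke the classical computation for a biased random walk on $\Z$: starting from $i\ge 1$, the probability of ever hitting $0$ equals $(d/\bar r)^i$ in the supercritical case $\bar r>d$, and $1$ in the critical/subcritical case $\bar r\le d$ (this is a standard gambler's-ruin / harmonic-function argument). Setting $\rho=d/\bar r$ and using the independence of $X$ and $Y$, the probability that at least one of them hits $0$ is
\begin{equation*}
1-(1-\rho^i)(1-\rho^j)=\rho^i+\rho^j-\rho^{i+j},
\end{equation*}
which is exactly \eqref{etape9}; in the case $\bar r\le d$ the same formula gives $1$.

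The main obstacle is just being careful in the coupling of step (i): one must check that the random time-change $(K_1(n),K_2(n))$ indeed preserves independence of $X$ and $Y$ (this follows because $(I_n)$ is independent of the increment sequences of $X$ and $Y$), and that restricting to the event $\{\exists n,\widehat N^{11}_n=0\text{ or }\widehat N^{12}_n=0\}$ does not lose information relative to hitting $0$ in the underlying walks, which is guaranteed by $K_1(n),K_2(n)\to\infty$ a.s. As a sanity check one can verify directly that $q_{i,j}:=\rho^i+\rho^j-\rho^{i+j}$ satisfies the Dirichlet equation analogous to \eqref{eq3} with transitions of Fig.~\ref{Fig7} and the boundary conditions $q_{0,j}=q_{i,0}=1$, using $\bar r\rho=d$; this provides an independent verification and handles the question of uniqueness.
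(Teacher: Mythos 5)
Your proposal is correct and follows essentially the same route as the paper: the paper's proof likewise passes to the embedded discrete-time chain, observes that the two coordinates move independently as biased random walks with up-step probability $\bar r/(\bar r+d)$, and multiplies the two gambler's-ruin survival probabilities $(1-(d/\bar r)^i)(1-(d/\bar r)^j)$ to get \eqref{etape9}. Your coin-flip/time-change construction and the Dirichlet-equation sanity check simply make explicit the independence claim that the paper states without detail.
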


\begin{proof}When $\bar{r}\leq d$, the result is obtain with arguments similar to the ones developed in Section \ref{section:dominationsto}. Let us consider the case $\bar{r}>d$. We denote by $(T_k)_{k\in \N}$ the jump times of $(\widehat{N}^{11}_t,\widehat{N}^{12}_t)_{t\in \R_+}$. For the associated Markov chain, $(\widehat{N}^{11}_{T_k})_{k\in \N}$ and $(\widehat{N}^{12}_{T_k})_{k\in \N}$ move independently. Hence, the survival probability is the product of the survival probabilities of each of these processes:
\begin{equation}
\P_{ij}\Big(\forall t\in \R_+,\, \widehat{N}^{11}_t>0\mbox{ and }\widehat{N}^{12}_t>0\Big)=\left(1-\Big(\frac{d}{\bar{r}}\Big)^i\right)\left(1-\Big(\frac{d}{\bar{r}}\Big)^j\right).
\end{equation}This hence provides \eqref{etape9}.\hfill $\Box$
\end{proof}

\paragraph{A first naive coupling}
For the original process $(N^{11}_t,N^{12}_t)_{t\in \R_+}$ as well as for the random walk $(\widehat{N}^{11}_t,\widehat{N}^{12}_t)_{t\in \R_+}$, when in the state $(i,j)$, the jump rates to the north and east on the one hand, and to the south and west on the other hand, depend only on the sum $i+j$ and are the same for the two processes. Thus, a first (naive) idea of coupling of these two processes is to have the jumps to the north/east or to the south/west occur simultaneously for both process. All the jumps of the original process are copied by the auxiliary process except the following modifications:
 \begin{itemize}
 \item When we are in the upper octant $\{i\leq j\}$, a jump of $(N^{11}_t,N^{12}_t)_{t\in \R_+}$ to the south gives:
 \begin{itemize}
 \item a jump of $(\widehat{N}^{11}_t,\widehat{N}^{12}_t)_{t\in \R_+}$ to the south with probability $(i+j)/(2j)$,
 \item a jump of $(\widehat{N}^{11}_t,\widehat{N}^{12}_t)_{t\in \R_+}$ to the west with probability $(j-i)/(2j)$.
 \end{itemize}
 \item When we are in the lower octant $\{i>j\}$, a jump of $(N^{11}_t,N^{12}_t)_{t\in \R_+}$ to the west gives:
 \begin{itemize}
 \item a jump of $(\widehat{N}^{11}_t,\widehat{N}^{12}_t)_{t\in \R_+}$ to the west with probability $(i+j)/(2i)$,
 \item a jump of $(\widehat{N}^{11}_t,\widehat{N}^{12}_t)_{t\in \R_+}$ to the south with probability $(i-j)/(2i)$.
 \end{itemize}
 \end{itemize}
One notices that as long as none of the two processes has reached a boundary of the positive quadrant: $N^{11}_t+N^{12}_t=\widehat{N}^{11}_t+\widehat{N}^{12}_t$. This entails that the jump rates and times to the north/east and south/west remain the same for both processes. \\
Unfortunately, this coupling is not sufficient as there exist paths where the original process goes extinct before $(\widehat{N}^{11}_t,\widehat{N}^{12}_t)_{t\in \R_+}$: see \eg Fig. \ref{Fig8} (a).

\unitlength=1cm
\begin{figure}[ht]
  \begin{center}
  \begin{minipage}[b]{.45\textwidth}\centering
    \begin{picture}(6,6)
    \put(1,1){\vector(1,0){4.5}}
    \put(1,1){\vector(0,1){4.5}}
    \put(5.3,0.6){$i$}
    \put(0.3,5.1){$j$}
    \linethickness{0.5mm}
    \put(1,4){\color{blue}{\line(1,0){4}}}
    \put(1,4){\circle*{0.1}}
    \put(2,4){\circle*{0.1}}
    \put(3,4){\circle*{0.1}}
    \put(4,4){\circle*{0.1}}
    \put(5,4){\circle*{0.1}}
    \put(5.15,4.15){$(N^{11}_0,N^{12}_0)$}
    \put(-1,4.15){$(N^{11}_{T_4},N^{12}_{T_4})$}
    \put(1.2,3.3){$(\widehat{N}^{11}_{T_4},\widehat{N}^{12}_{T_4})$}
    \linethickness{0.1mm}
    \color{red}{
    \dashline{0.5}(2,3)(5,3)
    \dashline{0.5}(5,4)(5,3)
    }
    \put(2,3){\circle*{0.1}}
    \put(3,3){\circle*{0.1}}
    \put(4,3){\circle*{0.1}}
    \put(5,3){\circle*{0.1}}
    \color{black}{\dashline{0.2}(1,1)(5,5)}
    \end{picture}
\end{minipage}
  \begin{minipage}[b]{.45\textwidth}\centering
    \begin{picture}(6,6)
    \put(1.1,1){\vector(1,0){4.5}}
    \put(1.1,1){\vector(0,1){4.5}}
    \put(5.3,0.6){$i$}
    \put(0.3,5.1){$j$}
    \linethickness{0.5mm}
    \put(2,4){\color{blue}{\line(1,0){3}}}
    \put(2,4){\circle*{0.1}}
    \put(3,4){\circle*{0.1}}
    \put(4,4){\circle*{0.1}}
    \put(5,4){\circle*{0.1}}
    \put(5.15,4.15){$(N^{11}_0,N^{12}_0)$}
    \put(1.95,4.3){$(N^{11}_{T_3},N^{12}_{T_3})$}
    \put(5.2,1.3){$(\widehat{N}^{11,-}_{T_3},\widehat{N}^{12,-}_{T_3})$}
    \put(1.15,5.3){$(\widehat{N}^{11,+}_{T_3},\widehat{N}^{12,+}_{T_3})$}
    \linethickness{0.1mm}
    \color{red}{
    \dashline{0.5}(5,4)(5,1)
    \dashline{0.5}(4,5)(1,5)
    }
    \put(5,3){\circle*{0.1}}
    \put(5,2){\circle*{0.1}}
    \put(5,1){\circle*{0.1}}
    \put(1,5){\circle*{0.1}}
    \put(2,5){\circle*{0.1}}
    \put(3,5){\circle*{0.1}}
    \put(4,5){\circle*{0.1}}
    \color{black}{\dashline{0.2}(1,1)(5,5)}
    \color{black}{\dashline{0.05}(1,4.8)(4.8,1)}
    \end{picture}
\end{minipage}
  \vspace{-0.3cm}
  \caption{\textit{(a) An example of path for the naive coupling between $(N^{11}_t,N^{12}_t)_{t\in \R_+}$ (in plain blue line) and $(\widehat{N}^{11}_t,\widehat{N}^{12}_t)_{t\in \R_+}$ (in dashed red line) where the original process reaches the boundary before $(\widehat{N}^{11}_t,\widehat{N}^{12}_t)_{t\in \R_+}$. The two processes are started in the lower octant $\{i>j\}$. For the first step, the west move of the original process is replaced by a south move for the homogeneous random walk. Then, the original process keeps moving west, and since it is in the upper octant $\{i>j\}$, these moves are the same for the auxiliary process. The original process reaches the vertical axis first. (b) When we use the coupling with the two processes $(\widehat{N}^{11,+}_t,\widehat{N}^{12,+}_t)_{t\in \R_+}$ and $(\widehat{N}^{11,-}_t,\widehat{N}^{12,-}_t)_{t\in \R_+}$ (dashed red line started in the upper and lower octants respectively), things happen as follows. For the first move, since the original process is in the lower octant, it is coupled with $(\widehat{N}^{11,-}_t,\widehat{N}^{12,-}_t)_{t\in \R_+}$ and the west move of the original process is here replaced with a south move of the latter process ; by symmetry, the auxiliary process $(\widehat{N}^{11,+}_t,\widehat{N}^{12,+}_t)_{t\in \R_+}$  in the upper octant makes a jump to the west. After this, the process $(\widehat{N}^{11}_t,\widehat{N}^{12}_t)_{t\in \R_+}$ evolves in the upper octant and is coupled with $(\widehat{N}^{11,+}_t,\widehat{N}^{12,+}_t)_{t\in \R_+}$: all the west moves are unchanged. The auxiliary processes reach the boundaries first.}}\label{Fig8}
  \end{center}
\end{figure}
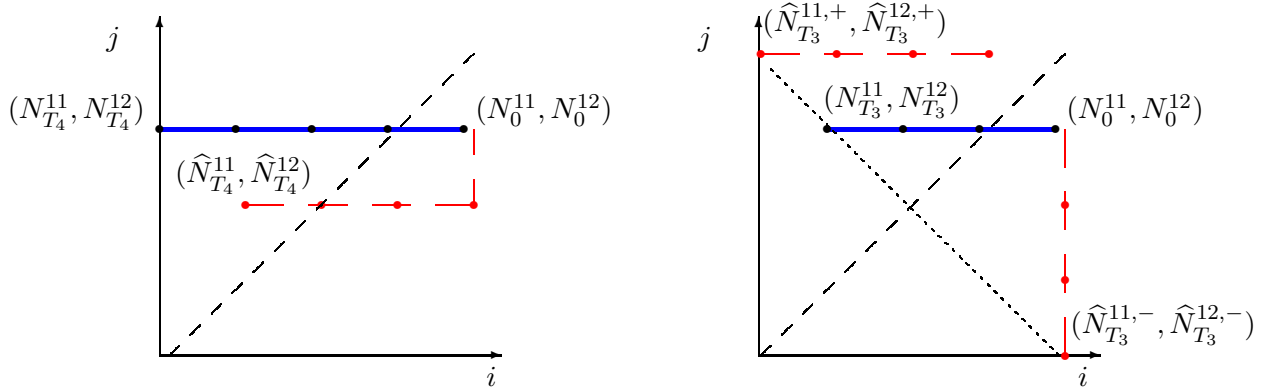

\paragraph{Coupling with two copies of $(\widehat{N}^{11}_t,\widehat{N}^{12}_t)_{t\in \R_+}$} To prove the upper bound in \eqref{encadrement}, we introduce a coupling of $(N^{11}_t,N^{12}_t)_{t\in \R_+}$ with 2 copies $(\widehat{N}^{11,+}_t,\widehat{N}^{12,+}_t)_{t\in \R_+}$ and $(\widehat{N}^{11,-}_t,\widehat{N}^{12,-}_t)_{t\in \R_+}$ of $(\widehat{N}^{11}_t,\widehat{N}^{12}_t)_{t\in \R_+}$.
Their respective initial conditions are:
\begin{align*}
& (\widehat{N}^{11,+}_0,\widehat{N}^{12,+}_0)=(\min(N^{11}_0,N^{12}_0),\max(N^{11}_0,N^{12}_0))\\
& (\widehat{N}^{11,-}_0,\widehat{N}^{12,-}_0)=(\max(N^{11}_0,N^{12}_0),\min(N^{11}_0,N^{12}_0)).
\end{align*}
As for the process $(\widehat{N}^{11}_t,\widehat{N}^{12}_t)_{t\in \R_+}$, there exists a coupling (see Appendix \ref{section:SDEdetaillees}) such that:
\begin{itemize}
 \item the two processes $(\widehat{N}^{11,+}_t,\widehat{N}^{12,+}_t)_{t\in \R_+}$ and $(\widehat{N}^{11,-}_t,\widehat{N}^{12,-}_t)_{t\in \R_+}$ are symmetric with respect to the line $\{i=j\}$, and started respectively in the upper and lower octants, $\{i\leq j\}$ and $\{i\geq j\}$. Thus, there is at every time one of these processes in each octant $\{i\leq j\}$ and $\{i\geq j\}$. Of course, since these processes are copies of $(\widehat{N}^{11}_t,\widehat{N}^{12}_t)_{t\in \R_+}$ they may change octant, but this is done simultaneously and they meet on the line $\{i=j\}$. Moreover, these processes both reach the axes at the same time.
 \item the jump times to the north/east or south/west are the same for $(N^{11}_t,N^{12}_t)_{t\in \R_+}$ and the two auxiliary processes. Indeed, when in the state $(i,j)$, these jump rates are $\bar{r}(i+j)$ and $d(i+j)$, and as long as the north/east and south/west jumps of the three processes occur simultaneously:
\begin{align*}
 N^{11}_t+N^{12}_t=\widehat{N}^{11,+}_t+\widehat{N}^{12,+}_t=\widehat{N}^{11,-}_t+\widehat{N}^{12,-}_t.
  \end{align*}
 \item the coupling is between the original process $(N^{11}_t,N^{12}_t)_{t\in \R_+}$ and the auxiliary process that belongs to the same octant, and it depends on this octant. The other auxiliary process is obtained by symmetry. If at a time $t$, $N^{11}_t\leq N^{12}_t$ (we are in the octant $\{i\leq j\}$), the coupling determines as follows the behavior of the auxiliary process that is found in this octant $\{i\leq j\}$ at $t$ (when  $N^{11}_t=N^{12}_t$, we choose $(\widehat{N}^{11,+}_t,\widehat{N}^{12,+}_t)$ by convention):
 \begin{itemize}
 \item If the original process moves to the north, east or west, so does the chosen auxiliary process. If the original process moves to the south, then the chosen auxiliary process moves to the south with probability $(N^{11}_t+N^{12}_t)/(2N^{12}_t)$ and to the west with probability $(N^{12}_t-N^{11}_t)/(2N^{12}_t)$.
  \item The behavior of the other auxiliary process is obtained by symmetry with the first bisector.
 \end{itemize}
 If at time $t$, $N^{11}_t> N^{12}_t$, the coupling determines the behavior of the auxiliary process that is found in this octant $\{i> j\}$ at $t$ as follows:
 \begin{itemize}
 \item If the original process moves to the north, east or south, so does the chosen auxiliary process. If the original process moves to the west, then the chosen auxiliary process moves to the west with probability $(N^{11}_t+N^{12}_t)/(2N^{11}_t)$ and to the south with probability $(N^{11}_t-N^{12}_t)/(2N^{11}_t)$.
  \item The behavior of the other auxiliary process is obtained by symmetry with the first bisector.
 \end{itemize}
It can be easily checked that the auxiliary processes have the same distribution as $(\widehat{N}^{11}_t,\widehat{N}^{12}_t)_{t\in \R_+}$ (see Appendix) and that
for every time $t$, as long as neither of the three process has reached the boundaries $\{i=0\}$ or $\{j=0\}$:
\begin{align*}
  & \min(\widehat{N}^{11,+}_t,\widehat{N}^{11,-}_t)\leq N^{11}_t\leq \max(\widehat{N}^{11,+}_t,\widehat{N}^{11,-}_t)\\
  & \min(\widehat{N}^{12,+}_t,\widehat{N}^{12,-}_t)\leq N^{12}_t\leq \max(\widehat{N}^{12,+}_t,\widehat{N}^{12,-}_t).
\end{align*}
\end{itemize}
As a consequence,
\begin{align}
\{\exists  & t\in \R_+,\, N^{11}_t= 0  \mbox{ or }N^{12}_t=0\}\nonumber\\
\subset & \{\exists t\in \R_+,\, \widehat{N}^{11,+}_t=0\mbox{ or }\widehat{N}^{12,+}_t=0\}\cup \{\exists t\in \R_+,\, \widehat{N}^{11,-}_t=0\mbox{ or }\widehat{N}^{12,-}_t=0\}\nonumber\\
 = & \{\exists t\in \R_+,\, \widehat{N}^{11,+}_t=0\mbox{ or }\widehat{N}^{12,+}_t=0\},\label{etape7}
\end{align}since the two processes $(\widehat{N}^{11,+}_t,\widehat{N}^{12,+}_t)_{t\in \R_+}$ and $(\widehat{N}^{11,-}_t,\widehat{N}^{12,-}_t)_{t\in \R_+}$ reach the axes at the same time. The computation of the probability of the event on the r.h.s. of \eqref{etape7} is given by Prop. \ref{prop_couplinga2} and proves the upper bound in \eqref{encadrement}.

\subsection{Fecundity selection model (Model 3) in a small population}

\unitlength=1cm
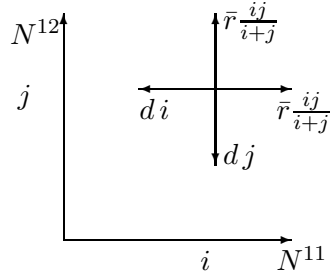
\begin{figure}[ht]
  \begin{center}
\begin{minipage}[b]{.45\textwidth}\centering
    \begin{picture}(4,4)
    \put(1,1){\vector(1,0){3}}
    \put(1,1){\vector(0,1){3}}
    \put(3.8,0.6){$N^{11}$}
    \put(0.3,3.6){$N^{12}$}
    \put(3,3){\vector(1,0){1}}
    \put(3,3){\vector(-1,0){1}}
    \put(3,3){\vector(0,1){1}}
    \put(3,3){\vector(0,-1){1}}
    \put(3.8,2.6){$\bar{r}\frac{ij}{i+j}$}
    \put(2,2.6){$d\,i$}
    \put(3.1,2){$d\,j$}
    \put(3.1,3.8){$\bar{r}\frac{ij}{i+j}$}
    \dottedline(1,3)(3,3)
    \dottedline(3,1)(3,3)
    \put(0.4,2.8){$j$}
    \put(2.8,0.6){$i$}
    \end{picture}
\end{minipage}
  \vspace{-0.3cm}
  \caption{\textit{Evolution of the distylous system $(N^{11}_t,N^{12}_t)_{t\in \R_+}$ in the fecundity selection model.
  }}\label{fig:small-fec}
  \end{center}
\end{figure}

For the fecundity selection model, we will prove that the same criteria as in large population hold for separating the subcritical, critical and supercritical cases:
\begin{proposition}Assume that $(N^{11}_0,N^{12}_0)=(i,j)$.\\
 (i) If $r\leq 2d$, then we have almost sure extinction in finite time.\\
 (ii) If $r>2d$, then we have a positive probability of survival $1-p_{i,j}$ such that:
 \begin{equation}
  \Big(\frac{d}{\bar{r}}\Big)^{i+j} \leq p_{i,j}\leq \Big(\frac{2d}{\bar{r}}\Big)^{\min(i,j)} .\label{etape8} \end{equation}
\end{proposition}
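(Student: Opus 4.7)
My plan is to parallel the strategy used in Section~\ref{section:dominationsto} for Wright's model: a scalar (super)martingale argument for~(i), a Wright-type dominator for the lower bound of~(ii), and a branching-process minorant for the upper bound. All couplings are built on the Poisson-measure representation of Appendix~\ref{section:SDEdetaillees} so that dominating, dominated and original processes share the same driving atoms.

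\textbf{Part (i).} The product $P_t:=N^{11}_t\,N^{12}_t$ has generator
\[\mathcal{L}P(x,y)\;=\;\bar r\,\frac{xy}{x+y}\cdot y+\bar r\,\frac{xy}{x+y}\cdot x-dxy-dxy\;=\;(\bar r-2d)\,xy,\]
so $\E(P_t)=P_0\,e^{(\bar r-2d)t}$. If $\bar r<2d$, then $\P(P_t\geq 1)\leq \E(P_t)\to 0$ and hence $\tau<\infty$ almost surely. If $\bar r=2d$, then $P$ is a nonnegative martingale, converging a.s.\ to an integer-valued limit $P_\infty$; since every transition rate is strictly positive whenever $P_t>0$, the only possible limit is $0$, so $\tau<\infty$ a.s.\ here as well.

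\textbf{Part (ii), lower bound.} I drive the Wright-type process $(\wN^{11},\wN^{12})$ of Section~\ref{section:dominationsto} (each type born at rate $\bar r(\wN^{11}+\wN^{12})/2$, killed at rate $d\wN^{ab}$) by the same Poisson atoms as $(N^{11},N^{12})$. The AM--HM inequality $\tfrac{x+y}{2}\geq\tfrac{2xy}{x+y}$ together with the componentwise monotonicity of $xy/(x+y)$ show that, as long as $\wN^{ab}\geq N^{ab}$, Wright's per-type birth rate exceeds the fecundity-selection one; a jump-by-jump induction then preserves $\wN^{ab}_t\geq N^{ab}_t$ for all $t$. Since the total $\wN_t=\wN^{11}_t+\wN^{12}_t$ is the continuous-time birth--death process of Proposition~\ref{propmomentntilde} with per-capita rates $(\bar r,d)$ and extinction probability $(d/\bar r)^{i+j}$, and since $\{\wN_t\to 0\}\subset\{\tau<\infty\}$, we conclude $p_{i,j}\geq (d/\bar r)^{i+j}$.

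\textbf{Part (ii), upper bound.} Let $M_t:=\min(N^{11}_t,N^{12}_t)$ and $\Lambda_t:=\max(N^{11}_t,N^{12}_t)$. Off the diagonal $\{M=\Lambda\}$, the minority species has per-capita birth rate $\bar r\,\Lambda_t/(M_t+\Lambda_t)\geq \bar r/2$ and per-capita death rate $d$, so heuristically $M$ should dominate from below a BDP $Y$ with per-capita rates $(\bar r/2,d)$ and $Y_0=\min(i,j)$, whose extinction probability is $(2d/\bar r)^{\min(i,j)}$. The obstacle is the diagonal, where a birth does not increase $M$ but $Y$'s birth clock still ticks, so a naive Poisson coupling can produce $Y>M$. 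I would proceed in two steps. First, a componentwise Poisson coupling (adding one extra individual only helps, because $xy/(x+y)$ is monotone while the extra death rate vanishes whenever the two trajectories agree) shows that $p_{i,j}$ is non-increasing in each coordinate, so $p_{i,j}\leq p_{m,m}$ with $m=\min(i,j)$ and it suffices to prove the bound from a diagonal start. Second, from $(m,m)$ I would couple $Y$ with $M$ by pausing the $Y$-clock during on-diagonal sojourns and forcing a $Y$-death at each on-diagonal death of the original process; pausing leaves $\P(Y=0\text{ eventually})$ unchanged because on $\{\tau=\infty\}$ the off-diagonal time is almost surely infinite, while the forced extra deaths only worsen $Y$'s survival. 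The hardest step is to control these forced on-diagonal deaths, which I would do by a renewal argument exploiting the supercriticality $\bar r>2d$: the expected off-diagonal growth of $Y$ between two consecutive diagonal visits strictly exceeds the expected loss at a visit.
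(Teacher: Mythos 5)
Parts (i) and the lower bound in (ii) are correct. For (i) you take a genuinely different route from the paper: the paper bounds the \emph{total} birth rate by $2\bar{r}ij/(i+j)\leq \tfrac{\bar{r}}{2}(i+j)$ and dominates the total population size by a critical or subcritical birth--death process, whereas you compute $\mathcal{L}(xy)=(\bar{r}-2d)xy$ and run a (super)martingale argument on the product, settling the critical case by noting that a convergent integer-valued martingale with jumps of size at least one must freeze, which is impossible while the jump rate stays bounded below by $2d$. Both arguments work; yours is the exact stochastic analogue of the identity $dz_t/dt=(\bar{r}-2d)z_t$ used in the large-population section. The lower bound via domination by the Wright-type process $(\wN^{11},\wN^{12})$ is the paper's own argument.

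The gap is in the upper bound. You correctly identify the key comparison (off the diagonal the minority type has per-capita birth rate $\bar{r}\Lambda_t/(M_t+\Lambda_t)\geq\bar{r}/2$ and death rate $d$), and, to your credit, you also spot the diagonal obstruction that the paper's proof passes over in silence: on $\{N^{11}=N^{12}\}$ the minimum $M_t$ has birth rate $0$ and death rate $2dM_t$, so $M$ does \emph{not} stochastically dominate the birth--death process $Y$ with per-capita rates $(\bar{r}/2,d)$ (equivalently, $h(i,j)=(2d/\bar{r})^{\min(i,j)}$ fails to be superharmonic on the diagonal). Your step (a), the componentwise monotonicity of $p_{i,j}$ reducing everything to a diagonal start, is sound and is a genuine refinement over the paper. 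But step (b) does not close the argument: forcing extra deaths on $Y$ at on-diagonal deaths produces a process $Y'$ that is \emph{worse} than $Y$, so the domination $M\geq Y'$ only yields $p_{m,m}\leq\P(Y'\mbox{ dies})$ while $\P(Y'\mbox{ dies})\geq(2d/\bar{r})^{m}$ --- the inequality points the wrong way. The entire burden of recovering the stated geometric bound is shifted onto the unexecuted ``renewal argument'', and it is not clear that such an argument returns exactly $(2d/\bar{r})^{m}$ rather than some larger quantity. As written, the upper bound in \eqref{etape8} is not established. (For the record, the paper simply asserts that the total size is bounded below by $Y$ started from $\min(i,j)$ without addressing the diagonal, and itself remarks afterwards that its couplings ``are not likely to be optimal''; so you have diagnosed the weak point accurately but not yet repaired it.)
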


\begin{proof}We begin with the proof of (i). We first assume that $\bar{r}\leq 2d$. When in the state $(i,j)
\in \N^2\setminus \{(0,0)\}$, with $i\leq j$, the total death rate of the population is $d(i+j)$ and the total birth rate is:
\begin{align*}
 2\bar{r}\frac{ij}{i+j}=  \bar{r}\frac{j}{i+j} i+\bar{r}\frac{i}{i+j} j
 = &   \bar{r}\big(\frac{1}{2}+\frac{|j-i|}{2(i+j)}\big)i+\bar{r}\big(\frac{1}{2}-\frac{|j-i|}{2(i+j)}\big)j\\
 =  & \frac{\bar{r}}{2}(i+j)+\frac{r}{2}\frac{|j-i|(i-j)}{i+j}
 \leq   \frac{\bar{r}}{2}(i+j).
\end{align*}Hence, it is possible to dominate the total population size by a continuous time Galton-Watson process with individual birth and death rates $\bar{r}/2$ and $d$. For $\bar{r}\leq 2d$, the latter process is critical or subcritical and extinction is almost sure.\\

We now prove (ii), and assume that $\bar{r}>2d$. Since the birth rates are bounded above by $\bar{r}i$ and $\bar{r}j$, $(N^{11}_t,N^{12}_t)_{t\in \R_+}$ is bounded above by the process $(\widetilde{N}^{11}_t,\widetilde{N}^{12}_t)$ of Section \ref{section:dominationsto}. We turn to the minoration. When $(N^{11}_t,N^{12}_t)=(i,j)$ with $i\leq j$, then the birth rate for the population $\{1,1\}$ is:
$$\bar{r}\frac{ij}{i+j}\geq \frac{\bar{r}}{2}i.$$Hence, as long as $N^{11}_t\leq N^{12}_t$, $N^{11}$ is stochastically lower bounded by a supercritical continuous birth and death process with individual birth and death rates $\bar{r}$ and $d$. Similarly, when $j< i$, $N^{12}$ is stochastically lower bounded by the same process.\\
As a consequence, the total size of the population $(N^{11}_t + N^{12}_t)_{t\in \R_+}$ is stochastically lower bounded by the supercritical continuous time birth and death process introduced above, started at $\min(N^{11}_0,N^{12}_0)$, and which survives with probability $1-(2d/\bar{r})^{\min(N^{11}_0,N^{12}_0)}$ (see Prop. \ref{prop4.4}).
\end{proof}

\begin{remark}
 Notice that the upper and lower bounds in \ref{etape8} are not tight and the couplings used in the proof are not likely to be optimal. For the lower bound, the bound 1 for $i/(i+j)$ and $j/(i+j)$ is rough. For the upper bound, we have only considered the less represented component in the population.\qed
\end{remark}

\subsection{Compatible population without pollen limitation nor inbreeding depression (Model 4)}\label{section:birthdeathprocess}

As a reference for the previous simulations, we provide the results for populations where there is no self-incompatibility. This corresponds to the case where $\forall x\in \R^n,\, \Phi(x)=0$. In this situation, we can forget the phenotype of the individuals. This leads us to consider a population where individuals reproduce with an individual rate $R(N)$ and die with a rate $d$, $N$ being the size of the population. Notice that this corresponds to the process $(\wN_t)_{t\geq 0}$ studied in Proposition \ref{propmomentntilde}. \\

The Markov chain embedded in the continuous time branching process is the chain on $\N$, with transition probabilities $q_{i,j}$ from $i$ to $j$ defined for $i>0$ by:
\begin{align}
& q_{i,i+1}=\frac{R(i)}{R(i)+d},\qquad q_{i,i-1}=\frac{d}{R(i)+d} ,\qquad q_{0,0}=1\quad \mbox{ and else }\quad q_{i,j}=0.
\end{align}
Let $p_i$ be the extinction probability when the initial population is of size $i$.

\begin{proposition}Let $p_1$ be given. For $i\geq 1$,
\begin{align}
p_{i+1}=p_1\left(1+\sum_{j=1}^i \frac{d^j}{R(1)\dots R(j)}\right)-\sum_{j=1}^i \frac{d^j}{R(1)\dots R(j)}.
\end{align}
\end{proposition}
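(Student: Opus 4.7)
The plan is to derive and iterate a first‑order recurrence obtained from the one‑step Markov property, which turns the statement into a telescoping sum.

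First, by conditioning on the first jump of the embedded chain (Markov property at time $1$), for every $i\geq 1$ the extinction probabilities satisfy
\begin{equation*}
p_i=\frac{R(i)}{R(i)+d}\,p_{i+1}+\frac{d}{R(i)+d}\,p_{i-1},
\end{equation*}
together with the boundary condition $p_0=1$ coming from the absorbing state $q_{0,0}=1$. This is the analogue in dimension one of the Dirichlet‑type equation \eqref{eq3}: the $p_i$'s form the minimal positive solution, and once $p_1$ is prescribed the whole sequence is determined, which is exactly why the statement takes $p_1$ as a parameter.

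Next, I would multiply through by $R(i)+d$ and rearrange to eliminate $p_i$ on the left in favor of first differences:
\begin{equation*}
R(i)\bigl(p_{i+1}-p_i\bigr)=d\bigl(p_i-p_{i-1}\bigr),\qquad i\geq 1.
\end{equation*}
Setting $a_i=p_i-p_{i-1}$ for $i\geq 1$, this reads $a_{i+1}=(d/R(i))\,a_i$, so by immediate induction
\begin{equation*}
a_{i+1}=\frac{d^{\,i}}{R(1)\cdots R(i)}\,a_1=\frac{d^{\,i}}{R(1)\cdots R(i)}\,(p_1-1),
\end{equation*}
using $a_1=p_1-p_0=p_1-1$.

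Finally I would telescope. Writing $p_{i+1}=p_1+\sum_{k=1}^{i}a_{k+1}$ and substituting the expression above yields
\begin{equation*}
p_{i+1}=p_1+(p_1-1)\sum_{k=1}^{i}\frac{d^{\,k}}{R(1)\cdots R(k)}=p_1\Biggl(1+\sum_{k=1}^{i}\frac{d^{\,k}}{R(1)\cdots R(k)}\Biggr)-\sum_{k=1}^{i}\frac{d^{\,k}}{R(1)\cdots R(k)},
\end{equation*}
which is the announced formula. There is no real obstacle here; the only delicate point worth a remark is that the recurrence admits a one‑parameter family of solutions, and specifying $p_1$ (obtained separately, e.g.\ as $\min(1,d/\bar r)$ when $R\equiv\bar r$, consistently with Prop.~\ref{propmomentntilde}) singles out the extinction probabilities as the minimal positive one.
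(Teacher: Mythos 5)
Your proposal is correct and follows essentially the same route as the paper: the one-step Markov recurrence $p_i=\frac{d}{R(i)+d}p_{i-1}+\frac{R(i)}{R(i)+d}p_{i+1}$ with $p_0=1$, the first-difference relation $R(i)(p_{i+1}-p_i)=d(p_i-p_{i-1})$ iterated to give $p_{i+1}-p_i=\frac{d^i}{R(1)\cdots R(i)}(p_1-1)$, and the telescoping sum $p_{i+1}=p_1+\sum_{j=1}^i(p_{j+1}-p_j)$. The closing remark on minimality and on $p_1$ being determined separately (as in the constant-rate example) matches the paper's surrounding discussion.
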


\begin{proof}
Using the strong Markov property at the time of the first event \citep[see \eg][]{baldimazliakpriouret}:
\begin{align}
p_i=\frac{d}{R(i)+d}p_{i-1}+\frac{R(i)}{R(i)+d}p_{i+1},
\end{align}where by convention $p_0=1$. We deduce from this that:
\begin{align}
p_{i+1}-p_i = & \frac{d}{R(i)}(p_i-p_{i-1}) = \frac{d^i}{\prod_{j=1}^i R(j)}(p_1-1).
\end{align}by recursion. The result follows by using $p_{i+1}=p_1+\sum_{j=1}^i(p_{j+1}-p_j)$.
\end{proof}

\begin{example}\label{ex:model4-fin}In the case where the individual reproduction rate is constant $R(i)=\bar{r}$,
\begin{align}
\sum_{j=1}^i \left(\frac{d}{\bar{r}}\right)^j=\frac{d}{\bar{r}-d}\left(1-\Big(\frac{d}{\bar{r}}\Big)^i\right).
\end{align}In this case, $(\widetilde{N}_t)_{t\geq 0}$ is moreover a one-dimensional continuous time Markov branching process \citep[\eg][Chap. III]{athreyaney} and we know that $p_1$ solves $g(s)=s$ where $g(s)$ is the generating function of the offspring distribution:
$$g(s)=\frac{d}{d+\bar{r}}+\frac{\bar{r}}{d+\bar{r}}s^2.$$This gives
\begin{equation}\label{SCsmall}
p_1=\frac{d}{\bar{r}}\qquad\mbox{ and hence }\qquad\forall i\geq 1,\, p_{i}=\left(\frac{d}{\bar{r}}\right)^i,
 \end{equation}which is expected since the branching property holds in this case, contrary to cases where $R(.)$ is not constant and where there is interaction between the individuals.
\end{example}

In a nutshell, we have obtained the following estimates for the $p_{i,j}$'s:
\begin{table}[!ht]
\begin{center}
\begin{tabular}{|l|l|l|}
\hline
Case & Case description & Extinction probabilities (or bounds) when started from $(i,j)$\\
\hline
\multicolumn{3}{|c|}{Wright's model (Model 1)}\\
\hline
(a-b.1) & $\bar{r}\leq d$ & $p_{i,j}=1$\\
(c.1) & $\bar{r}>d$ & $(d/\bar{r})^{i+j}\leq p_{i,j}\leq (d/\bar{r})^i+(d/\bar{r})^j-(d/\bar{r})^{i+j}$\\
\hline
\multicolumn{3}{|c|}{Fecundity selection model (Model 3)}\\
\hline
(a-b.3) & $\bar{r}\leq 2d$ & $p_{i,j}=1$\\
(c.3) & $\bar{r}>2d$ & $(d/\bar{r})^{i+j}\leq p_{i,j}\leq (2d/\bar{r})^{\min(i,j)}$\\
\hline
\multicolumn{3}{|c|}{Self-compatible model without inbreeding depression (Model 4)}\\
\hline
(a-b.4) & $\bar{r}\leq d$ & $p_{i,j}=1$\\
(c.4) & $\bar{r}>d$ & $p_{i,j}=(d/\bar{r})^{i+j}$.\\
\hline
\end{tabular}
\caption{{\small \textit{Summary of the bounds on the extinction probabilities $p_{i,j}$ defined in Section \ref{section:recurrence}. It is seen that the disjunction between sub, super and critical regimes is the same as for large population. In the super-critical case, there remains however always a positive probability of extinction.}}}\label{tab1}
\end{center}
\end{table}

\section{Simulations}\label{section:simulations}

\subsection{Simulation algorithm}\label{algosimu}
The population dynamics described in the previous Sections \ref{inddynadescr} and \ref{sectionreproductionrate} can be simulated with the following algorithm \citep[see][]{fourniermeleard}. Notice that the algorithm that we propose is exact (in the sense that it describes exactly the dynamics described above without approximation scheme). Assume that the population is known at time $t$. Then:
\begin{enumerate}
\item We define the total event rate at the population level by:
\begin{equation}
C_t=\sum_{1\leq u\leq v\leq n} R(\overline{N}_t^{uv},N_t)\overline{N}_t^{uv} + N_t \, d.
\end{equation}
\item The next event time is $t'=t+\tau$ where $\tau$ is an independent random variable that is exponentially distributed with parameter $C_t$.
\item We then draw an independent uniform random variable $\theta$. \\
If $0\leq \theta\leq \sum_{1\leq u\leq v\leq n}R(\overline{N}_t^{uv},N_t)\overline{N}_t^{uv}/C_t$ then a birth happens:
\begin{enumerate}
\item The ovule is of type $\{u,v\}$ with probability $$R(\overline{N}_t^{uv},N_t)\overline{N}_t^{uv}/\sum_{1\leq u\leq v\leq n} R(\overline{N}_t^{uv},N_t)\overline{N}_t^{uv}.$$
\item The pollen is then of type $\{u',v'\}$ with probability $\overline{p}_t^{uv}(u',v')$.
\item The offspring is then of genotype $\{u,u'\}$, $\{u,v'\}$, $\{v,u'\}$ or $\{v,v'\}$ with probability 1/4.
\end{enumerate}
If $\theta> \sum_{1\leq u\leq v\leq n} R(\overline{N}_t^{uv},N_t)\overline{N}_t^{uv}/C_t$ then an individual dies. This individual is drawn uniformly among the living individuals.
\end{enumerate}

\subsection{Simulations performed in the case of a distylous species}\label{simu}
Each simulated curve is obtained as the average on 5000 simulations of paths $(N_t^{11}, N^{12}_t)_{t\in [0,10000]}$. The estimated extinction probability
is obtained as the mean number of extinctions before time 10000.
The simulations were run for varying number of different genotypes as initial conditions when considering Wright's model (Model 1, Section \ref{sectionreproductionrate}), while $N_0^{11}=N_0^{12}=1$ when considering the dependence model (Model 2). \\

In Fig. \ref{fig_table1} the extinction probability for a constant reproduction rate $\overline r = 2$ is shown for an increasing death rate, for both the distylous and the self-compatible case, for different initial numbers of the two genotypes, under the Wright's model. Fig. \ref{fig_table1} shows that in small population, the extinction probabilities are higher for the distylous than for the self-compatible population. As expected, the difference between the distylous and the self-compatible populations is lower when the initial population size increases, as shown in proposition \ref{propODErconstant} for Wright's model. The results shown for Wright's model shows the importance of the absorbing effect that increases the extinction probability for the distylous population. We also see that when $\alpha < +\infty$ in \eqref{functionrN}, the extinction probabilities are higher than in Wright's model, both for distylous and self-compatible populations. This reflects that when pollen limitation is high, it is more difficult to encounter a mate and to produce offspring.

\begin{figure}[!h]
\begin{center}
\begin{tabular}{cc}
(a) & (b) \\
\includegraphics[width=0.4\textwidth,height=0.27\textheight,angle=-90,trim=1cm 1cm 1cm 1cm]{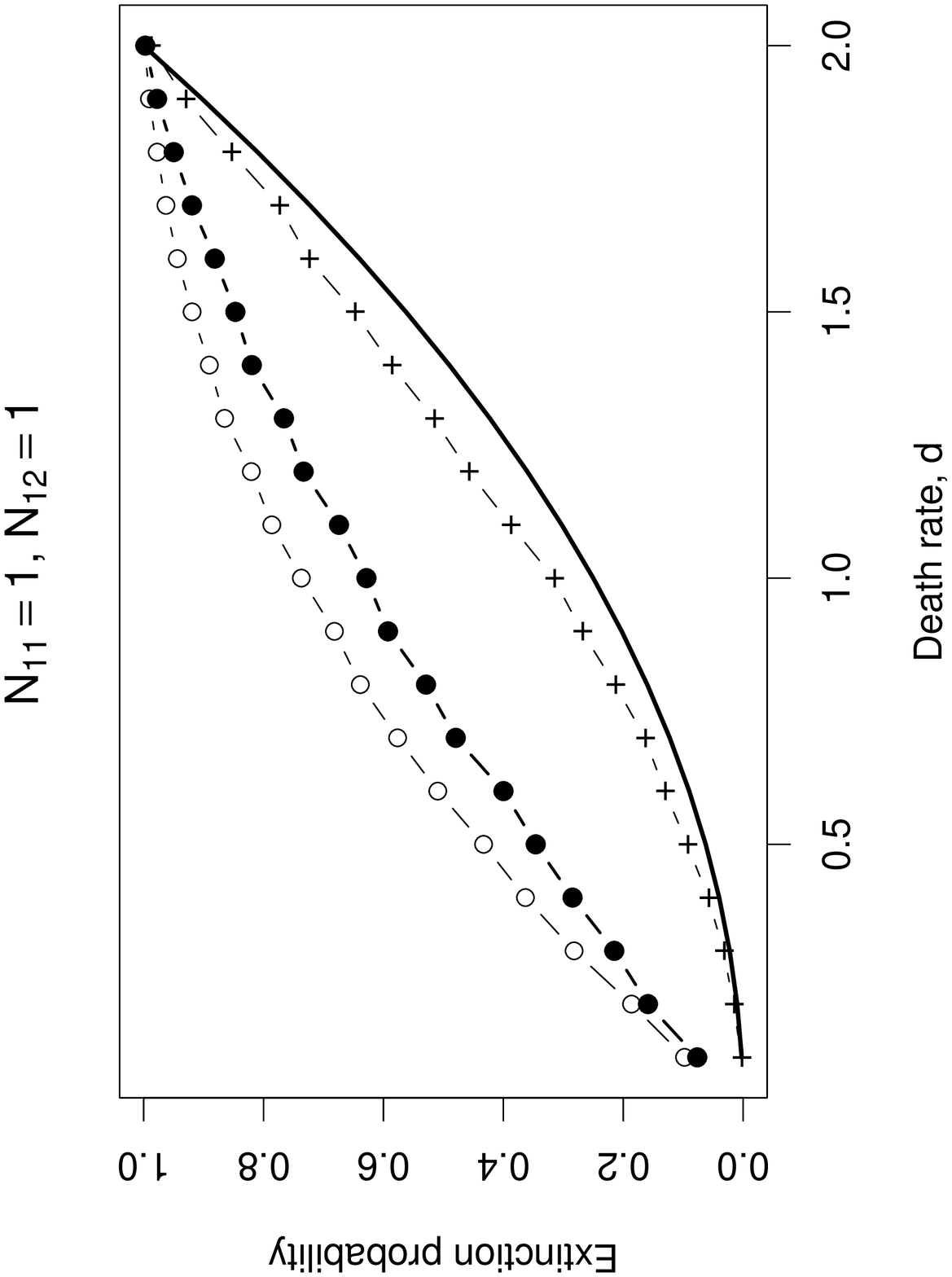}
 &
\includegraphics[width=0.4\textwidth,height=0.27\textheight,angle=-90,trim=1cm 1cm 1cm 1cm]{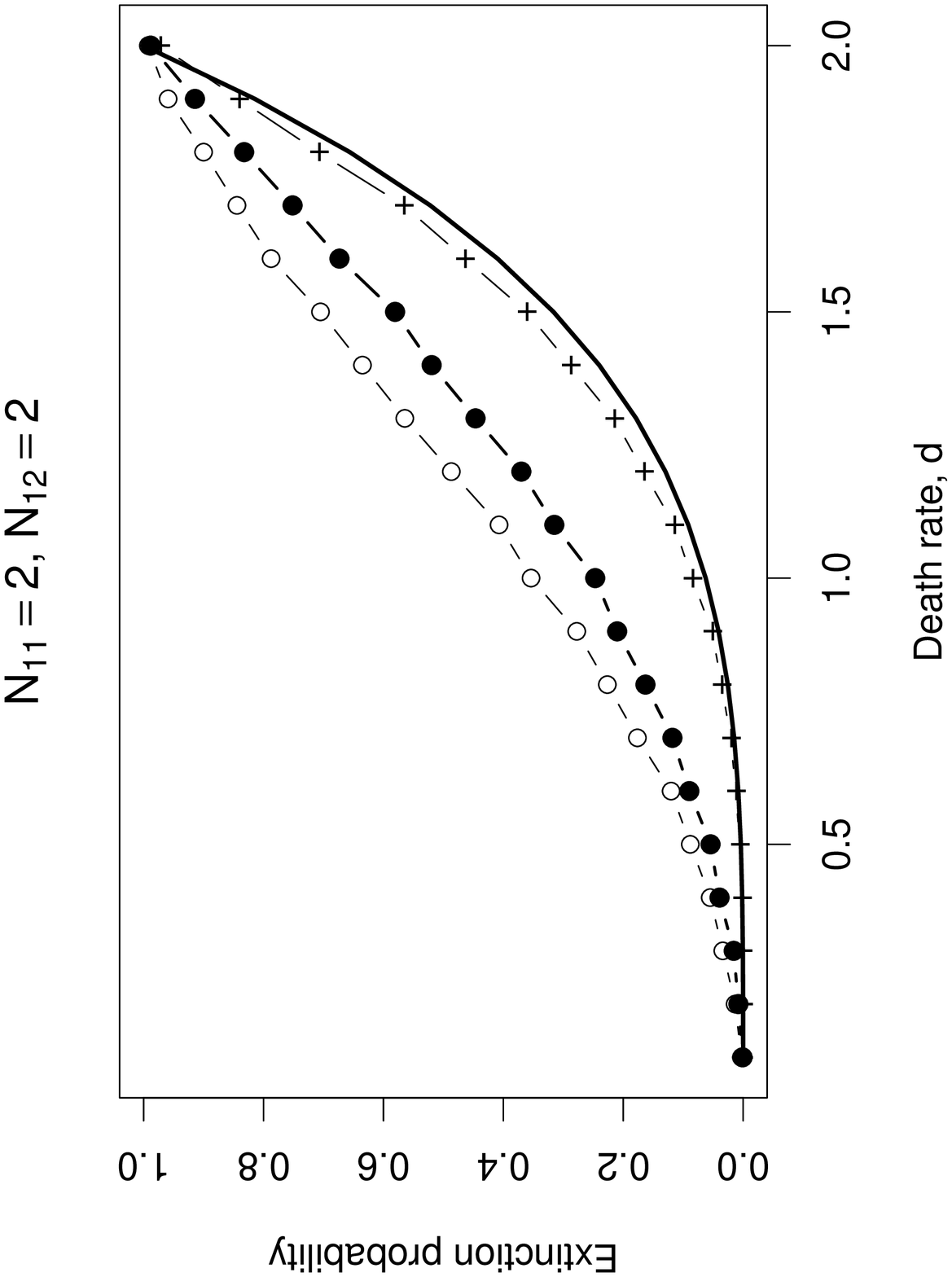}\\
\\
(c) & (d)\\
\includegraphics[width=0.4\textwidth,height=0.27\textheight,angle=-90,trim=1cm 1cm 1cm 1cm]{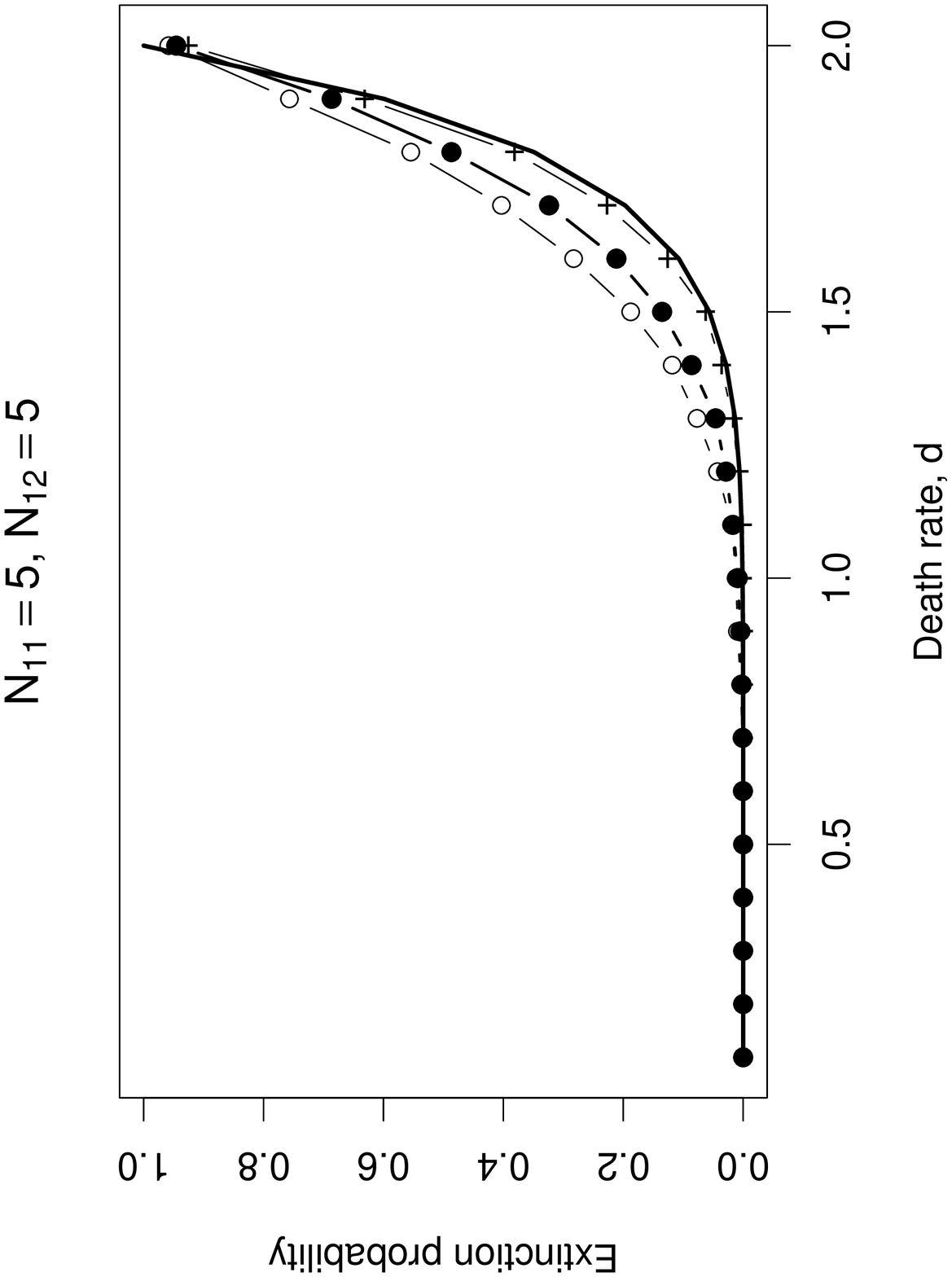}
 &
\includegraphics[width=0.4\textwidth,height=0.27\textheight,angle=-90,trim=1cm 1cm 1cm 1cm]{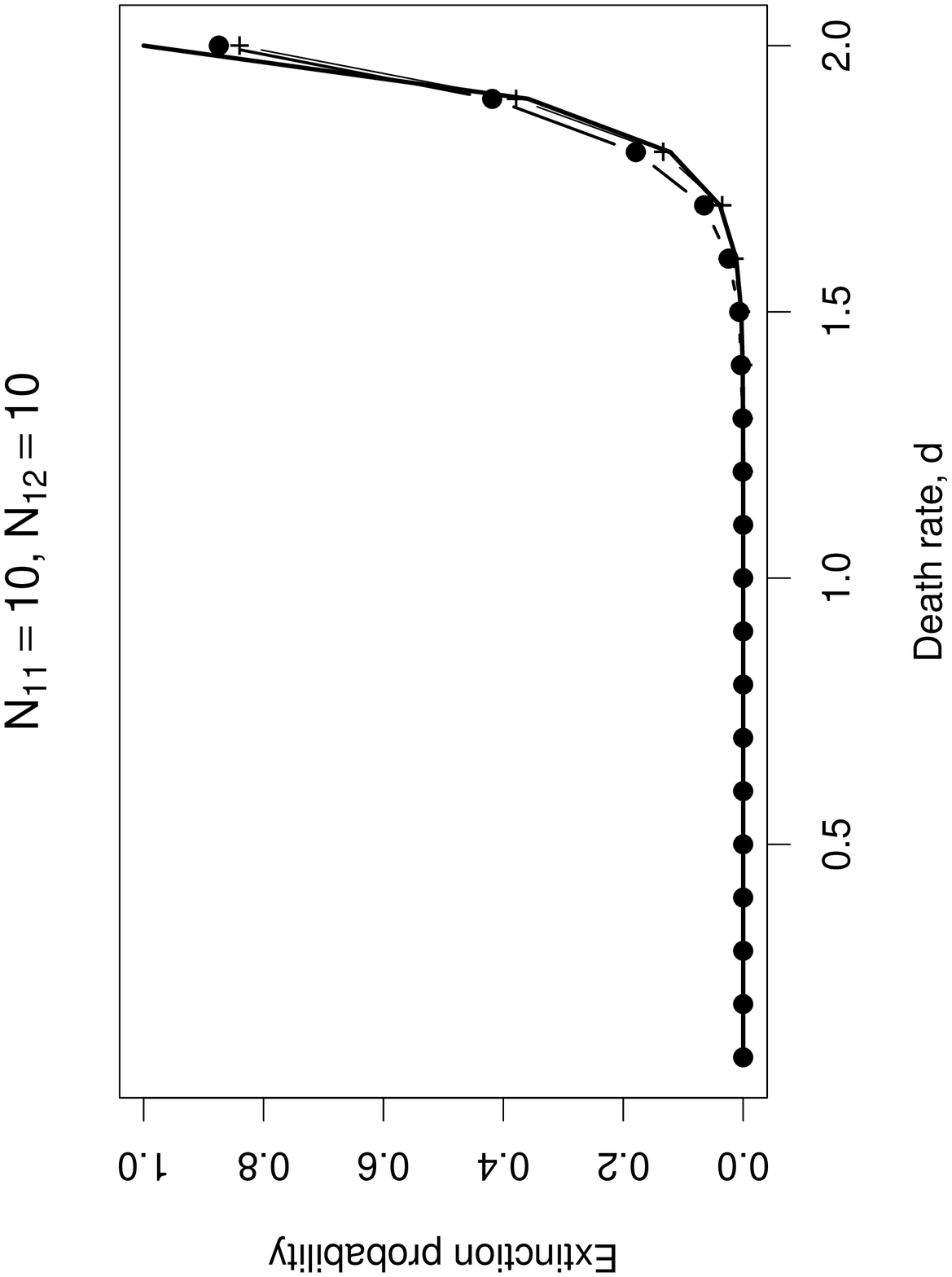}\\
\end{tabular}
\caption{\textit{Comparison of the extinction probabilities of $(N^{11}_t,N_t^{12})_{t\geq 0}$ for a constant reproduction rate $\bar{r}=2$ when the initial number of the different genotypes varies (indicated in the top of the box). Thick line: Extinction probability in the self-compatible case (from equation \ref{SCsmall} ; Crosses: self-compatible case with pollen limitation ($\alpha=\beta=1$) ; Full circles: distylous case with Wright's model ; Circles: distylous case with the dependence model ($\alpha=\beta=1$).}} \label{fig_table1}
\end{center}
\end{figure}

In Fig. \ref{fig_ratio}, we estimated the ratio $\rho$ of the extinction probability in the self-compatible case on the extinction probability in the SI case, which are obtained
from our simulations results. The Fig. \ref{fig_ratio}(a) shows that the ratio $\rho$ gets smaller when the initial population size is higher, which means that the extinction probabilities for the self-compatible population decreases more rapidly when $N$ increases than for the distylous population. This effect is even worse when the initial population is asymmetric, that is when a given genotype is more frequent than the other (compare filled with empty symbols). Once again, the differences between distylous and self-compatible populations shown in Fig. \ref{fig_ratio}(a) are only due to the absorbing effects. Fig. \ref{fig_ratio}(b) shows the ratio $\rho$ for different initial conditions under the dependence model with $\alpha=1$ and $\beta=100$. Fig. \ref{fig_ratio}(c) shows similar results but for different strength of pollen limitation ($\alpha < +\infty$) with $N_0^{11}=N_0^{12}=1$. It is remarkable in this figure that when the strength of the pollen limitation is high ($\alpha$ is small and $\beta$ is large) then the ratio $\rho$ is higher, which means that the higher the pollen limitation, the lower the difference between distylous and self-compatible populations.

\begin{figure}[!h]
\begin{center}
\begin{tabular}{cc}
(a) & (b)  \\
\includegraphics[width=0.4\textwidth,height=0.27\textheight,angle=-90,trim=1cm 1cm 1cm 1cm]{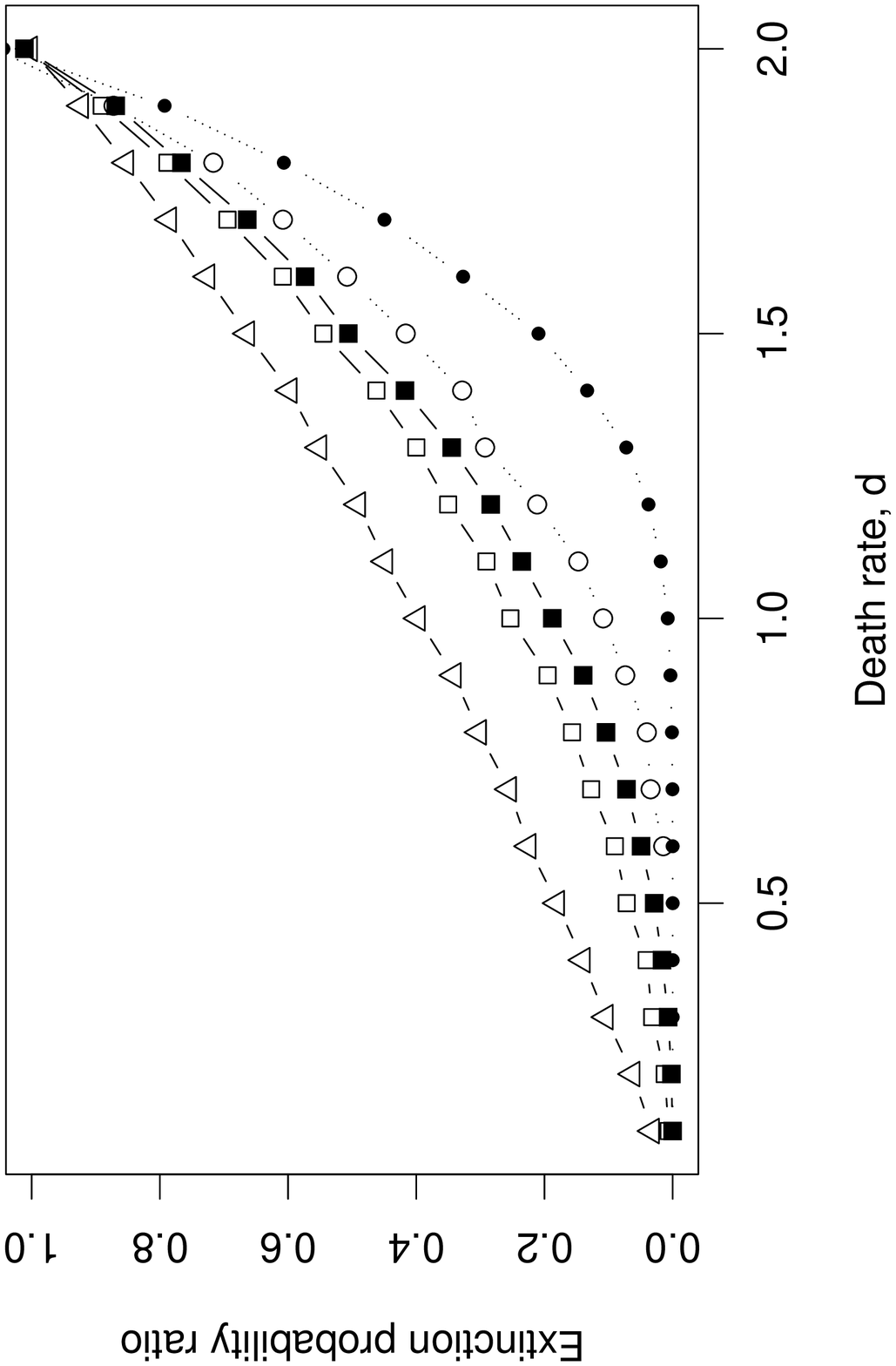}
 &
\includegraphics[width=0.4\textwidth,height=0.27\textheight,angle=-90,trim=1cm 1cm 1cm 1cm]{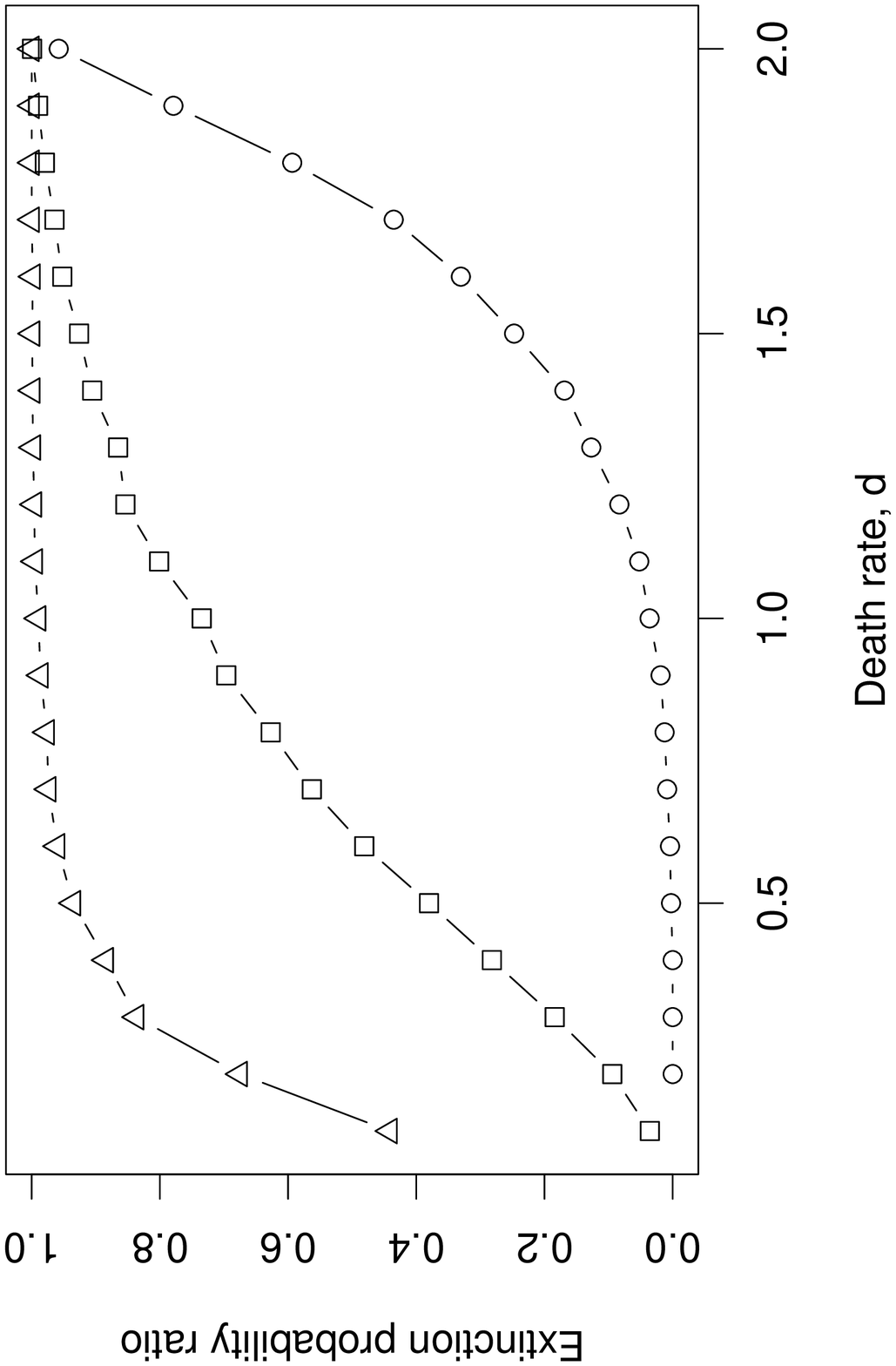}
 \\
\multicolumn{2}{
c}{ (c)} \\
\multicolumn{2}{c}{\includegraphics[width=0.4\textwidth,height=0.27\textheight,angle=-90,trim=1cm 1cm 1cm 1cm]{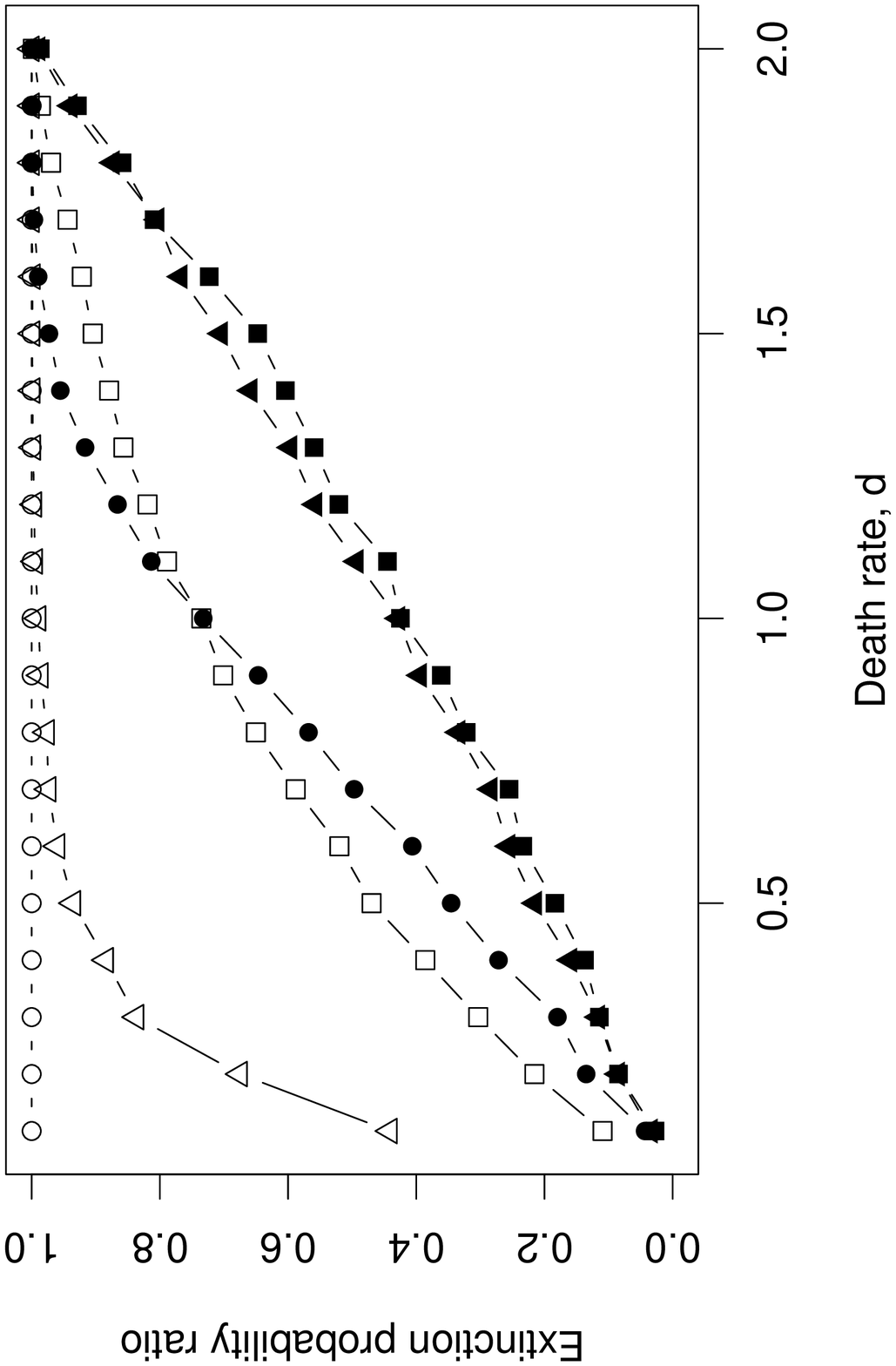}}
\end{tabular}
\caption{\textit{Ratio of extinction probabilities in self-compatible cases over distylous case with $\overline{r}=2$, varying initial conditions and varying $\alpha$ and $\beta$. (a): Wright's model, $N^{11}_0=N_0^{12}=1$ (triangles); $N^{11}_0=N_0^{12}=2$ (empty squares); $N^{11}_0=N_0^{12}=5$ (empty circles); $N^{11}_0=3$ and $N_0^{12}=1$ (filled squares); $N^{11}_0=9$ and $N_0^{12}=1$ (filled circles). (b): Dependence model with $\alpha=1$ and $\beta=100$ and varying initial conditions: $N^{11}_0=N_0^{12}=1$ (triangles), $N^{11}_0=N_0^{12}=2$ (squares),$N^{11}_0=N_0^{12}=5$ (circles), $N^{11}_0=N_0^{12}=10$ (crosses). (c): Dependence model with $N^{11}_0=N_0^{12}=1$ and $\beta=1$ (filled symbols), $\beta=100$ (empty symbols),  $\alpha=0.1$ (circles), $\alpha=1$ (triangles), $\alpha=2$ (squares).}}\label{fig_ratio}
\end{center}
\end{figure}

\section{Discussion}

\paragraph{Impact of pollen limitation versus demographic stochasticity and boundary effects}
Thanks to the three relationships between the compatible population size and the reproductive rates we assumed (see \ref{sectionreproductionrate} and Fig. \ref{fig:billiardtranSSI}), we are able to disentangle the relative effects of demographic stochasticity and pollen limitation on the fate of the populations. Indeed, when Wright's model is assumed, there is no pollen limitation. In other words all individuals of a given mating types receive enough pollen to fertilize all their ovules, as long as at least one compatible individual is present in the population. This assumption introduces a discontinuity in the individual rate of seeds production. On the other extreme, the fecundity selection model has a rate of seeds production that vanishes continuously at the boundaries. 
Finally, we investigated the combined effect of pollen limitation and demographic stochasticity thanks to the dependence model.\\

We have first studied large populations. We exhibited three different regimes, depending on the relationships between the birth and death parameters. Under the subcritical regime, the population goes extinct when $t\rightarrow +\infty$, while its size tends to infinity under the supercritical regime.
As expected, 
when there is no pollen limitation, a distylous population behaves like a self-compatible population. When there exists an equilibrium, the population size is $n^*$ in both cases. To the contrary, under the fecundity selection model, the size at equilibrium is $2n^*$ showing that pollen limitation can have a large effect on the dynamics of distylous populations, even in large ones. In the dependence model, a saddle point appears that makes condition for the maintenance of the population more scarce than in Wright's model and favors populations with symmetry in the initial conditions.
\par In small populations, the separation into subcritical, critical or supercritical regimes is the same as for large populations. In supercritical cases, we are able to find lower and upper bounds of the probability of extinction of a distylous population under the Wright's and fecundity selection models. Again, the main difference between these models relies on the fact the critical condition for the survival of the population is twofold higher in the fecundity selection model. \\
Our simulations show that when pollen limitation is low ($\alpha=1$ and $\beta=1$, see Fig. \ref{fig_table1}), the increase in the probability of extinction is small relative to the impact of demographic stochasticity. When we compare the extinction
probabilities between a distylous and a self-compatible population by the measure $\rho$, we see that the higher the pollen limitation (when $\alpha$ decreases), the lower the difference between them since $\rho$ increases (see Fig. \ref{fig_ratio}).
Those results have important ecological and evolutionary consequences since they suggest pollen limitation plays a minor role in small populations relative to demographic stochasticity, and especially the stochastic loss of one of the two mating types. When pollen limitation is very large (when $\alpha$ is low), small populations of self-compatible and self-incompatible species tend to behave similarly. In short, our results suggest that pollen limitation plays a major role in large population only.

\paragraph{Mate Finding Allee effect, inbreeding depression and the evolution of SI}
One of the most intriguing and long-standing problem in evolutionary biology resides in the existence and maintenance of SI. Indeed, the establishment probability is lower in the case of SI compared to self-compatible species since a single individual is sufficient to colonize empty spaces (the Baker's law, \citet{pannellbarrett98}). Furthermore, as shown here, the extinction probability is higher for SI than for self-compatible species in absence of inbreeding depression. It has been also shown that in infinite populations, the conditions for the invasion of SI populations by a self-compatible mutant are less stringent when there is strong pollen limitation \citep[\eg][]{porcherlande_evolution}. The process generally invoked to explain the existence and maintenance of SI is inbreeding depression: SI can be advantageous relatively to self-compatibility when the cost of inbreeding depression caused by self-fertilization is high. Inbreeding depression can play a role in two ways: by preventing the invasion of self-compatible mutant \citep{porcherlande_evolution}, or by increasing the extinction rate of obligate selfing species relatively to SI species \citep{lynchetal95}, what is suggested in data since self-compatible species are most often localised at the leaves of phylogenies, especially in plant families where distyly is present (\eg in Amsinckia \citet{schoenetal97}, in Narcissus \citet{perezbarralesetal06} and in Psychotria \citet{sakaiwright08}).  Here, we proposed a very simple model taking into account inbreeding depression caused by self-fertilization, to investigate if there are conditions under which the extinction probabilities are higher for SC populations than for SI populations. We found that there are some conditions in large populations where the size at equilibrium is lower in SI populations than in self-compatible populations: under strong inbreeding depression, SI populations may be less sensitive to extinction.

\paragraph{What about more complex dominance networks?}
Although we developed a general model for the population dynamics of SSI species, we mainly investigated  the dynamics of a distylous species. Distyly is however the case where the impact of the existence of mating types is the highest since only two mating types exist. When the number of mating types increases, the proportion of compatible individuals also increases. It would be interesting to investigate the impact of the dominance relationships among S-alleles on the extinction of population, to check if there are dominance interactions patterns that are less sensitive to extinction than others, as highlighted by \citet{kirchnerrobertcolas}: they showed that populations where all S-alleles are codominant have a higher extinction rate than populations where a linear hierarchy of dominance exist between S alleles (the DOM model, see \eg \citet{billiardcastricvekemans}). Our model could be used to investigate the effect of these dominance interactions more precisely.

\appendix

\section{SDEs and Proofs of the Section \ref{sectionODE} Propositions}

\subsection{SDEs}
Following \citet{fourniermeleard}, we present a SDE describing the evolution of $(Z_t)_{t\in \R_+}$.

\begin{definition}\label{defsde}
 Let $Q(ds,dg',d\theta)$ be a Poisson point measure on $\R_+\times E\times \R_+$ with intensity $q(ds,dg',d\theta)=ds\,dn(g')\,d\theta$, where $ds$ and $d\theta$ are Lebesgue measures on $\R_+$ and where $dn(g')$ is the counting measure on $E$. To each atom from $Q(ds,dg',d\theta)$ are hence associated a time of possible event $s$, the genotype $g'$ that either appears or dies and an auxiliary variable $\theta$ that decides what happens (with a role similar to the variable $\theta$ in Point 3 of the algorithm of Section \ref{algosimu}):
 \begin{multline}
 Z_t(dg)=Z_0(dg) +\int_0^t \int_{E\times \R_+} \delta_{g'}(dg) \Big(\ind_{\theta\leq r^{g'}(Z_{s_-})}\\
 -  \ind_{r^{g'}(Z_{s_-})<\theta\leq r^{g'}(Z_{s_-})+d\times N^{g'}_{s_-}}\Big) Q(ds,dg',d\theta),\label{sde}
 \end{multline}and where $r^{g'}(Z_t)$ has been defined in (\ref{tauxreprod(uv)}).\qed
\end{definition}

Existence and uniqueness of a solution of (\ref{sde}) are stated in the next proposition. Moreover, it is possible, for a given test function $f$, to derive from (\ref{sde}) equations for the evolution of $(\langle Z_t,f\rangle)_{t\in \R_+}$.

\begin{proposition}\label{propsde}
(i) If $\E(N_0)<+\infty$, then there exists a unique solution to SDE (\ref{sde}).\\
(ii) If additionally $\E(N_0^2)<+\infty$ then for any bounded test function $f$ on $\E$,
\begin{align}
\langle Z_t,f\rangle= & \langle Z_0,f\rangle +\int_0^t  \sum_{\{u,v\}\in E}\left(r^{uv}(Z_s) -d\times N^{uv}_s\right)f(\{u,v\}) ds+M^f_t
\end{align}where $(M^f_t)_{t\in \R}$ is a square integrable martingale starting from 0 with quadratic variation:
\begin{align}
\langle M^f\rangle_t= & \int_0^t  \sum_{\{u,v\}\in E}  \big(r^{uv}(Z_s) + d\times N^{uv}_s\big)f^2(u,v) ds.
\end{align}
\end{proposition}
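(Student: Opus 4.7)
\textbf{Proof plan for Proposition \ref{propsde}.}

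The plan is to follow the now-standard scheme of \citet{fourniermeleard}. For part (i), I will first construct a candidate solution by a pathwise step-by-step procedure on successive atoms of $Q$ and then prove uniqueness by a coupling argument. The key preliminary is a non-explosion bound: since each reproduction rate is controlled by $r^{uv}(Z_s) \le \bar r\, N_s$ and the total death rate is $d N_s$, the SDE \eqref{sde} applied to $f\equiv 1$ yields, after taking expectations of the non-negative jumps that increase $N_t$,
\begin{equation*}
\E(N_t) \le \E(N_0) + \bar r \int_0^t \E(N_s)\, ds,
\end{equation*}
so that Gronwall gives $\E(N_t)\le \E(N_0)e^{\bar r t}<+\infty$. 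Enumerating the atoms of $Q$ inside $[0,T]\times E\times [0,K]$ for each $K$ and observing that $\theta$ never needs to exceed $(\bar r + d)N_{s-}$ then shows that only finitely many atoms actually trigger jumps on any bounded interval almost surely, so the step-by-step construction produces a global solution. Uniqueness follows by feeding two solutions $Z$ and $\tilde Z$ through the same Poisson measure, bounding the total variation $\E\langle |Z_t-\tilde Z_t|,1\rangle$ by an integral of itself using Lipschitz continuity of $g'\mapsto r^{g'}(\cdot)$ on bounded sets, and applying Gronwall.

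For part (ii), I will apply \eqref{sde} integrated against $f$. Because the jumps of $Z$ are the Dirac masses $\pm \delta_{g'}$ triggered at atoms of $Q$, one gets
\begin{equation*}
\langle Z_t,f\rangle = \langle Z_0,f\rangle + \int_0^t\!\!\int_{E\times\R_+}\! f(g')\Big(\ind_{\theta\le r^{g'}(Z_{s-})}-\ind_{r^{g'}(Z_{s-})<\theta\le r^{g'}(Z_{s-})+dN^{g'}_{s-}}\Big)Q(ds,dg',d\theta).
\end{equation*}
Splitting this integral into its compensator $\hat Q(ds,dg',d\theta)=ds\,dn(g')\,d\theta$ and the compensated martingale measure $\widetilde Q=Q-\hat Q$ yields the drift $\sum_{\{u,v\}}(r^{uv}(Z_s)-dN^{uv}_s)f(\{u,v\})$ after integrating out $\theta$, and the martingale term $M^f_t$. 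The bracket of $M^f$ is the compensator of its quadratic variation, which for a purely discontinuous martingale driven by a Poisson random measure is the integral of the square of the integrand against $\hat Q$; integrating $f^2(g')$ against the $d\theta$-length of each jump interval gives exactly
\begin{equation*}
\langle M^f\rangle_t = \int_0^t \sum_{\{u,v\}\in E}(r^{uv}(Z_s)+d\, N^{uv}_s) f^2(\{u,v\})\,ds.
\end{equation*}

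The main obstacle is upgrading $M^f$ from a local martingale to a genuine square-integrable martingale: this is where the second moment assumption $\E(N_0^2)<+\infty$ enters. Applying the SDE to $f\equiv 1$, computing $\E(N_t^2)$ via Itô's formula for jump processes, and using $r^{uv}(Z_s)\le \bar r N_s$ leads to a Gronwall inequality of the form $\E(N_t^2)\le C_T(1+\E(N_0^2))$ for $t\le T$. Since $f$ is bounded, this bound controls $\E\langle M^f\rangle_t<+\infty$ uniformly on $[0,T]$, so a standard localization argument (stopping at the times when $N_t$ exits $[0,K]$ and letting $K\to\infty$) promotes $M^f$ to an $L^2$-martingale with the stated bracket.
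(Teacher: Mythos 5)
Your plan is correct and takes essentially the same route as the paper, whose proof simply observes that $r^{uv}(Z)$ is bounded by a linear function of $N$ (so the moment hypotheses rule out explosion) and then refers to \citet[Th. 3.1 and Prop. 3.4]{fourniermeleard}. Your write-up is just a fleshed-out version of that cited argument: pathwise construction from the atoms of $Q$, the compensator/compensated-measure decomposition giving the drift and the bracket, and the Gronwall moment bounds used to promote $M^f$ from a local to a genuine square-integrable martingale.
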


\begin{proof}The rate $r^{uv}$ defined in (\ref{tauxreprod(uv)}) is bounded by a linear function in $N_t$. The moment condition in (i) allows us to prove that there is no explosion. The proofs then follows the ones developed in \citet[Th. 3.1 and Prop. 3.4]{fourniermeleard} for a model of plant with asexual reproduction.
\end{proof}

\subsection{Couplings in Wright's model for the proofs of Section \ref{section:smallpop}}\label{section:SDEdetaillees}

In this section, we give the expression of the processes $(N^{11}_t,N^{12}_t)_{t\in \R_+}$, $(\widetilde{N}^{11}_t,\widetilde{N}^{12}_t)_{t\in \R_+}$ and $(\widehat{N}^{11}_t,\widehat{N}^{12}_t)_{t\in \R_+}$ that appear in Section \ref{section:smallpop}.\\

Let us rewrite Equations \eqref{equationsdistyles} thanks to \eqref{sde}:
\begin{align*}
 N^{11}_t=N^{11}_0 +\int_0^t \int_{E\times \R_+} & \ind_{g'=\{1,1\}}\Big(\ind_{N^{11}_{s_-}>0 ; N^{12}_{s_-}>0}\ind_{\theta\leq \frac{\bar{r}(N^{11}_{s_-}+N^{12}_{s_-})}{2}}\\
  & -\ind_{\frac{\bar{r}(N^{11}_{s_-}+N^{12}_{s_-})}{2}<\theta\leq \frac{\bar{r}(N^{11}_{s_-}+N^{12}_{s_-})}{2}+d\times N^{11}_{s_-}}\Big) Q(ds,dg',d\theta)
\end{align*}and similarly for $(N^{12}_t)_{t\in \R_+}$.\\

For the stochastic domination, the process $(\widetilde{N}^{11}_t,\widetilde{N}^{12}_t)_{t\in \R_+}$ introduced in \eqref{tauxdominants} can be rewritten as:
\begin{align*}
 \widetilde{N}^{11}_t=N^{11}_0 +\int_0^t \int_{E\times \R_+} & \ind_{g'=\{1,1\}}\Big(\ind_{\theta\leq \frac{\bar{r}(\widetilde{N}^{11}_{s_-}+\widetilde{N}^{12}_{s_-})}{2}}\\
 & -\ind_{\frac{\bar{r}(\widetilde{N}^{11}_{s_-}+\widetilde{N}^{12}_{s_-})}{2}<\theta\leq \frac{\bar{r}(\widetilde{N}^{11}_{s_-}+\widetilde{N}^{12}_{s_-})}{2}+d\times \widetilde{N}^{11}_{s_-}}\Big) Q(ds,dg',d\theta)
\end{align*}and similarly for $(\widetilde{N}^{12}_t)_{t\in \R_+}$. If we start at a point where $(N^{11}_t,N^{12}_t)=(\widetilde{N}^{11}_t,\widetilde{N}^{12}_t)$ then it is clear that these processes have the same births and deaths. If one of the components is null, then there is no birth in $(N^{11}_t,N^{12}_t)_{t\in \R_+}$ while there is still births in $(\widetilde{N}^{11}_t,\widetilde{N}^{12}_t)_{t\in \R_+}$.\\

For the stochastic minoration, we have used in Section
\ref{section:stochasticminoration} the processes $(\widehat{N}^{11,+}_t,\widehat{N}^{12,+}_t)_{t\in \R_+}$ and $(\widehat{N}^{11,-}_t,\widehat{N}^{12,-}_t)_{t\in \R_+}$. Let us give an SDE to describe their dynamics.
\begin{align}
\widehat{N}^{11,+}_t= & \min(N^{11}_0,N^{12}_0) +\int_0^t \int_{E\times \R_+} \Big[\ind_{(N^{11}_{s_-}-N^{12}_{s_-})(\widehat{N}^{11,+}_{s_-}-\widehat{N}^{12,+}_{s_-})>0}\times \nonumber\\ 
 \Big\{ & \ind_{N^{11}_{s_-}<N^{12}_{s_-}}\Big(\ind_{N^{11}_{s}-N^{11}_{s_-}>0}-\ind_{N^{11}_{s}-N^{11}_{s_-}<0}\nonumber\\
  & \hspace{2cm}-\ind_{g'=\{1,2\}}
 \ind_{\frac{(\bar{r}+d)(N^{11}_{s_-}+N^{12}_{s_-})}{2}<\theta\leq\frac{\bar{r}(N^{11}_{s_-}+N^{12}_{s_-})}{2}+d N^{12}_{s_-}}\Big)\nonumber\\ 
& + \ind_{N^{11}_{s_-}>N^{12}_{s_-}} \Big(\ind_{N^{11}_{s}-N^{11}_{s_-}>0}
-\ind_{g'=\{1,1\}}\ind_{\frac{\bar{r}(N^{11}_{s_-}+N^{12}_{s_-})}{2}<\theta\leq\frac{(\bar{r}+d)(N^{11}_{s_-}+N^{12}_{s_-})}{2}}\Big)\Big\}\nonumber\\ 
 + & \ind_{N^{11}_{s_-}=N^{12}_{s_-}} \Big(\ind_{N^{11}_{s}-N^{11}_{s_-}>0}-\ind_{N^{11}_{s}-N^{11}_{s_-}<0}\Big)\nonumber \\ 
+ &\ind_{(N^{11}_{s_-}-N^{12}_{s_-})(\widehat{N}^{11,+}_{s_-}-\widehat{N}^{12,+}_{s_-})<0}\times \nonumber\\ 
& \Big\{ \ind_{N^{11}_{s_-}<N^{12}_{s_-}}\Big(\ind_{N^{12}_{s}-N^{12}_{s_-}>0}-\ind_{g'=\{1,2\}}
 \ind_{\frac{\bar{r}(N^{11}_{s_-}+N^{12}_{s_-})}{2}<\theta\leq\frac{(\bar{r}+d)(N^{11}_{s_-}+N^{12}_{s_-})}{2}}\Big)\nonumber\\ 
& + \ind_{N^{11}_{s_-}>N^{12}_{s_-}} \Big(\ind_{N^{12}_{s}-N^{12}_{s_-}>0}
-\ind_{N^{12}_{s}-N^{12}_{s_-}<0}\nonumber\\
 & \hspace{2cm}-\ind_{g'=\{1,1\}}\ind_{\frac{(\bar{r}+d)(N^{11}_{s_-}+N^{12}_{s_-})}{2}<\theta\leq\frac{\bar{r}(N^{11}_{s_-}+N^{12}_{s_-})}{2}+d N^{11}_{s_-}}\Big)\Big\}\Big]\nonumber\\
& \hspace{7.5cm} Q(ds,dg',d\theta)\label{SDEN+}
\end{align}
Similar SDEs can be written for $(\widehat{N}^{12,+}_t)_{t\in \R_+}$, $(\widehat{N}^{11,-}_t)_{t\in \R_+}$ and $(\widehat{N}^{12,-}_t)_{t\in \R_+}$.
Notice that the process $(N^{11}_t,N^{12}_t)_{t\in \R_+}$ appears in the definition of the auxiliary processes because of the coupling.\\
The behavior of $(\widehat{N}^{11,+}_t)_{t\in \R_+}$ depend on whether $(N^{11}_t,N^{12}_t)_{t\in \R_+}$ and $(\widehat{N}^{11,+}_t,\widehat{N}^{12,+}_t)_{t\in \R_+}$ belong to the same octant and second on which octant it is. These processes belong to the same octant at time $t$ if $(N^{11}_t-N^{12}_t)(\widehat{N}^{11,+}_t-\widehat{N}^{12,+}_t)>0$. In this case (3 first lines of \eqref{SDEN+}):
\begin{itemize}
\item Births of individuals $\{1,1\}$ in the auxiliary process occur as soon as an individual $\{1,1\}$ is born in the original process $(N^{11}_t,N^{12}_t)_{t\in \R_+}$,
\item If we are in the octant $\{i<j\}$, then every death of individuals $\{1,1\}$ for the original process entails a death of an individual $\{1,1\}$ for $(\widehat{N}^{11,+}_t,\widehat{N}^{12,+}_t)_{t\in \R_+}$. Additionally, some deaths of individuals $\{1,2\}$ for the original process are changed into deaths of individuals $\{1,1\}$. This happens (second line of \eqref{SDEN+}) when:
$$\frac{(\bar{r}+d)(N^{11}_{s_-}+N^{12}_{s_-})}{2}<\theta\leq \frac{\bar{r}(N^{11}_{s_-}+N^{12}_{s_-})}{2}+dN^{12}_{s_-}$$which corresponds to a rate $$\frac{\bar{r}(N^{11}_{s_-}+N^{12}_{s_-})}{2}+dN^{12}_{s_-}-\frac{(\bar{r}+d)(N^{11}_{s_-}+N^{12}_{s_-})}{2}=\frac{d(N^{12}_{s_-}-N^{12}_{s_-})}{2}.$$Hence individuals $\{1,1\}$ in $\widehat{N}^{11,+}$ die with rate:
$$\frac{d(N^{12}_{s_-}-N^{11}_{s_-})}{2}+d \ N^{11}_{s_-}=\frac{d(N^{11}_{s_-}+N^{12}_{s_-})}{2}=\frac{d(\widehat{N}^{11,+}_{s_-}+\widehat{N}^{12,+}_{s_-})}{2}.$$
In this case, the process $(\widehat{N}^{11,+}_t,\widehat{N}^{12,+}_t)_{t\in \R_+}$ has the rates announced in Section \ref{section:stochasticminoration}.
\item If we are in the octant $\{i>j\}$, then deaths of individuals $\{1,1\}$ for $(\widehat{N}^{11,+}_t,\widehat{N}^{12,+}_t)_{t\in \R_+}$ originates from the deaths of the original process such that:
    $$\frac{\bar{r}(N^{11}_{s_-}+N^{12}_{s_-})}{2}<\theta\leq\frac{(\bar{r}+d)(N^{11}_{s_-}+N^{12}_{s_-})}{2},$$
    The death rate for the population $\{1,1\}$ is thus $d(N^{11}_{s_-}+N^{12}_{s_-})/2$.
\end{itemize}
When the original process $(N^{11}_t,N^{12}_t)_{t\in \R_+}$ is on the line $\{i=j\}$, the death rates are $d\ N^{11}_t=d\ N^{12}_t=d(N^{11}_t+N^{12}_t)/2$ and no correction is needed for $(\widehat{N}^{11,+}_t,\widehat{N}^{12,+}_t)_{t\in \R_+}$.\\
Finally, when the original process $(N^{11}_t,N^{12}_t)_{t\in \R_+}$ and the auxiliary process $(\widehat{N}^{11,+}_t,\widehat{N}^{12,+}_t)_{t\in \R_+}$ do not belong to the same octant (lines 7-9 of \eqref{SDEN+}), the auxiliary process $(\widehat{N}^{11,-}_t,\widehat{N}^{12,-}_t)_{t\in \R_+}$ is constructed as above and the process $(\widehat{N}^{11,+}_t,\widehat{N}^{12,+}_t)_{t\in \R_+}$ is deduced by symmetry.

\subsection{Large population limits and proof of Proposition \ref{proprenormgdepop2}}

\begin{proof}[Sketch of Proof of Prop. \ref{proprenormgdepop2}]
\par We begin with (ii) by assuming existence of a solution to:
\begin{equation}
\langle \xi_t,f\rangle=  \langle \xi_0,f\rangle +\int_0^t  \left(\sum_{\{u,v\}\in E}r^{uv}(\xi_s) f(\{u,v\})-d\, \langle \xi_s,f\rangle\right) ds,\label{limitedeterministe}
\end{equation}where $f$ is a bounded test function on $E$. Since all finite measures of $\mathcal{M}_F(E)$ have the form (\ref{defxi}), it remains to prove that the functions $n^{u'v'}$ satisfy (\ref{ODExi}). Let us choose $f(\{u,v\})=\ind_{\{u',v'\}}(\{u,v\})$ and integrate $f$ with respect to $\xi_t$. Since $\langle \xi_t,\ind_{\{u',v'\}}\rangle=n^{u'v'}_t$, (\ref{limitedeterministe}) gives:
\begin{equation}
n^{u'v'}_t=n^{u'v'}_0+\int_0^t\big( r^{u'v'}(\xi_s) -d \, n^{u'v'}_s\big) ds.\label{etape6}
\end{equation}
Notice that for any finite measure $\xi$ on $E$ of the form (\ref{defxi}), $r^{uv}(\xi)$ is the sum of terms of the form $\bar{r}n^{uu'}p^{uu'}(v,v')$ for $u,u',v,v'$ in $\lbrac 1, n\rbrac$. In Wright's model, there may be discontinuities when several of the possible genotypes reach a size 0. Else, the function $r^{uv}(.)$ that we consider is locally Lipschitz continuous as products of locally Lipschitz continuous functions. Since $\xi$ is continuous and since $r^{u'v'}(.)$ is continuous, the integrand in (\ref{etape6}) is continuous, which entails that $n^{u'v'}$ is of class $\Co^1$ and hence of class $\Co^\infty$ by direct recursion. Taking the derivative with respect to time gives (\ref{ODExi}) and achieves the proof. Moreover, the conditions on $r^{uv}$ imply by the Cauchy-Lipschitz theorem that there exists a unique solution to (\ref{ODExi}) and hence to (\ref{limitedeterministe}). Let us now prove existence, which is a consequence of (i).\\

First, let us notice that for given $K\in \N^*$ and real test function $f$ on $E$, the process $(Z^{(K)}_t)_{t\in \R_+}$ satisfies the following evolution equation:
\begin{align}
\langle Z^{(K)},f\rangle= & \langle Z^{(K)}_0,f\rangle +\int_0^t  \left(\sum_{\{u,v\}\in E}r^{uv}(Z^{(K)}_s)f(\{u,v\}) -d\,\langle Z^{(K)}_s,f\rangle\right) ds+M^{(K),f}_t\label{pbmK}
\end{align}where $(M^{(K),f}_t)_{t\in \R}$ is a square integrable martingale starting from 0 with quadratic variation:
\begin{align}
\langle M^{(K),f}\rangle_t= & \frac{1}{K}\int_0^t  \left(\sum_{\{u,v\}\in E}  r^{uv}(Z^{(K)}_s) f^2(\{u,v\})+ d \, \langle Z^{(K)}_s,f^2\rangle\right) ds.
\end{align}
Heuristically, as the quadratic variation of the martingale is of order $1/K$, the stochastic part of the process will disappear in the limit. Since the jumps of $Z^{(K)}$ are of order $1/K$, the limiting values of $(Z^{(K)})_{K\in \N^*}$ are necessarily continuous. Moreover, (\ref{momentordre2}) implies that every one-dimensional marginal has a finite mass. The limiting values hence belong to $\Co(\R_+,\mathcal{M}_F(E))$.
\par The proof of (i) separates classically in two steps. We establish tightness of the laws of $(Z^{(K)})_{K\in \N^*}$, which implies that this family of probability measures is relatively compact \citep[\eg][p.104]{ethierkurtz}. Then, we establish that every limiting value solves (\ref{limitedeterministe}) which has a unique solution.
\par For any $T>0$, the tightness on $\mathbb{D}([0,T],\mathcal{M}_F(E))$ is obtained by using a criterion due to \citet{roelly}, and since $E$ is finite, the problem amounts to prove the tightness on $\mathbb{D}([0,T],\R)$ of the sequence $(\langle Z^{(K)},f\rangle)_{K\in \N^*}$ for bounded test functions $f$. Given the local Lipschitz continuity of $r^{uv}(.)$, given that $|r^{uv}(Z)|\leq C\bar{r} N_t$ and given the moment estimates (\ref{momentordre2}) this is a classical computation which uses Aldous and Rebolledo criteria \citep[\eg][]{joffemetivier}. See \eg \citet[Proof of Th. 5.3]{fourniermeleard}.
\par Taking the limit in (\ref{pbmK}) thanks to (\ref{momentordre2}) again allows us to identify the adherence values of $(Z^{(K)})_{K\in \N^*}$ as the solution of (\ref{limitedeterministe}). This provides existence of a solution to (\ref{limitedeterministe}) and since we have established uniqueness, there is a unique limiting value to which the sequence converges.
\end{proof}

\noindent \textbf{Acknowledgements} This work received financial supports from the \textit{Chaire Mod\'{e}lisation Math\'{e}matique et Biodiversit\'{e}} of Ecole Polytechnique-Museum d'Histoire Naturelle and ANR MANEGE. The authors thank Kilian Raschel for suggesting the process $(\widehat{N}^{11}, \widehat{N}^{12})$.

{\footnotesize
\providecommand{\noopsort}[1]{}\providecommand{\noopsort}[1]{}\providecommand{\noopsort}[1]{}\providecommand{\noopsort}[1]{}

}
\end{document}